\documentclass[11pt]{article}
\usepackage[utf8]{inputenc}
\usepackage{standalone}
\usepackage{float}
\pdfoutput=1
\usepackage[margin=1.1in,letterpaper]{geometry}
\usepackage[numbers]{natbib}
\usepackage{physics}
\usepackage{amsmath,amssymb,amsthm}
\usepackage{mathtools}
\usepackage{bm}
\usepackage{enumitem}
\usepackage{graphicx}
\usepackage{multirow}
\usepackage{subcaption}
\usepackage[colorlinks=true, urlcolor=blue, citecolor=black]{hyperref}
\usepackage{authblk}
\usepackage[dvipsnames]{xcolor}
\usepackage{algorithm}
\usepackage{algpseudocode}

\newtheorem{theorem}{Theorem}[section]
\newtheorem{lemma}[theorem]{Lemma}
\newtheorem{corollary}[theorem]{Corollary}
\newtheorem{proposition}[theorem]{Proposition}

\theoremstyle{definition}
\newtheorem{definition}[theorem]{Definition}
\newtheorem{remark}[theorem]{Remark}

\numberwithin{equation}{section}
\newcommand{\iu}{\mathrm{i}}
\newcommand{\hl}[1]{\textcolor{magenta}{#1}}

\title{Ollivier's Ricci Curvature on Complex-weighted Graphs}

\author[1,2,3,4]{Yu Tian}
\author[5]{Eleanor Wiesler} 
\author[6]{Melanie Weber} 
\affil[1]{\textit{Center for Systems Biology Dresden, 01307 Dresden, Germany}}
\affil[2]{\textit{Max Planck Institute for the Physics of Complex Systems, 01187 Dresden, Germany}}
\affil[3]{\textit{Max-Planck Institute of Molecular Cell Biology and Genetics,
01307 Dresden, Germany}}
\affil[4]{\textit{Cluster of Excellence, Physics of Life, TU Dresden, 01307 Dresden, Germany}}
\affil[5]{\textit{Harvard College, Cambridge, MA, US}}
\affil[6]{\textit{Harvard University, School of Engineering and Applied Sciences, Cambridge, MA, US}}

\date{}

\begin{document}

\maketitle
\begin{abstract}
  Understanding the geometry of complex networks is critical for effective modeling and analysis across domains. While discrete notions of Ricci curvature have emerged as powerful tools for characterizing both local and global network structure, existing formulations are largely confined to undirected networks with real-valued weights. This limits the use of curvature-based analysis of directional and complex-weighted relations that arise naturally in many applications, from social and biological systems to quantum and signal-processing networks. In this work, we introduce a principled extension of Ollivier’s Ricci curvature to complex-weighted graphs, which encompasses directed graphs as a special case. We establish fundamental theoretical properties of this new notion, including relations to the magnetic Laplacian and combinatorial upper and lower bounds that relate curvature to cycle structure in local neighborhoods. We further develop computational methods for curvature estimation and demonstrate their utility in community detection on directed networks.
\end{abstract}



\section{Introduction}
Understanding the geometric structure of graphs is central to effective learning and inference on relational data, both directed and undirected. \emph{Directed} relational data arises naturally in a broad range of domains, including following–follower relationships in social networks, regulatory relations in genetic networks, and synaptic connections in the brain. Characterizing the geometric structure of such data can provide insight into both local and global properties of the underlying complex systems, with direct implications for modeling, analysis, and prediction tasks.

Among the tools developed for analyzing relational data, discrete notions of Ricci curvature have proven particularly valuable. Drawing on analogies to its continuous counterpart, discrete Ricci curvature, when computed at the edge level, encodes both local structural properties, such as density and clustering coefficients, as well as characterizations of global structure, such as information routing efficiency and the presence of communities. These dual capabilities have motivated a wide range of curvature-based approaches in both Network Science~\citep{weber2017characterizing,ni2019community,sia2019ollivier,Znaidi2023,tian2025curvature,fesser2023augmentations} and Graph Machine Learning~\citep{topping2022understanding,fesser2023mitigating,farzam2024from,feng2024graph,fesser2024effective}.

Despite these advances, existing notions of discrete Ricci curvature are predominantly limited to undirected graphs and graphs with real-valued weights. This restriction is significant, as directionality frequently arises in practical applications, such as the aforementioned complex systems. Moreover, graphs with complex-valued weights, a generalization of directed graphs, arise commonly in domains such as quantum physics, signal processing, and systems biology~\citep{bottcher2024complex,tian2026generalizing}. Recent literature has begun to extend classical graph characteristics and algorithmic methods to settings that involve directed or complex-weighted structures~\citep{bottcher2024complex,tian2024structural,fanuel2017magnetic,zhang2021magnet}. However, little is known about discrete Ricci curvature in these settings. 

Naïve strategies for extending curvature notions, such as discarding edge directionality or reducing complex weights to their absolute values, result in the loss of crucial structural information. Such simplifications can dramatically distort both local and global geometric properties. While recent efforts have attempted to generalize Forman’s~\citep{forman} and Ollivier’s~\citep{Ol} curvature to directed graphs~\citep{sreejith2016forman,saucan2019discrete}, these approaches remain limited in scope. To the best of our knowledge, no extensions to complex-weighted networks have been proposed.

In this paper, we begin to address this gap by introducing a principled notion of Ollivier’s Ricci curvature for complex-weighted networks. Our formulation naturally subsumes directed networks as a special instance. We investigate both theoretical and computational aspects of this new notion. This includes combinatorial upper and lower bounds that link curvature to the effective length of cycles in an edge’s two-hop neighborhood, as well as connections to the magnetic Laplacian.  Finally, we explore the utility of these curvature notions in applications, with a focus on community detection in directed graphs.\footnote{The code is publicly available at:~\url{https://github.com/Weber-GeoML/ComplexGraphCurvature}.}

\subsection{Related Work}
Complex-weighted networks have recently received interest as a natural generalization of directed graphs, with applications in quantum physics, signal processing, systems biology, and control theory~\cite{bottcher2024complex}. Existing network analyses in this setting primarily rely on spectral methods that leverage generalizations of the Graph Laplacian~\citep{berkolaiko2013nodal,colin2013magnetic} and stochastic tools such as random walks~\cite{tian2024structural}. The special case of directed graphs has been studied extensively, with several classical graph characteristics~\citep{babul2025strong,gallo2025patterns} and algorithms~\citep{fanuel2017magnetic,he2022msgnn,zhang2021magnet} extended to this setting. However, while several curvature-based methods have been proposed in network science~\citep{weber2017characterizing,ni2019community,sia2019ollivier,Znaidi2023,tian2025curvature} and graph machine learning~\citep{topping2022understanding,fesser2023mitigating,farzam2024from,feng2024graph,fesser2024effective}, they leverage undirected and real-weighted notions. 

Prior literature has begun to investigate discrete notions of curvature on directed graphs, albeit with limitations.
\cite{sreejith2016forman} introduce a notion of Forman's Ricci curvature on the vertices of directed graphs that captures flow through the respective directed edges. An extension of this notion that incorporates curvature contributions of higher-order structures was proposed in~\citep{saucan2019discrete}. The authors further introduce an analogous flow-based notion of Ollivier's Ricci curvature on vertices with measures defined on the in- and out-neighbors respectively, albeit without accounting for degeneracy. 

To the best of our knowledge, this paper introduces the first notion of discrete Ricci curvature on complex-weighted graphs and the first curvature-based algorithm for community detection on directed networks. 

\subsection{Overview of the paper}
The paper is structured as follows: In section~\ref{sec:background} we introduce background on complex-weighted networks and Ollivier's notion of discrete Ricci curvature, which forms the basis for our curvature notion.
Section~\ref{sec:curvature} defines extensions of Ollivier's Ricci curvature on complex-weighted graphs. We discuss different parametrizations and simplifications in special cases.
Section~\ref{sec:theory} investigates the theoretical properties of the new notion, including combinatorial upper and lower bounds and relations to the magnetic Laplacian and structural balancedness. Section~\ref{sec:applications} explores applications of the new notion for community detection in directed networks. Specifically, we introduce a community detection algorithm and investigate its performance in numerical experiments. Conclusions and directions for future work are given in Section~\ref{sec:conclusions}.

\section{Background and Notation}\label{sec:background}

\subsection{Directed Graphs}
We define a \textit{directed graph }or \textit{digraph} $D = (V, E)$ as a non-empty finite set $V$ of \textit{vertices} and a finite set $E$ of ordered pairs of distinct vertices called \textit{edges}. An edge $e \in E$ connecting vertices $u$ and $v$ will be denoted $\overrightarrow{uv}$.  It is permissible for a digraph to have an edge $\overrightarrow{uv}$ and an arc $\overrightarrow{vu}$ in which two vertices are connected by two edges of opposite direction. Just as with undirected graphs, it is crucial to specify neighborhoods of vertices when working with digraphs. However, if a vertex $v$ is an \textit{end-vertex} of two edges $\overrightarrow{uv}$ and $\overrightarrow{vw}$, then the relationships between its two neighbors $u$ and $w$ are fundamentally different. Here, $u$ is an \textit{in-neighbor} of $v$ because the edge is pointing from the \emph{start node} $u$ to the \emph{end node} $v$. We call $w$ an \textit{out-neighbor} of $v$ because the edge is pointing from $v$ to $w$. For a digraph $D$, the \textit{in-neighborhood} of a vertex $v$ is the set of all \textit{in-neighbors} of $v$ and is denoted as $\mathcal{N}^{in}(v)$. Similarly, the \textit{out-neighborhood} of $v$ is the set of all \textit{out-neighbors} of $v$ and is denoted as $\mathcal{N}^{out}(v)$.
We further specify the degree of a vertex based on in-going edges and out-going edges. We define the \textit{in-degree} $d^{in}(v)$ of a vertex $v$ as the number of its in-neighbors, or in other words, the number of edges for which $v$ is the end node. We define the \textit{out-degree} $d^{out}(v)$ of a vertex $v$ analogously as the number of its out-neighbors or the number of edges for which $v$ is the start node. 

\subsection{Complex-Weighted Graphs}
We consider the more general case of graphs with \textit{complex edge weights} where each edge $\overrightarrow{uv}\in E$ is characterized by a complex value $W_{uv}\in\mathbb{C}$. We decompose the complex weight as $W_{uv} = r_{uv}e^{\iu \varphi_{uv}}$, where $r_{uv} \ge 0$ indicates the \emph{magnitude} of the value while $\varphi_{uv}\in [0, 2\pi)$ is the \emph{phase} (or the \emph{argument}). For example, if each node represents an image and the edge weights characterize the similarity between them, $r_{uv}$ can store the maximum similarity value between the two images subject to rotations, while the extra freedom introduced by $\varphi_{uv}$ can be used to record the rotation value associated with the maximum. The in-degree of a vertex $v$ sums over the magnitudes of incoming edges to $v$, $d^{in}(v) = \sum_u r_{uv}$, and the out-degree of a vertex $v$ sums over the magnitudes of outgoing edges from $v$, $d^{out}(v) = \sum_u r_{vu}$. Clearly, classic graphs with only positive edge weights can be recovered if all phases are $0$, while signed graphs, which allow negative edge weights, can be obtained if all phases can only be $0$ or $\pi$. 
Furthermore, we define the \textit{phase of a path (cycle)} as the sum of phases of composing edges, and we still restrict the phase to lie in $[0,2\pi)$. In the special case when the complex edge weights satisfy that $W_{vu} = \bar{W}_{uv} = r_{uv}e^{-i\varphi_{uv}}$, the important notion of \textit{structural balance} has been established, where all cycles have phase $0$. 

\subsection{Ollivier's Ricci curvature on unweighted graphs}
Ollivier's notion of Ricci curvature~\citep{Ol} (short: ORC) is one of the most widely considered discrete notions of curvature.  
To compute ORC on undirected graphs, one endows the 1-hop neighborhoods of neighboring vertices $u,v$ with uniform measures $m_{i}(z) := \frac{1}{{\rm deg}(i)}$, where $z$ is a neighbor of $i$ and   $i \in \{u,v\}$.  Such measures are induced by the transition probabilities of uniform random walks starting at $u,v$. 
The transportation cost between the two measures is characterized by the Wasserstein-1 distance
\begin{equation}
    W_1(m_{u}, m_{v}) = \inf_{m \in \Gamma(m_{u},m_{v})} \int_{(z,z') \in V \times V} d(z,z') m(z,z') \; dz \; dz' \; .
    \label{equ:W1}
\end{equation}
Here, $\Gamma(m_{u},m_{v})$ denotes the set of all measures over $V \times V$ with marginals $m_{u},m_{v}$.  Ollivier proposes a curvature of the form~\citep{Ol}
\begin{equation}\label{eq:orc}
	\kappa (u,v) := 1 - \frac{W_1 (m_{u}, m_{v})}{d_G(u,v)} \; ;
\end{equation}
here, $d_G(u, v)$ denotes the shortest-path distance between $u$ and $v$ in the graph $G$. 
This notion is motivated by a fundamental connection between Ricci curvature and the behavior of random walks on smooth manifolds: Two random walks initiated at nearby points will stay near each other with high probability, if the manifold has locally positive Ricci curvature, and they will diverge with high probability, if it has locally negative Ricci curvature. Eq.~\ref{eq:orc} captures this intuition with respect to measures $m_u,m_v$ induced by uniform random walks initiated at neighboring nodes $u,v$.

\section{Ollivier's Ricci curvature on complex-weighted graphs}\label{sec:curvature}
Our definition of a principled Ricci curvature on complex-weighted graphs will follow Ollivier's notions (Eq.~\ref{eq:orc}). In order to define this extension, we need to define (1) measures $m_u, m_v$, and (2) a shortest-path distance $d_G(\cdot,\cdot)$ that are consistent with the complex weighting scheme of the graph. In the following we assume for simplicity magnitude $r_{uv} = 1$ for all $\overrightarrow{uv} \in E$; all definitions and results can be naturally extended to edge weights of arbitrary magnitude.

\subsection{Warm-up: Directed graphs}
We begin with the special case of directed graphs as a warm-up. For any given directed edge we define measures on the neighborhoods of its \textit{start} and \textit{end} nodes induced by uniform random walks. Let $u \in V$ be the \textit{start} node of a directed edge $\overrightarrow{uv} \in E$. Then we define a uniform measure on the in-neighbors of $u$: 
\[
m_u(x) =
\begin{cases}
\frac{1}{d_{\text{in}}(u)} & \text{if } \overrightarrow{xu} \in E, d_{in}(u) > 0\\
1 & \text{if } d_{\text{in}}(u) = 0, x = u  \\
\end{cases}
\]
Observe that if the node $u$ has no incoming edges, then with probability 1 the random walk will stay at node $u$, i.e., $d_{\text{in}}(u) = 0$ implies that $x = u$. 
If we instead consider the \textit{end} node $v$, then we can define a uniform measure on its out-neighbors:
\[
m_v(x) =
\begin{cases}
\frac{1}{d_{\text{out}}(v)} & \text{if } \overrightarrow{vx} \in E, d_{out}(v) > 0  \\
1 & \text{if } d_{\text{out}}(v) = 0, x = v \\
\end{cases}
\]
Similarly to the above, observe that if a node $v$ has no outgoing edges, then with probability 1 the random walk will stay at node $u$, i.e., if $d_{\text{out}}(v) = 0$, then $x=v$.

Note that the definition of the measures ensures that if a node has no incoming or outgoing edges, the measure places all mass on itself, resulting in a degenerate walk. This construction reflects directionality, allowing curvature to capture asymmetric structural features.

\begin{remark}
    A similar construction allows for extending the related Lin-Lu-Yau to directed graphs~\cite{lin-lu-yau}. Here, the measures on node neighborhoods are induced by $\alpha-$random walks: For \( \alpha \in [0, 1] \) and any vertex \( x \), they define the probability measure  $m_x^\alpha(v) = \frac{(1 - \alpha)}{d_x}$ if $v$ shares an undirected edge with $x$, $m_x^\alpha(v) = \alpha$ if $v=x$ and $m_x^\alpha(v) = 0$ otherwise. In analogy to our construction above, we can define in the directed case: 
\[
\begin{array}{c@{\quad\quad}c}
\displaystyle
m_u^\alpha(x) =
\begin{cases}
\frac{(1 - \alpha)}{d_{\text{in}}(u)} & \text{if } \overrightarrow{xu} \in E \\
\alpha & \text{if } d_{\text{in}}(u) > 0, x=u  \\
1 & \text{if }d_{\text{in}}(u) = 0, x=u
\end{cases}
&
\displaystyle
m_v^\alpha(x) =
\begin{cases}
\frac{(1 - \alpha)}{d_{\text{out}}(v)} & \text{if } \overrightarrow{vx} \in E  \\
\alpha & \text{if } d_{\text{out}}(v) > 0, x=v \\
1 & \text{if } d_{\text{out}}(v) = 0, x=v
\end{cases}
\end{array}
\]
\end{remark}

In the directed case, $d_G(\cdot,\cdot)$ is the usual (directed) shortest-path distance. For instance, if the graph is unweighted, then $d_G(u,v)=1$ for all $\overrightarrow{vx} \in E$.

\begin{figure}[htbp]
    \centering
    \hspace*{-1em}
    \includegraphics[width=.8\textwidth]{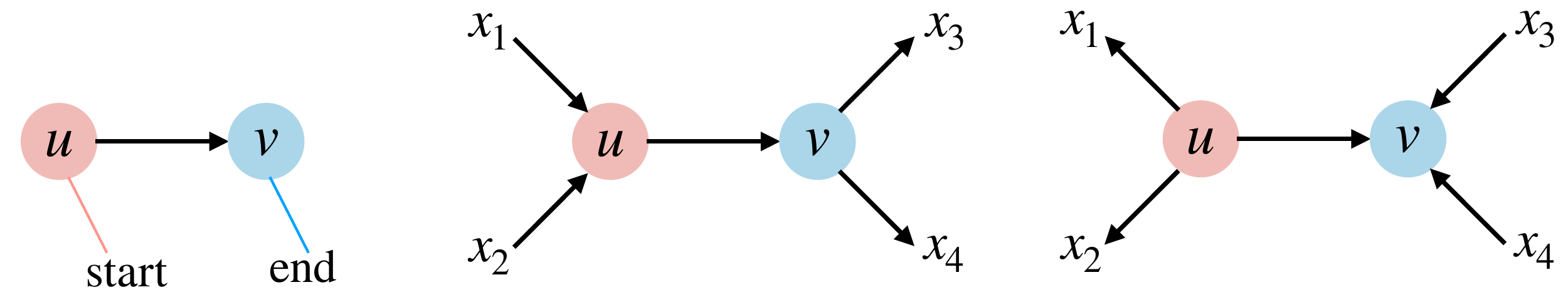}
    \caption{(Left) Example directed edge. (Middle) Standard example. (Right) Degenerate example.}
    \label{fig:digraph}

\end{figure}

\begin{figure}[htbp]
    \centering
    \begin{tabular}{cc}
        \includegraphics[width=.35\textwidth]{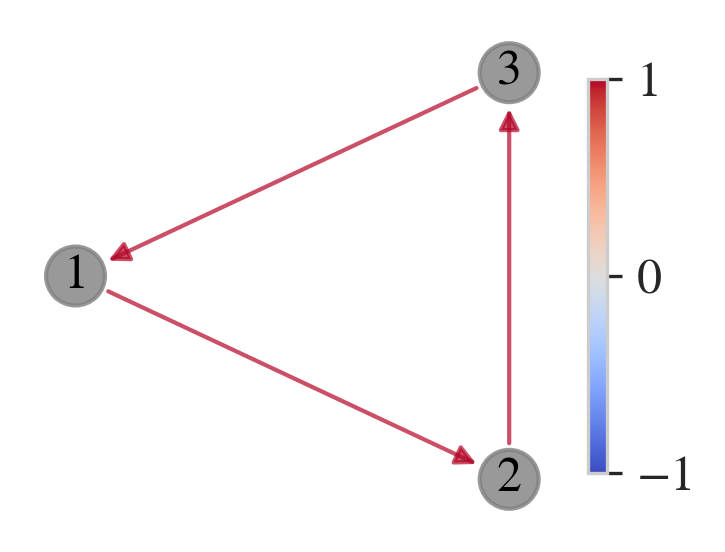} & \includegraphics[width=.35\textwidth]{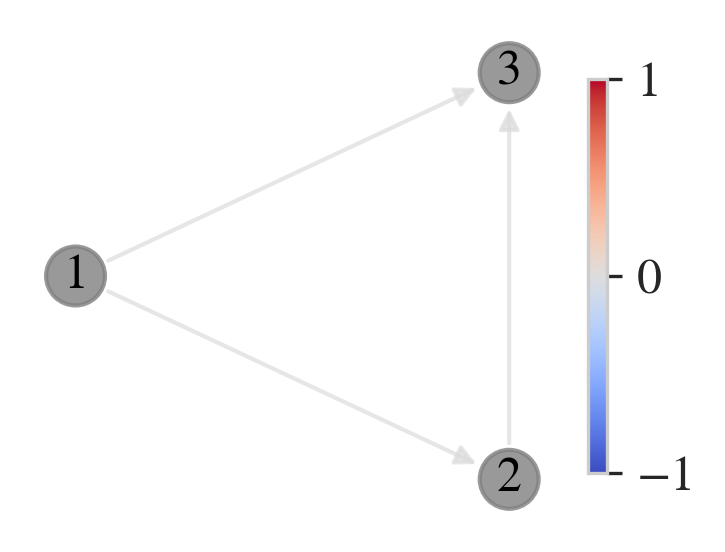}
    \end{tabular}
    \caption{ORC for the directed cycle graph (left) and the directed flow graph (right).}
    \label{fig:dir-G-eg-tri}
\end{figure}

\subsection{Complex-weighted graphs}
We now consider the general case, where graphs have complex-valued edge weights. The key question is how to incorporate the phase information appropriately into the formulation. In this section, we will extend both the probability measures and the distance measures for complex weights, thus defining a principled notion of ORC for complex-weighted graphs.

\paragraph{Construction of measure}
Recall that in undirected graphs, the measures on the neighborhoods are induced by the transition probabilities of random walks starting at $u,v$; for directed graphs, we defined separate measures on the start and end vertices over the in-neighborhood and out-neighborhood, respectively. In directed complex-weighted graphs, we consider a complex random walk with a natural relation to the connection Laplacian operator on the corresponding manifold \cite{singer2012vdm,tian2024structural}. Specifically, each random walker carries a phase; at each time step, it chooses one of its out-neighbors by probabilities proportional to the magnitude of the edge weights. As it traverses an edge, its phase becomes the sum of its original phase and the phase of the edge weight. Formally, this random walk induces the following neighborhood measures:
\begin{definition}[Measures induced by complex random walk]
    Let $u \in V$ be the start node of a directed edge $\overrightarrow{uv} \in E$. Then we define a uniform measure on the in-neighbors of $u$: 
    \[
    m_{\text{in},u}(x) =
    \begin{cases}
    \frac{r_{xu}}{d_{\text{in}}(u)} & \text{if } \overrightarrow{xu} \in E, d_{in}(u) > 0\\
    1 & \text{if } d_{\text{in}}(u) = 0, x = u  \\
    \end{cases}
    \]
    For the end node $v$, we define a uniform measure on its out-neighbors:
    \[
    m_{\text{out}, v}(x) =
    \begin{cases}
    \frac{r_{vx}}{d_{\text{out}}(v)} & \text{if } \overrightarrow{vx} \in E, d_{out}(v) > 0  \\
    1 & \text{if } d_{\text{out}}(v) = 0, x = v \\
    \end{cases}
    \]
\end{definition}
\begin{remark}
    Note that we assume $r_{uv} = 1$ for all $\overrightarrow{uv}\in E$, hence, we do not separate different meanings of edge weights, affinity versus distance, explicitly in our definitions. In the case of heterogeneous weight magnitudes, we can also consider measures via Gaussian kernels since edge weights are commonly assumed to represent affinity. Our results can also be generalized to the heterogeneous case.
\end{remark}

\paragraph{Shortest-walk distance}
Complex numbers are not fully ordered; therefore, the shortest-path distances are not directly applicable to complex-weighted graphs, and to develop appropriate distance measures on complex-weighted graphs is still an active area of research; we refer the reader to \cite{bottcher2024complex} for a recent review on complex-weighted graphs. Inspired by the complex random walk discussed above, instead of directly incorporating the phase information into the calculation, we lift it to constrain the valid set of directed walks connecting the two vertices. 
Specifically, we relax the requirement of ``paths'' to be general ``walks'' where repetition of vertices is allowed: for each pair of vertices $u,v\in V$, we define the \textit{shortest-walk distance} from $u$ to $v$ subject to the phase constraint with $\varphi\in [0, 2\pi)$ as 
\begin{align}
    d_C^{\varphi}(u,v) =& \min_{\mathcal{P}_{uv}: \theta(\mathcal{P}_{uv}) = \varphi}\sum_{\overrightarrow{ij}\in \mathcal{P}_{uv}} r_{ij},
\end{align}
where $\mathcal{P}_{uv}$ is a directed walk from $u$ to $v$, and the function $\theta(\cdot)$ returns the phase of a walk, i.e., the sum of the phases of its composing edges, modulo $2\pi$. {We use the convention $d_C^0(u,u)=0$; if $u=v$ and $\varphi\ne 0$, the minimum is taken over nontrivial closed walks from $u$ to itself with phase $\varphi$, and is $+\infty$ if no such walk exists.} 
Then, in particular, to define the ORC for a directed edge $\overrightarrow{uv}$ with complex weights, we only consider a walk as a shortcut from an in-neighbour of $u$, $u'$, to an out-neighbour of $v$, $v'$, if the random walker ends up with the same phase no matter following the walk or going through $u,v$ to $v'$. That is, the walk has the same phase as the directed path $u'uvv'$; see Definition \ref{def:orc-swd}. 
\begin{definition}[Shortest-walk distance for ORC]
    For each directed edge $\overrightarrow{uv}\in E$, we consider the shortest-walk distance between an in-neighbor of $u$, $u'$, and an out-neighbor of $v$, $v'$, with the phase constraint $\varphi = \theta(u'uvv')$: 
    \begin{align}
        d_C^{\theta(u'uvv')}(u',v') =& \min_{\mathcal{P}_{u'v'}: \theta(\mathcal{P}_{u'v'}) = \theta(u'uvv')}\sum_{\overrightarrow{ij}\in \mathcal{P}_{u'v'}} r_{ij},
    \end{align}
     where $u'uvv'$ is the specific directed paths from $u'$ going through $u,v$ to $v'$, and the function $\theta(\cdot)$ returns the phase of a walk. 
     \label{def:orc-swd}
\end{definition}
With this convention, the shortest-walk distance is an extended directed distance: $d_C^0(u,u)=0$, all other finite values are positive, and it satisfies the phase-compatible triangle inequality
\begin{align*}
    d_C^{\theta(u'uvv')}(u',v') + d_C^{\theta(v'xww')}(v',w') \ge d_C^{\theta(u'uvv'xww')}(u',w'),
\end{align*}
It is not symmetric in general.
We note that $u'uvv'$ is also a feasible walk for the minimization in the distance, thus $d_C^{\theta(u'uvv')}(u',v') \le r_{u'u} + r_{uv} + r_{vv'}$ (i.e., $3$, since we assume $r_{\cdot\cdot} = 1$).
Also, in the degenerate case when the phase of each edge is $0$, the shortest-walk distance returns the same value as the shortest-path distance defined on graphs with positive weights.  
\begin{figure}[htbp]
    \centering
    \includegraphics[width=.9\linewidth]{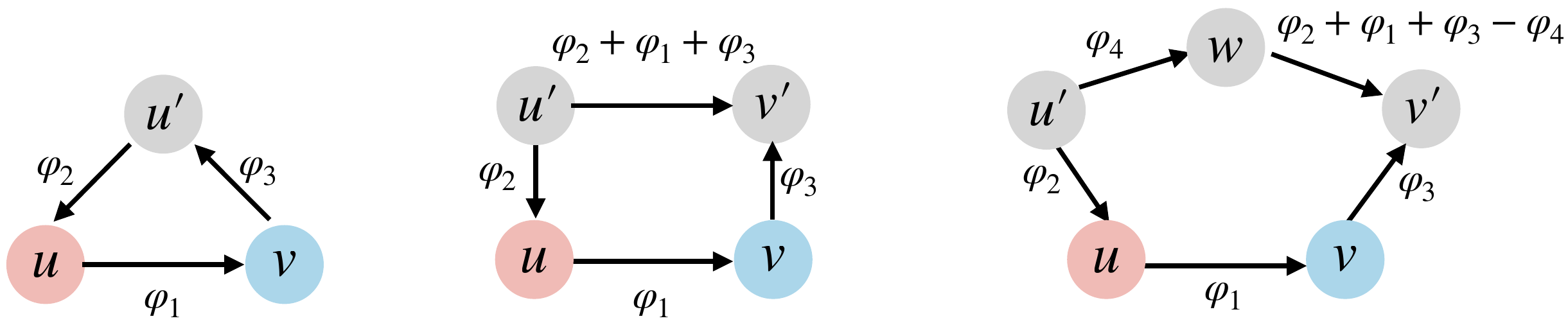}
    \caption{Three different cases to provide shortcuts between the in-neighbor of the start vertex $u$ and the out-neighbor of the end vertex $v$, where the phases of edge weights are shown next to the edges ($\varphi_i\in [0, 2\pi), \, i\in\{1,2,3,4\}$). }
    \label{fig:complexorc-eg}
\end{figure}

With the two key ingredients of the measures and the shortest-walk distance, we define the ORC in complex-weighted graphs as in Definition \ref{def:orc-complex}. 
\begin{definition}[ORC for complex weights]
    We define the ORC in complex-weighted graphs as 
    \begin{equation}\label{eq:orc-complex}
    	\kappa (u,v) := 1 - \frac{W_1 (m_{in, u}, m_{out, v})}{d_C^{\theta(uv)}(u,v)} \; ,
    \end{equation}
    where
    \begin{equation}
        W_1(m_{in, u}, m_{out, v}) = \inf_{m \in \Gamma(m_{in, u},m_{out, v})} \int_{(z,z') \in V \times V} {d_C^{\theta(zuvz')}(z,z')} m(z,z') \; dz \; dz' \; .
        \label{equ:W1-complex}
    \end{equation}
    {Here $z\in \operatorname{supp}(m_{in,u})$ and $z'\in \operatorname{supp}(m_{out,v})$, so $\theta(zuvz')$ denotes the phase accumulated along the directed walk from $z$ through $u,v$ to $z'$.}
    {Furthermore, the direct edge is feasible, so $d_C^{\theta(uv)}(u,v)\le r_{uv}$, with equality when $\overrightarrow{uv}$ is phase-geodesic.}
    \label{def:orc-complex}
\end{definition}

\begin{figure}[htbp]
    \centering
    \begin{tabular}{cc}
        \includegraphics[width=.35\textwidth]{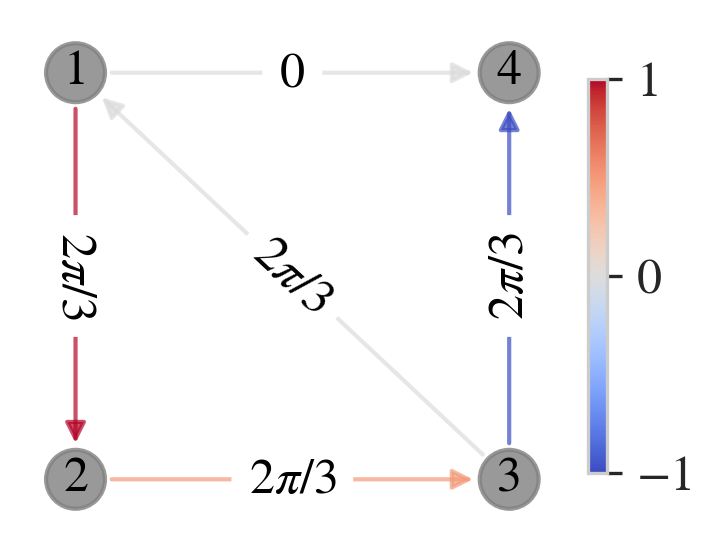} & \includegraphics[width=.35\textwidth]{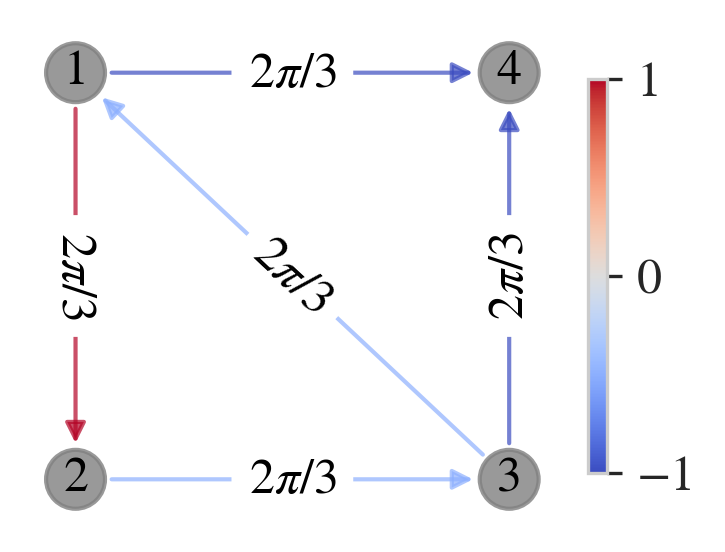}
    \end{tabular}
    \caption{ORC for the complex-weighted graphs with two different phase distributions, where (left) edge $(1,4)$ has the same phase as the path $\{(1,2), (2,3), (3,4)\}$ while (right) edge $(1,4)$ does not provide such a shortcut, with the phases of edges shown on the edges and ORC shown as edge colors.}
    \label{fig:dir-G-eg-quad}
\end{figure}

\subsection{Multi-layer Parametrization} 
The set of walks connecting two vertices can be infinitely large, even with the phase constraint, but not all of them are of equal importance to compute the shortest-walk distance. In this section, we propose an approach to simplify the computation based on a \textit{multi-layer parametrization}. 
We assume that there is some value $\varphi_0\in [0, 2\pi)$ such that for each edge $\overrightarrow{uv}\in E$, $\varphi_{uv}$ is an integer multiple of $\varphi_0$, and $\varphi_0 = 2\pi/k$, i.e., the edge phases are in the set $\{z\varphi_0\}_{z=0}^{k-1}$, where $k\varphi_0 = 2\pi$, or the complex weights (ignoring the magnitude) lie in a cyclic group $S_k^1 := \{\xi^j|j = 0, 1, \dots, k-1\}$ where $\xi = e^{2\pi\iu/k}$ is the primitive $k$-th root of unity. As we will show later, $k$ is the number of layers in the multi-layer parametrization.  

Specifically, we transform the phase information in the complex-weighted graph to the following \textit{$k$-layer representation}: 
\begin{itemize}
    \item[(i)] all vertices are copied inside each layer $1, \dots, k$;
    \item[(ii)] edges are placed according to their original phases, where edges of phase $z\varphi_0$ are placed between layers $1$ and $1+z$, $2$ and $2+z$, up to $k$ and $k+z$ (convention: $k+z = z$ for all integer $z < k$);
    \item[(iii)] all edges in the parameterization only carry the magnitude of the original edge weights, thus have weight $r_{uv}=1,\, \forall \overrightarrow{uv}\in E$.
\end{itemize}
Therefore, the adjacency matrix of the $k$-layer graph $G^{(k)}$ is 
\begin{align*}
    \mathbf{A}^{(k)} = 
    \begin{pmatrix}
        \mathbf{A}^{0}  & \mathbf{A}^{\varphi_0} & \cdots & \mathbf{A}^{(k-1)\varphi_0}\\
        \mathbf{A}^{(k-1)\varphi_0} & \mathbf{A}^{0} & \cdots & \mathbf{A}^{(k-2)\varphi_0} \\
        \vdots & \vdots & \ddots & \vdots \\
        \mathbf{A}^{\varphi_0} & \mathbf{A}^{2\varphi_0} & \cdots & \mathbf{A}^{0}
    \end{pmatrix},
\end{align*}
where $(\mathbf{A}^{z\varphi_0})_{uv} = 1$ if $z\varphi_0 = \varphi_{uv}$, the phase of edge $\overrightarrow{uv}$, and $0$ otherwise. 
We note that the adjacency matrix of the complex-weighted graph can be decomposed as $\mathbf{A} = \sum_{z=0}^{k-1}e^{z\varphi_0\iu}\mathbf{A}^{z\varphi_0}$.

\begin{figure}[htbp]
    \centering
    \includegraphics[width=0.9\linewidth]{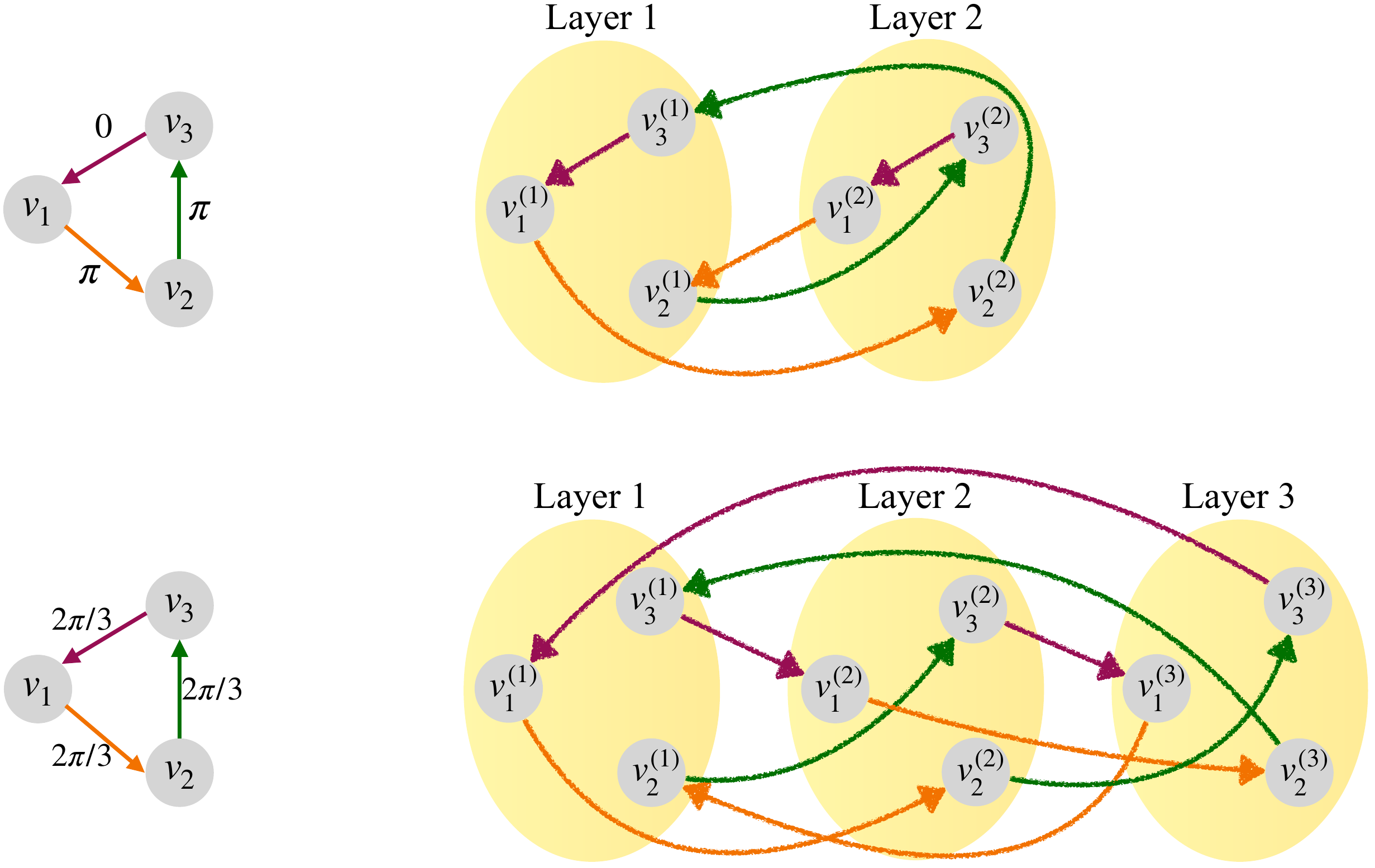}
    \caption{Example of complex-weighted graphs (left) and their multi-layer parametrization (right), where the phases of complex edge weights are shown next to the edges in the left figure (top: $\varphi_0 = \pi$, $k=2$; bottom: $\varphi_0 = 2\pi/3$, $k=3$), the edges in the right figure all have phase $0$, and copies of the same edge in the multi-layer parameterization are shown in the same color as the original edges.
    }
    \label{fig:multi-eg}
\end{figure}

\subsubsection{Equivalence of multi-layer parametrization and original formulation}
We show that the shortest-walk distance in the complex-weighted graph is equal to the shortest-path distance between the vertices in the corresponding layers. More importantly, the ORC of an edge $\overrightarrow{uv}$ in the complex-weighted graph is equal to the ORC of the corresponding edges in the $k$-layer graph.  {We present proof sketches for the formal results below; full proofs can be found in SM.}

\begin{lemma}
    For each $u,v\in V$, the shortest-walk distance from $u$ to $v$ of target phase $\theta = z\varphi_0$ in the complex-weighted graph, is equal to the shortest-path distance from $u$ in layer $i$ to $v$ in layer $i+z$ where $i=1,..., k$ with convention $k+i = i,\, \forall i$.
    {In particular, when $u=v$ and $z\not\equiv 0 \mod k$, this is the distance between distinct copies of $u$ in layers $i$ and $i+z$.}
    \label{lem:klayer-dist}
\end{lemma}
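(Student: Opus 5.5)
The plan is to build a weight-preserving bijection between directed walks in the complex-weighted graph $G$ that start at $u$ and have phase $z\varphi_0$, and directed walks in $G^{(k)}$ that start at the copy $(u,i)$ of $u$ in layer $i$ and end in layer $i+z$. The lemma then follows by comparing minima.

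First, define the \emph{lifting} of a walk. Take a walk $\mathcal{P}=(x_0=u,x_1,\dots,x_m=v)$ in $G$ and a starting layer $i$. Write each edge phase as $\varphi_{x_{t-1}x_t}=z_t\varphi_0$ with $z_t\in\{0,\dots,k-1\}$. Set $\ell_0=i$ and $\ell_t=\ell_{t-1}+z_t \pmod k$. By construction (ii), $G^{(k)}$ contains the edge from $(x_{t-1},\ell_{t-1})$ to $(x_t,\ell_t)$. By (iii), that edge has the same magnitude $r_{x_{t-1}x_t}$. So the lift is a walk in $G^{(k)}$ of the same total weight. Its final layer is $i+\sum_t z_t \pmod k$, which equals $i+z$ exactly when $\theta(\mathcal{P})=z\varphi_0$.

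Conversely, define the \emph{projection}. Any walk in $G^{(k)}$ from $(u,i)$ maps to a walk in $G$ by forgetting layers. This is valid because every edge of $G^{(k)}$ from layer $\ell$ to layer $\ell+z'$ comes from an edge of $G$ of phase $z'\varphi_0$. In particular the block structure of $\mathbf{A}^{(k)}$ means layer $\ell$ connects to layer $\ell'$ only through $\mathbf{A}^{(\ell'-\ell)\varphi_0}$. The projection preserves weight, and its phase equals (final layer $-$ initial layer)$\cdot\varphi_0$ mod $2\pi$. Lifting and projection are mutually inverse once the starting layer is fixed. Hence
\[
d_C^{z\varphi_0}(u,v)=\min\{\text{weight of walks }(u,i)\to(v,i+z)\text{ in }G^{(k)}\}.
\]

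It remains to replace walks by paths on the right-hand side. In $G^{(k)}$ all weights are positive, so a minimum-weight walk can be shortened to a path by removing cycles without increasing weight. Therefore the walk minimum equals the shortest-path distance. Infeasibility matches on both sides, giving $+\infty$. The argument is independent of $i$, so the distance is the same for all layers.

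The step I expect to need the most care is the boundary case $u=v$.
\begin{itemize}
\item If $z\equiv 0$, the convention $d_C^0(u,u)=0$ matches the trivial path from $(u,i)$ to itself.
\item If $z\not\equiv 0$, the paper's convention allows only nontrivial closed walks in $G$. Under the bijection these correspond exactly to walks between the distinct vertices $(u,i)$ and $(u,i+z)$. Those walks are automatically nontrivial, so the conventions agree, and any such walk reduces to a path between distinct copies as the statement asserts.
\end{itemize}
One must also check that, within the minimization, removing a cycle from a lifted walk corresponds to removing a phase-$0$ closed subwalk in $G$. This is automatic, since the cycle returns to the same vertex and the same layer.
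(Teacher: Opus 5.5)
Your proposal is correct and takes essentially the same route as the paper. You lift a phase-$z\varphi_0$ walk edge by edge to a walk from layer $i$ to layer $i+z$ in $G^{(k)}$, project walks in $G^{(k)}$ back to $G$ by forgetting layers, and compare minima; your explicit walk-to-path reduction via positive weights and your handling of the $u=v$ convention fill in details the paper's proof leaves implicit.
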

\begin{proof}[Proof sketch]
    We can show that: (i) for each walk from $u$ to $v$ in the complex-weighted graph of phase $\theta = z\varphi_0$, we can find a corresponding walk from $u$ in layer $i$ to $v$ in layer $i+z$ in the $k$-layer graph, and (ii) for each walk from $u$ in layer $i$ to $v$ in layer $i+z$ in the $k$-layer graph, we can also find a corresponding walk of phase $\theta = z\varphi_0$ from $u$ to $v$ in the complex-weighted graph. Therefore, the shortest-walk distance from $u$ to $v$ in the complex-weighted graph of phase $\theta = z\varphi_0$ is equal to the shortest-walk distance from $u$ in layer $i$ to $v$ in layer $i+z$ in the $k$-layer graph of phase $0$, which is equal to the shortest-path distance.
\end{proof}

\begin{theorem}
    For each $\overrightarrow{uv}\in E$, the ORC of $\overrightarrow{uv}$ in the complex-weighted graph with $\varphi_{uv} = z_0\varphi_0$ is equal to the ORC of $\overrightarrow{u^{(i)}v^{(i+z_0)}}$ in the (directed) $k$-layer graph, where $i=1,..., k$ with convention $k+i = i,\, \forall i$.
    \label{the:klayer-orc}
\end{theorem}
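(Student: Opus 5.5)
The plan is to match the three ingredients of the curvature, namely the two measures, the ground-cost distance, and the normalizing distance, between the complex-weighted graph and the $k$-layer graph $G^{(k)}$. Once they coincide, the Wasserstein problems (\ref{equ:W1-complex}) and (\ref{equ:W1}) become the same finite linear program after relabeling vertices, and the two curvatures agree.

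\emph{Step 1 (measures).} Fix $\overrightarrow{uv}$ with $\varphi_{uv}=z_0\varphi_0$ and a layer $i$. By construction (ii), each in-edge $\overrightarrow{xu}$ of phase $z_x\varphi_0$ appears in $G^{(k)}$ exactly once as an in-edge of $u^{(i)}$, namely as $\overrightarrow{x^{(i-z_x)}u^{(i)}}$. Hence $d_{in}(u^{(i)})=d_{in}(u)$, and the map $x\mapsto x^{(i-z_x)}$ is a bijection from $\operatorname{supp}(m_{in,u})$ onto the in-neighbourhood of $u^{(i)}$. This map pushes $m_{in,u}$ forward to the uniform in-measure at $u^{(i)}$. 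Similarly, $y\mapsto y^{(i+z_0+z_y)}$, where $\varphi_{vy}=z_y\varphi_0$, is a bijection from $\operatorname{supp}(m_{out,v})$ to the out-neighbourhood of $v^{(i+z_0)}$ and pushes $m_{out,v}$ to the out-measure there. The degenerate cases ($d_{in}=0$ or $d_{out}=0$) correspond to a point mass at $u^{(i)}$, respectively $v^{(i+z_0)}$, and are handled in the same way. One subtlety needs checking: distinct in-neighbours $x\neq x'$ map to distinct vertices of $G^{(k)}$. This is clear, since the underlying vertices are distinct.

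\emph{Step 2 (ground cost).} For $x$ in the in-support and $y$ in the out-support, the target phase is $\theta(xuvy)=(z_x+z_0+z_y)\varphi_0$. By Lemma~\ref{lem:klayer-dist}, $d_C^{\theta(xuvy)}(x,y)$ equals the shortest-path distance in $G^{(k)}$ from layer $i-z_x$ to layer $i-z_x+z_x+z_0+z_y=i+z_0+z_y$. These are exactly the layers of the images of $x$ and $y$ under the bijections of Step 1, so the cost matrices coincide entrywise. The degenerate cases reduce to $z_x=0$ or $z_y=0$ with the point $u$ or $v$, and are again covered by Lemma~\ref{lem:klayer-dist}, including its statement for $u=v$. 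Likewise, Lemma~\ref{lem:klayer-dist} gives $d_C^{\theta(uv)}(u,v)=d_{G^{(k)}}(u^{(i)},v^{(i+z_0)})$, so the denominators agree.

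\emph{Step 3 (conclude).} Both Wasserstein distances are finite transport problems on supports related by bijections that preserve the marginals and the cost. Every coupling therefore transfers in both directions with equal cost, the infima are equal, and so $\kappa(u,v)=\kappa(u^{(i)},v^{(i+z_0)})$ for every $i$.

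The main obstacle is the bookkeeping in Steps 1 and 2. One must check that the layer offsets chosen for the in- and out-neighbours are consistent with the phase constraint $\theta(zuvz')$, so that the lifted cost $d_{G^{(k)}}$ is evaluated between exactly the copies that Lemma~\ref{lem:klayer-dist} addresses. Infinite distances, which occur when no walk of the required phase exists, also need care; they correspond to unreachable layer copies and match on both sides.
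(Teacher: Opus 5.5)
Your proposal is correct and follows essentially the same route as the paper's proof. You identify the in-neighbourhood of $u$ and the out-neighbourhood of $v$ with those of $u^{(i)}$ and $v^{(i+z_0)}$ via the layer shifts $x\mapsto x^{(i-z_x)}$ and $y\mapsto y^{(i+z_0+z_y)}$, and then use Lemma~\ref{lem:klayer-dist} to identify the phase-constrained ground costs and the denominator with shortest-path distances in $G^{(k)}$, so the two transport problems coincide.

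You are somewhat more explicit than the paper, which only asserts equal supports and leaves the denominator and the degenerate point-mass cases implicit. Your concern about infinite costs is moot: the reference walk $xuvy$ (or $uvy$, $xuv$ in the degenerate cases) is always admissible, so every relevant cost is at most $3$.
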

\begin{proof}[Proof sketch]
    We can first find the one-to-one correspondence between the in-neighbors of $u$ in the complex-weighted graph and the in-neighbors of $u$ in layer $i$ in the $k$-layer graph, and between the out-neighbors of $v$ in the complex-weighted graph and the out-neighbors of $v$ in layer $i+z_0$ in the $k$-layer graph. Then from Lemma \ref{lem:klayer-dist}, the shortest walk distance in the complex-weighted graph and the shortest path distance in the $k$-layer graph return the same value. With the two key components, we can finally prove that the ORCs return the same value.
\end{proof}

\begin{corollary}
    If for each $\overrightarrow{uv}\in E$, $\overrightarrow{vu}\in E$ with $r_{vu}=r_{uv}$ and $(\varphi_{uv} + \varphi_{vu}) \mod 2\pi = 0$, then the ORC of $\overrightarrow{uv}$ in the complex-weighted graph with $\varphi_{uv} = z\varphi_0$ is equal to the ORC of $\overrightarrow{u^{(1)}v^{(1+z)}}$ in the $k$-layer graph ignoring the direction.
    \label{cor:ORC-k-undi}
\end{corollary}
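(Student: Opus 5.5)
The plan is to reduce to Theorem~\ref{the:klayer-orc} with $i=1$ and then show that, under the stated symmetry hypothesis, the directed $k$-layer graph $G^{(k)}$ is symmetric. Once it is symmetric, "ignoring the direction" changes nothing: in-neighborhoods coincide with out-neighborhoods, the directed shortest-path distance coincides with the undirected one, and the directed ORC of $\overrightarrow{u^{(1)}v^{(1+z)}}$ is literally the standard undirected ORC of the edge $\{u^{(1)},v^{(1+z)}\}$.

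\textbf{Step 1: symmetry of $G^{(k)}$.} Take $\overrightarrow{xy}\in E$ with $\varphi_{xy}=z'\varphi_0$. By construction (ii), $G^{(k)}$ contains the arcs $x^{(j)}\to y^{(j+z')}$ for every layer $j$. By hypothesis $\overrightarrow{yx}\in E$ with $\varphi_{yx}\equiv -z'\varphi_0 \equiv (k-z')\varphi_0$, so $G^{(k)}$ also contains $y^{(j')}\to x^{(j'+k-z')}$ for every $j'$. Setting $j'=j+z'$ gives $y^{(j+z')}\to x^{(j)}$. Hence every arc of $G^{(k)}$ has its reverse, and $\mathbf{A}^{(k)}$ is symmetric. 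Equivalently, the hypothesis gives $\mathbf{A}^{(k-z)\varphi_0}=(\mathbf{A}^{z\varphi_0})^\top$, and the block-circulant form then satisfies $(\mathbf{A}^{(k)})^\top=\mathbf{A}^{(k)}$. The magnitude condition $r_{vu}=r_{uv}$ ensures that the weights of the paired arcs agree in the weighted setting.

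\textbf{Step 2: identify the ingredients.} Since $G^{(k)}$ is symmetric, the in-neighbors of $u^{(1)}$ are exactly its undirected neighbors, and the in-degree equals the undirected degree. So $m_{in,u^{(1)}}$ is the uniform measure on the undirected neighborhood, and likewise $m_{out,v^{(1+z)}}$. For distances, a directed walk and its reversal are both walks in $G^{(k)}$. The directed shortest-path distance is therefore symmetric and equals the undirected shortest-path distance. In particular, the transport cost appearing in the directed ORC of $G^{(k)}$ (supplied by Lemma~\ref{lem:klayer-dist} inside Theorem~\ref{the:klayer-orc}) is the undirected graph distance between $z^{(1)}$ and $z'^{(1+z)}$ for the corresponding layer copies, and the denominator $d(u^{(1)},v^{(1+z)})$ matches as well.

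\textbf{Step 3: conclude.} The measures, the cost function, and the denominator all agree, so the optimal-transport problem, and hence the curvature, of the directed $k$-layer graph at $\overrightarrow{u^{(1)}v^{(1+z)}}$ coincides with the undirected one. Combining this with Theorem~\ref{the:klayer-orc} for $i=1$ gives the claim.

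The main obstacle I expect is Step 2's bookkeeping: checking that the layer indices used for the transport cost in Theorem~\ref{the:klayer-orc} are exactly those of the neighbors of $u^{(1)}$ and $v^{(1+z)}$ in $G^{(k)}$. The in-neighbor $z$ of $u$ lives in layer $1-\varphi_{zu}/\varphi_0$, and the cost phase $\theta(zuvz')$ is absorbed by the layer shift. With that settled, the symmetry argument is routine.
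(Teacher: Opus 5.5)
Your proposal is correct and follows the paper's own argument. The hypothesis pairs every arc $x^{(j)}\to y^{(j+z')}$ of $G^{(k)}$ with the reverse arc $y^{(j+z')}\to x^{(j)}$, so $G^{(k)}$ is symmetric and Theorem~\ref{the:klayer-orc} with $i=1$ transfers directly to the undirected ORC; your Step~2 just spells out what the paper leaves implicit in ``we can ignore the direction'' (in- and out-neighborhoods coincide with undirected neighborhoods, and directed distances equal undirected ones).
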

\begin{proof}[Proof sketch]
    The proof follows naturally from the construction.
\end{proof}
    
\subsubsection{Special case: complex weights from the magnetic Laplacian of digraphs}
In this section, we consider the specific case of a complex-weighted graph from the magnetic Laplacian of a digraph, for further illustration.

The magnetic Laplacian characterizes each directed network as a Hermitian matrix where the imaginary part carries the directional information. 
It is motivated by the dynamics of a free quantum mechanical particle on a graph under the influence of magnetic fluxes passing through the cycles in the network. Specifically, for a given digraph $H = (V,E_H)$, where $E_H$ is the edge set with $w_{ij}>0$ encoding the weight of each directed edge $(v_i,v_j)$ ($w_{ij} = 0$ if there is no edge), the magnetic Laplacian $\mathbf{L}^\theta = (L_{ij}^\theta)$ is
\begin{align*}
    L_{ij}^\theta = 
    \begin{cases}
        \sum_{h}w_{s}(i,h),\quad &\text{if } i=j\\
        - w_{s}(i,j)T_{i\to j}^\theta,\quad &\text{if } \{(v_i,v_j), (v_j,v_i)\} \cap E\ne \emptyset\\
        0,\quad &\text{otherwise}
    \end{cases}\; .
\end{align*}
where $w_{s}(i,j) = (w_{ij} + w_{ji})/2$ is the symmetrized weight, and $T_{i\to j}^\theta = \exp{(\iu \theta a(i,j))}$. $\theta\in [0, 2\pi)$ is an extra parameter introduced and
\begin{align}
    a(i,j) = 
    \begin{cases}
        1,\quad &\text{if } (v_i,v_j)\in E, (v_j,v_i)\notin E\\
        -1,\quad &\text{if } (v_i,v_j)\notin E, (v_j,v_i)\in E\\
        0,\quad &\text{if } (v_i,v_j)\in E, (v_j,v_i)\in E
    \end{cases}\; .
    \label{equ:magL-a}
\end{align}
The magnetic Laplacian is Hermitian by construction. For illustrative purposes, we choose $\theta = 2\pi/k$. 

\begin{figure}[htbp]
    \centering
    \includegraphics[width=0.98\linewidth]{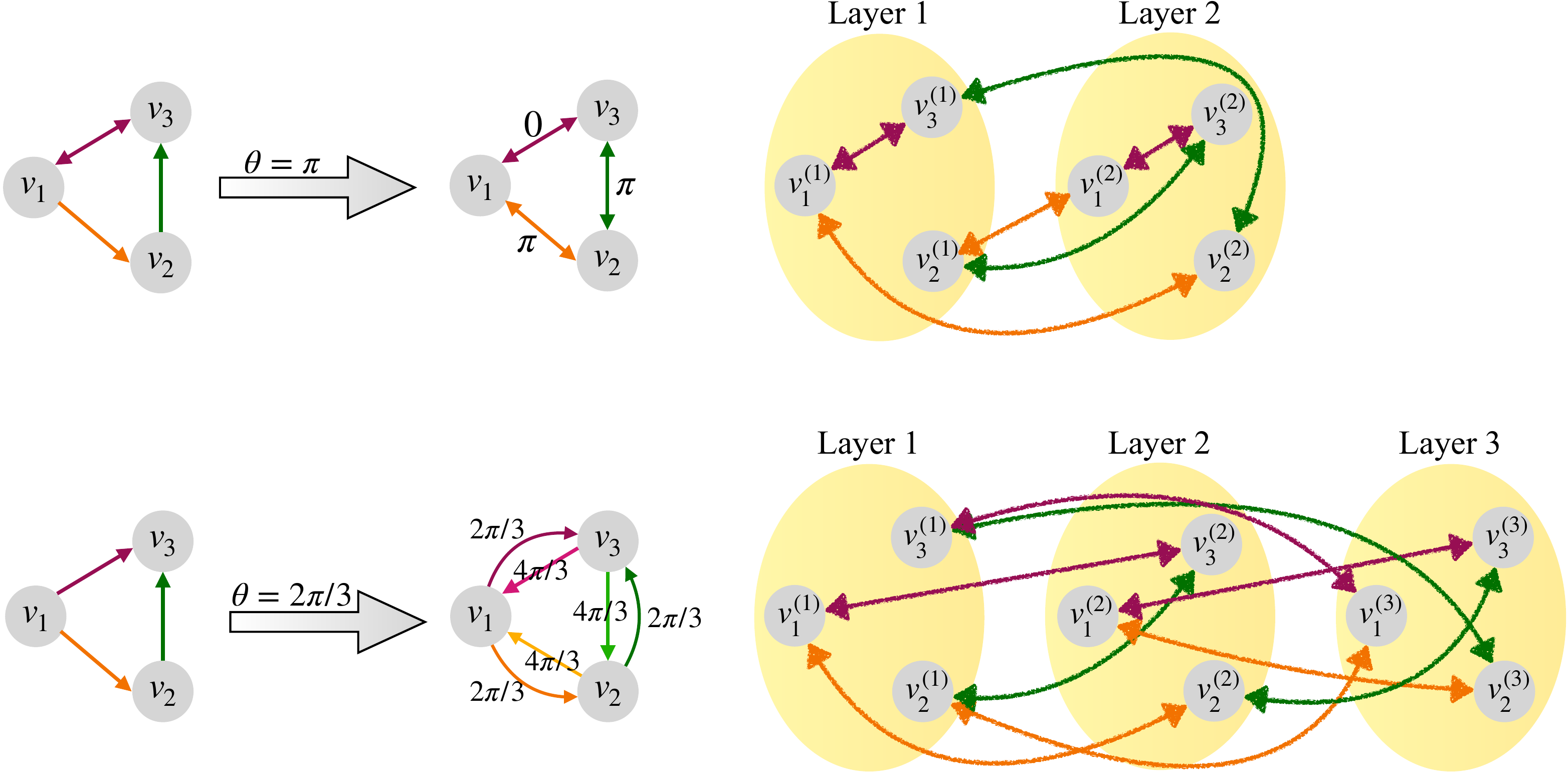}
    \caption{Example of digraphs $H$ (left), the complex-weighted graph corresponding to the magnetic Laplacian with $\theta$ (middle), and their $3$-layer graphs $G^{(3)}$ (right), of the directed cycle graph with one reciprocal edge (top) and the directed flow graph (bottom).}
    \label{fig:multi-dir-eg}
\end{figure}

In the complex-weighted graph from the magnetic Laplacian, denoted by $G$, the phases of edges can only be $\{0, 2\pi/k, 2(k-1)\pi/k\}$, where phase $0$ corresponds to bidirectional edges in $H$ ($a(i,j)=0$), phase $2\pi/k$ corresponds to unidirectional edges in $H$ ($a(i,j) = 1$), and phase $ 2(k-1)\pi/k$ corresponds to node pairs of the opposite direction to the unidirectional edges in $H$ ($a(i,j) = -1$). Therefore, in the corresponding $k$-layer parametrization, $\varphi_0 = 2\pi/k$, $\mathbf{A}^0$ encodes the adjacency for bidirectional edges, denoted by $\mathbf{A}^{(\text{bi})}$, $\mathbf{A}^{\varphi_0}$ encodes the adjacency for unidirectional edges, denoted by $\mathbf{A}^{(\text{uni})}$, $\mathbf{A}^{(k-1)\varphi_0} = \mathbf{A}^{(\text{uni})T}$, and $\mathbf{A}^{z\varphi_0} = \mathbf{0}$ for $z=2,\dots,k-2$.
The adjacency matrix of the $k$-layer graph $G^{(k)}$ is then:
\begin{align*}
    \mathbf{A}^{(k)} = 
    \begin{pmatrix}
        \mathbf{A}^{(\text{bi})} & \mathbf{A}^{(\text{uni})} & \mathbf{0} & \dots & \mathbf{A}^{(\text{uni})T} \\ 
        \mathbf{A}^{(\text{uni})T} & \mathbf{A}^{(\text{bi})} & \mathbf{A}^{(\text{uni})} & \dots & \mathbf{0}\\
        \mathbf{0} & \mathbf{A}^{(\text{uni})T} & \mathbf{A}^{(\text{bi})} & \dots & \mathbf{0}\\
        \vdots & \vdots & \vdots & \ddots & \vdots \\
        \mathbf{A}^{(\text{uni})} & \mathbf{0} & \mathbf{0} & \dots & \mathbf{A}^{(\text{bi})}
    \end{pmatrix}.
\end{align*}
Hence, $\mathbf{A}^{(k)}$ is symmetric, and we can ignore the direction of edges (also from Corollary \ref{cor:ORC-k-undi}). 

As an example, suppose that the digraph $H$ is a directed cycle of length $3$ with edges $\{(v_1, v_2), (v_2, v_3), (v_3, v_1)\}$, and we consider the magnetic Laplacian with $\theta=2\pi/3$. Then $k=3$, and if we denote the adjacency matrix of $H$ as $\mathbf{A}_H$, $\mathbf{A}^{(\text{uni})} = \mathbf{A}_H$ and $\mathbf{A}^{(\text{bi})} = \mathbf{0}$. Hence, the adjacency matrix of the $3$-layer graph $G^{(3)}$ is then: 
\begin{align*}
    \mathbf{A}^{(3)} = 
    \begin{pmatrix}
        \mathbf{0} & \mathbf{A}_H & \mathbf{A}_H^T \\
        \mathbf{A}_H^T & \mathbf{0} & \mathbf{A}_H \\
        \mathbf{A}_H & \mathbf{A}_H^T & \mathbf{0}
    \end{pmatrix}
    \text{, where }
    \mathbf{A}_H = 
    \begin{pmatrix}
        0 & 1 & 0\\
        0 & 0 & 1\\
        1 & 0 & 0
    \end{pmatrix}.
\end{align*}
$G^{(3)}$ consists of $3$ disconnected cycles.
Now, if we reverse the direction of one edge and consider the directed flow graph where $E_H = \{(v_1, v_2), (v_2, v_3), (v_1, v_3)\}$, while maintaining $\theta=2\pi/3$, then the $3$-layer graph $G^{(3)}$ becomes a cycle of length $9$; see Fig.~\ref{fig:multi-dir-eg}.
The ORC of edges in the former of a directed cycle is larger than that in the latter; see Fig.~\ref{fig:multi-dir-L-k3} for details. Hence, the ORC can capture shortcuts between the in-neighbors of the start and out-neighbors of the end for each directed edge. 

\begin{figure}[htbp]
    \centering
    \begin{tabular}{cc}
        \includegraphics[width=.35\textwidth]{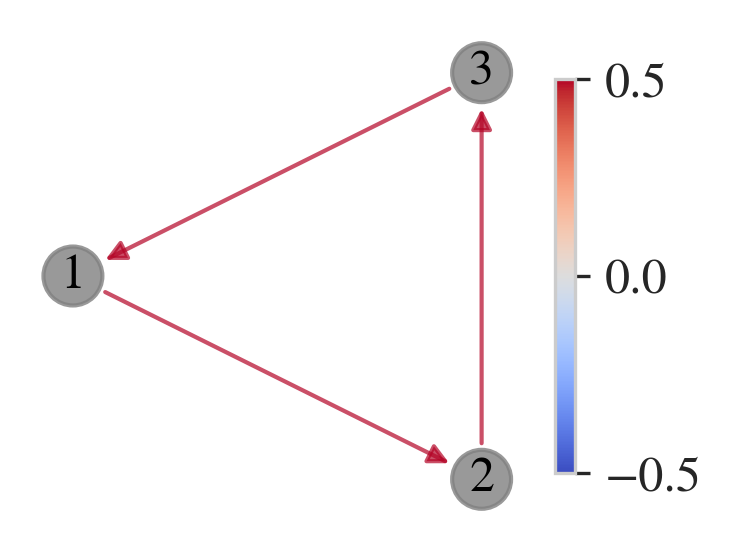} &  \includegraphics[width=.35\textwidth]{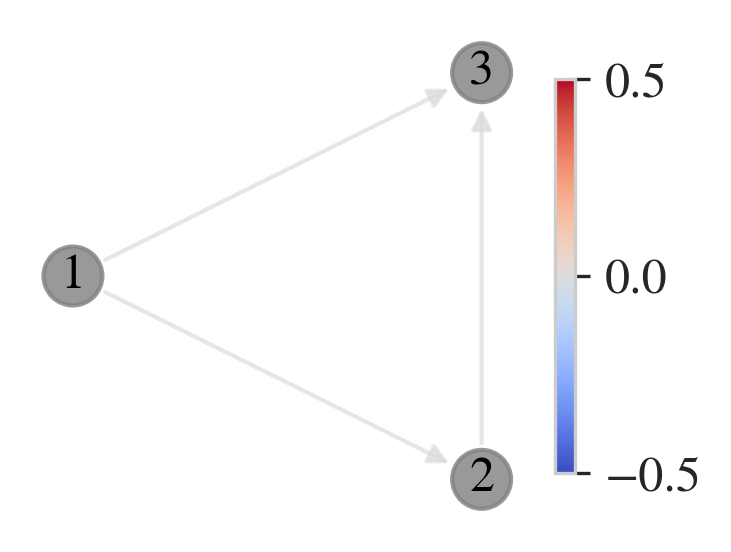}
    \end{tabular}
    \caption{ORC for the directed cycle graph (left) and the directed flow graph (right), computed from their complex-weighted graphs induced by the magnetic Laplacian with $\theta=2\pi/3$.}
    \label{fig:multi-dir-L-k3}
\end{figure}

\subsubsection{ORC for complex-weighted graphs induced from digraphs versus ORC for digraphs} 
Through the magnetic Laplacian of a digraph $H$, we can characterize the ORC of each edge from the ORC in the induced complex-weighted graph $G$. This, in turn, proposes a new notion of ORC for digraphs. How this new notion characterizes the ORC in digraphs also depends on the adequacy of the magnetic Laplacian in reflecting the underlying structural features. 
In the proposed ORC for complex-weighted graphs, we require walks with desired phases in computing the shortest-walk distance. (i) Since each directional edge in $H$ corresponds to an edge of a nonzero phase $\theta$ in $G$, only a subset of all walks connecting the vertices is eligible. However, in digraph $H$, all edges have phase $0$, thus all walks connecting the vertices satisfy the condition. (ii) The magnetic Laplacian is Hermitian, thus for all $\overrightarrow{uv}\in E_H$, $\overrightarrow{uv},\overrightarrow{vu}\in E(G)$, and then some combinations of edges from one vertex to another that are excluded in $H$ because of direction conflicts are now eligible in $G$, providing more walks connecting the same pair of vertices. 
In summary, the magnetic Laplacian transforms the directional information to phase information, while the proposed ORC can capture both. 
\begin{remark}
    Directed triangles can create shortcuts in both notions. However, while the complex-weighted ORC notion accounts for undirected triangles, this structural information is not captured by the directed ORC notion. We illustrate this below in the numerical results.
    \label{rem:triangle}
\end{remark}

\section{Theoretical Results}\label{sec:theory}
In this section, we present several results that characterize relations between the proposed ORC notion and the Magnetic Laplacian, as well as structural balance. We further present combinatorial upper and lower curvature bounds. We again present proof
sketches only in the main text and defer all details to SM.

\subsection{Magnetic Laplacian}
The magnetic Laplacian provides a natural connection between digraphs and complex-weighted graphs, both of which the proposed ORC can characterize. In this section, we provide further characterization of the ORC of digraphs via the magnetic Laplacian. Specifically, we consider the $k$-layer representation $G^{(k)}$ of the complex-weighted graph induced by the magnetic Laplacian of a digraph $H$, and connect the structures in $G^{(k)}$ with those in the digraph $H$. Recall that copies in $G^{(k)}$, of the same edge in the complex-weighted graph induced by the digraph, have the same connectivity, thus the same ORC values.

From the construction of the magnetic Laplacian, each directed edge in $H$ also becomes bidirectional in the complex-weighted graph, but different directions have different phases. Therefore, each undirected cycle in $H$ (which is a cycle if we ignore the direction of edges) can lead to two directed cycles, with opposite orientations and different phases of their compositing edges, in the complex-weighted graph. Such a pair of directed cycles leads to one or more pairs of directed cycles in the $k$-layer graph $G^{(k)}$, where cycles of different phases in the complex-weighted graph lead to cycles of different lengths in $G^{(k)}$. Therefore, the direction information in $H$ is transferred into the phase dimension in the complex-weighted graph and finally into the distance in $G^{(k)}$. 
To see how the magnetic Laplacian characterizes the digraph $H$ via $G^{(k)}$, we consider the \textit{effective length} of each cycle in $H$, as defined in Definition \ref{def:mag-eff-len}, and connect this notion with the length of cycles in $G^{(k)}$.
\begin{definition}[Effective length]
    For an undirected cycle $C$ in a digraph $H$ with edges of any direction, {fix an arbitrary traversal of the underlying undirected cycle.} The effective length of $C$ is defined as the absolute difference between the number of edges oriented in one direction and the number of edges oriented in the opposite direction {relative to this traversal, with bidirectional edges contributing $0$; the absolute value makes the traversal choice irrelevant.}
    \label{def:mag-eff-len}
\end{definition}
\begin{proposition}
    For an undirected cycle $C$ in a digraph $H$ with edges of any direction, suppose $C$ has the actual length $|C|$ in $H$ and the effective length $|C|_e$, then the length of the corresponding cycle(s) in $G^{(k)}$ is 
    \begin{align*}
        |C|^{(k)} = 
        \begin{cases}
            |C|, &\quad \text{if } \quad  (|C|_e \hspace{-.5em}\mod k) = 0\\
            \frac{lcm((|C|_e \hspace{-.5em}\mod k), \hspace{.5em}k)}{(|C|_e \hspace{-.5em}\mod k)} |C|, &\quad \text{otherwise }
        \end{cases}
    \end{align*}
    where $lcm(\cdot,\cdot)$ returns the least common multiple of the two elements. 
    \label{pro:effect-cycle-len}
\end{proposition}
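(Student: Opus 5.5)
We lift the cycle $C$ of $H$ to $G^{(k)}$ and track which layer we are in as we go around it. Fix a traversal $v_0 v_1 \cdots v_{|C|-1} v_0$ of the underlying undirected cycle. By the construction of the magnetic Laplacian with $\theta = 2\pi/k$, the step $v_j \to v_{j+1}$ in the complex-weighted graph $G$ has phase $a(v_j,v_{j+1})\varphi_0$, with $a \in \{1,-1,0\}$ as in \eqref{equ:magL-a}. Hence one full traversal has total phase $s\varphi_0$, where $s = \sum_j a(v_j,v_{j+1})$ is the signed count of forward minus backward edges. Up to sign, $s$ is exactly the effective length $|C|_e$ of Definition~\ref{def:mag-eff-len}. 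Reversing the traversal gives phase $-s\varphi_0$, which is the oppositely oriented directed cycle in $G$.

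Next, by the layer rule (ii) of the $k$-layer representation, as used in Lemma~\ref{lem:klayer-dist}, starting at $v_0^{(i)}$ and following the edges of $C$ in $G^{(k)}$ is forced: each step moves from layer $\ell$ to layer $\ell + a(v_j,v_{j+1}) \bmod k$. After one traversal we arrive at $v_0^{(i+s)}$, and after $t$ traversals at $v_0^{(i+ts)}$. The lifted walk closes up for the first time after $t^\ast$ traversals, where $t^\ast$ is the smallest $t \ge 1$ with $ts \equiv 0 \pmod k$. Writing $r = |C|_e \bmod k$ (the sign of $s$ does not matter, since $ts \equiv 0$ iff $t(-s) \equiv 0$):
\begin{itemize}
\item If $r = 0$, then $t^\ast = 1$ and the length is $|C|$.
\item Otherwise, $t^\ast = k/\gcd(r,k) = \operatorname{lcm}(r,k)/r$, giving length $t^\ast |C|$, which is the stated formula.
\end{itemize}

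We must also check that the closed walk of length $t^\ast|C|$ is a genuine cycle in $G^{(k)}$, i.e.\ it visits no vertex twice. A vertex of $G^{(k)}$ is a pair (vertex of $C$, layer). The vertices $v_j$ of $C$ are distinct in $H$. Suppose the lift visited some $v_j^{(\ell)}$ twice. Since $v_j$ sits at a single position in the cyclic sequence, the two visits occur at times differing by a multiple of $|C|$, say $m|C|$ with $0 < m < t^\ast$. But then the layers differ by $ms \not\equiv 0 \pmod k$, a contradiction. So the walk is a simple cycle.

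Finally, the orbit structure. The $k$ starting layers split into $\gcd(r,k)$ lifted cycles, each using $t^\ast$ starting layers. This explains the plural ``cycle(s)'': all copies have the same length, and the opposite orientation yields the same cycles traversed backwards.

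The main obstacle is bookkeeping rather than depth. The key points are the consistent identification of $s$ with $\pm|C|_e$ under the traversal convention (bidirectional edges contribute $0$ both to $a$ and to the effective length), and the verification that the closed lift is simple rather than merely a closed walk.
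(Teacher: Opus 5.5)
Your proof is correct and follows the paper's route: lift $C$ into $G^{(k)}$, note that each traversal shifts the layer by the signed count $s=\pm|C|_e$, close up after the least $t$ with $ts\equiv 0 \pmod k$, and use the Hermitian symmetry to make the traversal direction irrelevant. The paper leaves implicit your checks that the closed lift is a simple cycle and that the $k$ starting layers split into $\gcd(r,k)$ such cycles; both checks are sound.
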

\begin{proof}[Proof sketch]
    This can be shown from the construction of the magnetic Laplacian and the multilayer parametrization. 
\end{proof}

As an example, we consider two triangles: one is a directed cycle and the other is a directed flow graph as in the bottom of Figure \ref{fig:multi-dir-L-k3}, thus $|C| = 3$; we set $k = 3$. For the former, $|C|_e = 3$, hence $(|C|_e \hspace{-.5em}\mod k) = 0$ and $|C|^{(k)} = 3$. While for the latter, $|C|_e = 1$, hence $(|C|_e \hspace{-.5em}\mod k) = 1$, then $lcm((|C|_e \hspace{-.5em}\mod k), \hspace{.5em}k) = 3$ and finally $|C|^{(k)} = {9}$, as in Figure \ref{fig:multi-dir-L-k3}.
Therefore, when some edges do not follow the same orientation in an undirected cycle, this will change the effective length of the cycle in $H$, and may cause the length of the corresponding cycle(s) in $G^{(k)}$ to change (if $(|C|_e \hspace{-.5em}\mod k)$ and $lcm((|C|_e \hspace{-.5em}\mod k), \hspace{.5em}k)$ changes differently), which can cause changes in ORC. 

\begin{proposition}
    For $k > 1$, for every undirected cycle $C$ in $H$, the length of the corresponding cycles in $G^{(k)}$ is also $|C|$, if and only if the corresponding complex-weighted graph (induced by the magnetic Laplacian with arguments $2\pi/k, -2\pi/k$) is structurally balanced. 
    \label{pro:mag-k-bal}
\end{proposition}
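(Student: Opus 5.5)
The plan is to reduce both sides of the equivalence to a single arithmetic condition on effective lengths: $|C|_e \equiv 0 \pmod k$ for every undirected cycle $C$ of $H$. The link to cycle lengths in $G^{(k)}$ comes from Proposition~\ref{pro:effect-cycle-len}. The link to structural balance comes from computing the phase of the corresponding directed cycle in the complex-weighted graph $G$ induced by the magnetic Laplacian with $\theta = 2\pi/k$.

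First I would record the phase computation. Traverse an undirected cycle $C$ of $H$ in a fixed orientation. Each edge contributes to the phase of the corresponding directed cycle in $G$ as follows:
\begin{itemize}
\item $+2\pi/k$ if it is a unidirectional edge traversed along its direction ($a=1$);
\item $-2\pi/k$ if it is traversed against its direction ($a=-1$);
\item $0$ if it is bidirectional.
\end{itemize}
Hence the phase of the directed cycle in $G$ is $\pm\frac{2\pi}{k}|C|_e \bmod 2\pi$, and the opposite traversal gives the negative. Therefore this cycle has phase $0$ if and only if $|C|_e \equiv 0 \pmod k$.

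Next I check that $G$ satisfies the hypotheses under which structural balance is defined. Since the magnetic Laplacian is Hermitian, $W_{vu}=\bar W_{uv}$. Every directed cycle in $G$ comes from an undirected cycle in $H$ with some traversal, up to closed walks. Structural balance requires all cycles to have phase $0$, so it is equivalent to $|C|_e\equiv 0 \pmod k$ for all undirected cycles $C$ of $H$.

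For the link to $G^{(k)}$, Proposition~\ref{pro:effect-cycle-len} gives $|C|^{(k)}=|C|$ when $|C|_e \bmod k=0$. Otherwise, write $r=|C|_e \bmod k\in\{1,\dots,k-1\}$. Then $|C|^{(k)}=\frac{\mathrm{lcm}(r,k)}{r}|C| = \frac{k}{\gcd(r,k)}|C|$. Since $r<k$, we have $\gcd(r,k)<k$, so this is strictly greater than $|C|$ because $|C|\ge 1$. Thus $|C|^{(k)}=|C|$ holds if and only if $|C|_e\equiv 0\pmod k$. The hypothesis $k>1$ is what makes the nonzero residue case possible and the factor $k/\gcd(r,k)\ge 2$ meaningful. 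Combining this with the previous paragraph, cycle by cycle, gives both directions of the equivalence.

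The main obstacle I expect is the reduction from ``all cycles of $G$ have phase $0$'' to ``all undirected simple cycles of $H$ have effective length divisible by $k$''. There are two subtleties:
\begin{itemize}
\item \emph{Closed walks in $G$.} The phase of a general closed walk is a sum of phases of simple cycles plus back-and-forth traversals of single edges. These backtracks contribute $\varphi_{uv}+\varphi_{vu}=0$ by the Hermitian property, which is exactly the cycle-space argument.
\item \emph{Length-2 cycles.} A reciprocal pair $u\to v\to u$ in $G$ is not an undirected cycle of $H$, but it has phase $0$ automatically.
\end{itemize}
I would handle both by arguing that phase is additive over the cycle space and vanishes on backtracks. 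Then balance on simple cycles implies balance on all closed walks, and conversely.
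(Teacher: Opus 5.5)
Your proposal is correct and follows the same route as the paper, whose proof is a chain of equivalences: the lifted length equals $|C|$ iff $|C|_e \equiv 0 \pmod k$ (via Proposition~\ref{pro:effect-cycle-len}), iff the corresponding directed cycle in the complex-weighted graph has phase $\frac{2\pi}{k}|C|_e \equiv 0 \pmod{2\pi}$, iff the graph is structurally balanced. Your extra steps fill in what the paper leaves implicit in ``by definition'': the fact that $\mathrm{lcm}(r,k)/r = k/\gcd(r,k) \ge 2$ for $r \neq 0$, and the reduction from arbitrary closed walks and reciprocal $2$-cycles to undirected cycles of $H$ using $\varphi_{uv}+\varphi_{vu} \equiv 0$.
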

\begin{proof}[Proof sketch]
    This follows from Proposition \ref{pro:effect-cycle-len} and the definition of structural balance. 
\end{proof}

\begin{remark}
    If all cycles in $G^{(k)}$ maintain the same lengths as cycles in $H$, then the ORC of edges in $G^{(k)}$ retrieves the ORC of edges in $\bar{H}$ where the edge direction is ignored. Indeed, for the ORC of edges in $G^{(k)}$ to have the same values as the ORC of edges in $\bar{H}$, it is sufficient to have cycles of actual lengths $3,4$ and $5$ in $\bar{H}$ maintain the same lengths in $G^{(k)}$ (in the unweighted case).
\end{remark}

The choice of $k=3$ is of particular interest, because (i) the induced complex-weighted graphs can have any multiples of $2\pi/3$ as the edge phase, i.e., the complex weights lie in the whole cyclic group $S_3^1 := \{\xi^j|j = 0, 1, 2\}$ where $\xi = e^{2\pi\iu/3}$ is the primitive third root of unity, and (ii) the triangle structure is of particular importance in characterizing the curvature. Therefore, we will choose $\theta = 2\pi/3$ in constructing the magnetic Laplacian for digraphs in the following experiments. We can choose more customized values for $\theta$ if additional characteristics of the graph are known. For example, in the case of bipartite graphs, we may want to choose $\theta = 2\pi/4$.

\subsection{Structural balance}
The notion of \emph{structural balance} has been developed for complex-weighted graphs with Hermitian adjacency matrices \cite{Lange2015MagL,tian2024structural} (see \cite{harary_1953_balance,zaslavsky_1982_signed} for the original definition in signed networks). 
Here, we extend the notion to general directed complex-weighted networks in a Hermitian or directed manner, as in Definitions \ref{def:balance-herm} and \ref{def:balance-direct}, respectively. 
\begin{definition}[Hermitian structural balance]
    A complex-weighted digraph $G$ is Hermitian structural balanced if (i) for all $\overrightarrow{uv}\in E$, if $\overrightarrow{vu}\in E$, then $(\varphi_{uv} + \varphi_{vu}) \mod 2\pi = 0$, and (ii) for all $\overrightarrow{uv}\in E$, $\overrightarrow{vu}\notin E$, we add a dummy edge $\overrightarrow{vu}$ with phase $(2\pi - \varphi_{uv}) \mod 2\pi$, then all directed cycles, composed of edges and dummy edges, have phase $0$. 
    \label{def:balance-herm}
\end{definition}
\begin{definition}[Directed structural balance]
    A complex-weighted digraph $G$ is directed structural balanced if (i) all directed cycles have phase $0$ and (ii) all walks from $u$ to $v$ have the same phase for all $u,v\in V$. 
    \label{def:balance-direct}
\end{definition}

The Hermitian structural balance imposes stricter conditions on the directed complex-weighted graphs than the directed structural balance. The Hermitian one is closely related to the structural balance for Hermitian complex-weighted graphs, where one can obtain the complex weight matrix of the former by setting some elements in the complex weight matrix of the latter to zero. Hence, many desired properties of the structural balance for Hermitian complex-weighted graphs, as shown in \cite{tian2024structural}, still hold in the case of Hermitian structural balance. 

The directed structural balance emphasizes the direction information, and only requires well-defined walks to satisfy the phase condition. Although imposing a weaker condition on the phases, the directed structural balance provides a sufficient condition for the ORC in complex-weighted graphs to be equivalent to the ORC in the corresponding digraphs by ignoring the phase information. 
\begin{lemma}
    The ORC for complex weights gives the same value as the ORC for digraphs by setting all phases to be zero if the complex-weighted graph satisfies the directed structural balance. 
\end{lemma}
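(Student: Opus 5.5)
The plan is to compare the two ingredients of the curvature in Definition \ref{def:orc-complex} one at a time: the neighborhood measures and the distances entering both the transport cost and the normalization. If the measures coincide and every distance used coincides with the directed shortest-path distance of the phase-free digraph, then the two Wasserstein problems have the same marginals and the same cost function. Their infima and the resulting curvatures then agree.

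\textbf{Measures.} The measures $m_{in,u}$ and $m_{out,v}$ depend only on the magnitudes $r_{xu}$, $r_{vx}$ and on the in- and out-degrees. None of these involve phases. So they coincide verbatim with the measures $m_u$, $m_v$ defined in the warm-up for directed graphs, including the degenerate cases $d_{in}(u)=0$ and $d_{out}(v)=0$.

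\textbf{Distances.} Fix $z\in\operatorname{supp}(m_{in,u})$ and $z'\in\operatorname{supp}(m_{out,v})$, and let $\varphi=\theta(zuvz')$. By condition (ii) of Definition \ref{def:balance-direct}, every directed walk from $z$ to $z'$ has the same phase. Since $zuvz'$ is one such walk, every walk from $z$ to $z'$ has phase $\varphi$. The phase constraint in Definition \ref{def:orc-swd} is therefore vacuous, and $d_C^{\varphi}(z,z')$ is the minimum total magnitude over all directed walks from $z$ to $z'$. A shortest walk can always be shortened to a path by deleting closed sub-walks, because magnitudes are nonnegative. Hence this minimum equals the directed shortest-path distance $d_G(z,z')$ in the digraph with all phases set to zero.

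The same argument applied to the pair $(u,v)$ with target phase $\theta(uv)$ gives $d_C^{\theta(uv)}(u,v)=d_G(u,v)$, so the denominators agree as well.

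\textbf{Degenerate pairs.} Care is needed when $z=z'$, which happens if $u$ and $v$ have a common in/out-neighbour, or in the degenerate cases where the support is $\{u\}$ or $\{v\}$. In that case the trivial walk has phase $0$, so condition (ii) forces $\theta(zuvz)=0$; condition (i) gives the same conclusion, since $zuvz$ is then a directed cycle. So the relevant distance is $d_C^0(z,z)=0$, matching $d_G(z,z)=0$. The same reasoning shows that no phase-mismatched, $+\infty$ distances can arise anywhere.

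\textbf{Main obstacle.} The step I expect to need the most care is this bookkeeping of the degenerate cases and the convention for $u=v$. One must verify that the case $\varphi\neq0$ with $z=z'$ never occurs under balance, and that the reduction from walks to paths is valid even in the degenerate settings. With those checked, the transport problems \eqref{equ:W1-complex} and \eqref{equ:W1} are identical, and the equality of \eqref{eq:orc-complex} with the directed ORC \eqref{eq:orc} follows immediately.
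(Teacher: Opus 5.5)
Your proposal is correct and follows the same route as the paper's proof. The paper argues that the measures do not involve phases, and that condition (ii) of directed structural balance forces every walk from $u'$ to $v'$ to share the phase of $u'uvv'$, so the shortest-walk distance reduces to the directed shortest-path distance. You are somewhat more careful than the paper: you treat the denominator $d_C^{\theta(uv)}(u,v)$, the walk-to-path reduction and the case $z=z'$ explicitly. One small correction: the degenerate supports $\{u\}$ and $\{v\}$ can never produce $z=z'$, since that would require the edge $\overrightarrow{vu}$ and hence contradict $d_{\text{in}}(u)=0$ or $d_{\text{out}}(v)=0$. So only common in-neighbours of $u$ that are out-neighbours of $v$ matter there, and your argument already covers them.
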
 
\begin{proof}[Proof sketch]
    We can show it through the definitions of ORC for complex weights in Definition \ref{def:orc-complex} and of directed structural balance in Definition \ref{def:balance-direct}.
\end{proof}

\subsection{Combinatorial bounds}
Jost and Liu~\citep{jurgen_jost_olliviers_2013} bound Ollivier-Ricci curvature on unweighted undirected graphs by constructing explicit transport plans between the two one-step neighborhood measures. The key observation is that local cycles create shortcuts: common neighbors can be matched at zero cost and paths along triangles reduce the cost of moving the remaining mass. We adapt this perspective to the directed and complex-weighted ORC notions introduced above. Throughout this section, we restrict attention to local structures induced by cycles of actual length $3$. In the directed case, these are the cyclic triangles of effective length $3$ and the acyclic triangles of effective length $1$.

\subsubsection{Combinatorial curvature bounds on directed graphs}

We first consider the unweighted directed setting of Section~\ref{sec:curvature}. Fix a directed edge $\overrightarrow{uv}\in E$. The curvature is computed by transporting the uniform measure on the in-neighbors $X$ of $u$ to the uniform measure on the out-neighbors $Y$ of $v$, and since $d_G(u,v)=1$, we have
\[
    \kappa(u,v)=1-W_1(m_{in,u},m_{out,v}).
\]
In the following, we assume  that $a:=d^{in}(u)>0$ and $b:=d^{out}(v)>0$.
The cyclic directed triangles (effective length $3$) around $\overrightarrow{uv}$ are formed with the vertices
$C_3(u,v)=X \cap Y$; let $c=|C_3(u,v)|$. There are two types of acyclic triangle, each with effective length $1$, formed by
\[
    A^-(u,v)=\{x\in X\setminus C_3(u,v):\overrightarrow{xv}\in E\}
\]
and
\[
    A^+(u,v)=\{y\in Y\setminus C_3(u,v):\overrightarrow{uy}\in E\}.
\]
Let $p=|A^-(u,v)|$ and $q=|A^+(u,v)|$.

\begin{theorem}[Directed three-cycle curvature bounds]
    Let $Z=\frac{c}{a\vee b}$ and $M=\min\left\{1-Z,\frac{p}{a}+\frac{q}{b}\right\}$,
    where $c,p,q$ are the local three-cycle counts defined above and $a\vee b := \max\{a,b\}$. Then
    \[
        -2+3Z+M \leq \kappa(u,v)\leq Z.
    \]
    \label{thm:directed-combinatorial-bound}
\end{theorem}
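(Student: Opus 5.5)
The upper bound comes from a lower bound on $W_1(m_{in,u},m_{out,v})$ that holds for every coupling. The lower bound comes from one explicit coupling built in the spirit of Jost and Liu~\citep{jurgen_jost_olliviers_2013}. Throughout, $\kappa(u,v)=1-W_1$ because $d_G(u,v)=1$. The measures are $m_{in,u}=\frac1a\mathbf 1_X$ and $m_{out,v}=\frac1b\mathbf 1_Y$. The cost is the directed distance $d(x,y)$ from $x\in X$ to $y\in Y$, which is a nonnegative integer and vanishes only when $x=y$.

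\emph{Upper bound $\kappa\le Z$.} Take any coupling $m\in\Gamma(m_{in,u},m_{out,v})$. Every off-diagonal pair costs at least $1$, so the cost of $m$ is at least $1-\sum_{w}m(w,w)$. A diagonal pair $(w,w)$ can carry mass only if $w\in X\cap Y=C_3(u,v)$. Its mass is then at most $\min\{1/a,1/b\}=1/(a\vee b)$. Hence $\sum_w m(w,w)\le c/(a\vee b)=Z$, so $W_1\ge 1-Z$ and $\kappa(u,v)\le Z$.

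\emph{Lower bound.} We must show $W_1\le 3-3Z-M$, and we do this with a coupling built in two stages.

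First, at each $w\in C_3(u,v)$ we leave mass $1/(a\vee b)$ in place, at zero cost. This transports mass $Z$. What remains are sub-measures $\mu'$ on $X$ and $\nu'$ on $Y$, each of total mass $1-Z$.

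Second, we record the available distance bounds for the remaining pairs:
\begin{itemize}
\item For every $x\in X$ and $y\in Y$, the walk $x\,u\,v\,y$ gives $d(x,y)\le 3$.
\item If $x\in A^-(u,v)$, the walk $x\,v\,y$ gives $d(x,y)\le 2$.
\item If $y\in A^+(u,v)$, the walk $x\,u\,y$ gives $d(x,y)\le 2$.
\end{itemize}
Since $A^-\subseteq X\setminus C_3$, all of its mass $\alpha:=p/a$ survives in $\mu'$. Likewise $A^+\subseteq Y\setminus C_3$ retains its full mass $\beta:=q/b$ in $\nu'$. Any coupling $\pi$ of $\mu'$ and $\nu'$ therefore costs at most $3(1-Z)-s$, where
\[
s=\pi\bigl((A^-\times Y)\cup(X\times A^+)\bigr)=\alpha+\beta-\pi(A^-\times A^+)
\]
is the mass that uses a shortcut.

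It remains to choose $\pi$ with small overlap. We want $\pi(A^-\times A^+)=\max\{0,\alpha+\beta-(1-Z)\}$. The plan is as follows:
\begin{itemize}
\item Send $A^-$-mass to $Y\setminus A^+$ and $A^+$-mass from $X\setminus A^-$ as far as the marginals allow, so that $A^-$ only meets $A^+$ after the non-shortcut mass is exhausted.
\item Couple the leftovers arbitrarily, for instance by the product coupling.
\item Verify feasibility: the $\nu'$-mass of $Y\setminus A^+$ is $(1-Z)-\beta$, so at most that much of $A^-$ can avoid $A^+$. The forced overlap is therefore $(\alpha+\beta-(1-Z))^+$, and a greedy north-west-corner construction attains it.
\end{itemize}
This gives $s=\min\{\alpha+\beta,\,1-Z\}=M$, hence $W_1\le 3(1-Z)-M$ and $\kappa(u,v)\ge -2+3Z+M$.

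\emph{Main obstacle.} The delicate step is this last coupling construction. One must check three things:
\begin{itemize}
\item the residual diagonal mass at the $w\in C_3$ (present on one side only when $a\neq b$) does not interfere, since those $w$ lie in neither $A^-$ nor $A^+$;
\item a pair in $A^-\times A^+$, although it admits the shortcut, is counted only once in $s$;
\item the overlap bound is attained and not merely upper-bounded.
\end{itemize}
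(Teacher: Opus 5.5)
Your proposal is correct and follows essentially the same route as the paper. The upper bound $\kappa\le Z$ uses the zero-cost overlap argument on $C_3(u,v)$. The lower bound uses an explicit plan that matches $C_3(u,v)$ in place, routes mass through the length-$2$ shortcuts $x\to v\to y$ and $x\to u\to y$, and sends the remainder along $x\to u\to v\to y$. Your one addition is the inclusion--exclusion and aggregated $2\times 2$ (Fr\'echet-bound) argument, which shows that the shortcut rows and columns can carry mass exactly $\min\{1-Z,\,p/a+q/b\}$; the paper only asserts this step.
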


\begin{proof}[Proof sketch]
    The proof is an adaptation of the argument in Jost-Liu~\cite{jurgen_jost_olliviers_2013}. The upper bound follows from the same zero-cost overlap argument, the lower bound from an explicit transport plan that leverages the cycle-induced shortcuts. The shortcut mass is coupled through the source mass on $A^-(u,v)$ and the target mass on $A^+(u,v)$. We present a detailed proof in Apx.~\ref{apx.proofs}.
\end{proof}

\begin{remark} We discuss two special cases in which the combinatorial bounds are attained.
    \begin{enumerate}
        \item Consider an isolated cyclic triangle. Then $a=b=c=1, p=q=0$ for every edge $\overrightarrow{uv}$, which implies $Z=1$ and $M=0$ and hence $1\leq \kappa(u,v)\leq 1$. Indeed, the two neighborhood measures are supported on the same neighborhood, so $W_1=0$ and $\kappa(u,v)=1$.
    \item Consider a directed edge $\overrightarrow{uv}$ without directed shortcuts, i.e., $c=p=q=0$. Then $Z=M=0$ and the lower bound gives $\kappa(u,v)\geq -2$. If, in addition, every residual source-target pair has shortest directed distance exactly $3$, moving every unit of mass costs exactly $3$, so $W_1=3$ and $\kappa(u,v)=-2$; i.e., the lower bound is attained. 
    \end{enumerate}
\end{remark}

\subsubsection{Combinatorial curvature bounds on complex-weighted graphs}

We now derive analogous bounds for complex-weighted graphs. The main difference is that a triangle contributes only when the shortcut has the same accumulated phase as the reference walk through $\overrightarrow{uv}$. Moreover, the overlap and shortcut terms are weighted by the neighborhood measures induced by the magnitudes of the complex edge weights.

Fix $\overrightarrow{uv}\in E$ with $W_{uv}=r_{uv}e^{\iu\varphi_{uv}}$ and $r_{uv}>0$, and set $\delta_{uv}:=d_C^{\varphi_{uv}}(u,v)$. Let again
$X,Y$ denote the neighborhoods of $u,v$, and write $d_{\text{in}}(u)=\sum_{x\in X}r_{xu}$ and $d_{\text{out}}(v)=\sum_{y\in Y}r_{vy}$; we assume again that
$d_{\text{in}}(u),d_{\text{out}}(v)>0$. The neighborhood measures are given by
\[
    m_{\text{in},u}(x)=\frac{r_{xu}}{d_{\text{in}}(u)},\qquad x\in X,\qquad
    m_{\text{out},v}(y)=\frac{r_{vy}}{d_{\text{out}}(v)},\qquad y\in Y.
\]

The phase-compatible cyclic overlap is
\[
    C_0(u,v)=
    \left\{
    z\in X\cap Y:
    \varphi_{zu}+\varphi_{uv}+\varphi_{vz}\equiv 0 \mod 2\pi
    \right\}.
\]
Only vertices in $C_0(u,v)$ can be matched to themselves at zero cost under the convention for $d_C^\varphi(z,z)$. Define the weighted overlap
\[
    Z=\sum_{z\in C_0(u,v)}\min\{m_{\text{in},u}(z),m_{\text{out},v}(z)\}.
\]
After matching this mass at zero cost, let $\widetilde{m}_{\text{in},u}$ and $\widetilde{m}_{\text{out},v}$ denote the residual source and target measures.

The phase-compatible acyclic shortcut sets are
\[
    A^-(u,v)=
    \left\{
    x\in X\setminus C_0(u,v):\overrightarrow{xv}\in E,\ 
    \varphi_{xv}\equiv \varphi_{xu}+\varphi_{uv}\mod 2\pi
    \right\}
\]
and
\[
    A^+(u,v)=
    \left\{
    y\in Y\setminus C_0(u,v):\overrightarrow{uy}\in E,\ 
    \varphi_{uy}\equiv \varphi_{uv}+\varphi_{vy}\mod 2\pi
    \right\}.
\]
Set
\[
    P=\sum_{x\in A^-(u,v)}\widetilde{m}_{\text{in},u}(x),\qquad
    Q=\sum_{y\in A^+(u,v)}\widetilde{m}_{\text{out},v}(y),\qquad
    M=\min\{1-Z,P+Q\}.
\]
Finally, define the baseline cost
\[
    L_3=\max_{x\in X,\ y\in Y}(r_{xu}+r_{uv}+r_{vy}).
\]
For pairs with $x\in A^-(u,v)$ or $y\in A^+(u,v)$, define
\[
    \ell_2(x,y)=
    \begin{cases}
        \min\{r_{xv}+r_{vy},r_{xu}+r_{uy}\}, & x\in A^-(u,v),\ y\in A^+(u,v),\\
        r_{xv}+r_{vy}, & x\in A^-(u,v),\ y\notin A^+(u,v),\\
        r_{xu}+r_{uy}, & x\notin A^-(u,v),\ y\in A^+(u,v),
    \end{cases}
\]
and let $L_2$ be the maximum of $\ell_2(x,y)$ over such pairs. If no such pair exists, the term involving $L_2$ is ignored. We also put $\overline{L}_2=\min\{L_2,L_3\}$ and
\[
    \ell_+=
    \min\left\{
    d_C^{\theta(xuvy)}(x,y):
    x\in X,\ y\in Y,\ d_C^{\theta(xuvy)}(x,y)>0
    \right\},
\]
with the convention that the term $\ell_+(1-Z)$ is $0$ if $Z=1$.

\begin{theorem}[Complex-weighted three-cycle curvature bounds]
    With the notation above,
    \[
        1-\frac{M\overline{L}_2+(1-Z-M)L_3}{\delta_{uv}}
        \leq
        \kappa(u,v)
        \leq
        1-\frac{\ell_+(1-Z)}{\delta_{uv}}.
\]
    In particular, if $\ell_+\geq \delta_{uv}$, then $\kappa(u,v)\leq Z$.
    \label{thm:complex-combinatorial-bound}
\end{theorem}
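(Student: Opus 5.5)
Since $\kappa(u,v)=1-W_1(m_{\text{in},u},m_{\text{out},v})/\delta_{uv}$ with $\delta_{uv}>0$, both inequalities reduce to bounds on the transport cost $W_1$ with ground cost $c(x,y):=d_C^{\theta(xuvy)}(x,y)$ on $X\times Y$. The lower bound on $\kappa$ is equivalent to $W_1\le M\overline{L}_2+(1-Z-M)L_3$, which we prove by exhibiting one coupling. The upper bound is equivalent to $W_1\ge \ell_+(1-Z)$, which holds for every coupling. We follow the Jost--Liu scheme, as in Theorem~\ref{thm:directed-combinatorial-bound}, and keep track of phases through Definition~\ref{def:orc-swd}.

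\textbf{Upper bound.} Let $m$ be any coupling. Every pair $(x,y)$ with $c(x,y)=0$ must have $x=y=z$ and zero phase constraint, because all other finite distances are positive. The convention $d_C^0(z,z)=0$ then forces $\theta(zuvz)\equiv 0$, i.e.\ $z\in C_0(u,v)$.
\begin{itemize}
\item The mass $m$ places on the diagonal $\{(z,z):z\in C_0(u,v)\}$ is at most $\sum_{z\in C_0}\min\{m_{\text{in},u}(z),m_{\text{out},v}(z)\}=Z$.
\item The remaining mass, at least $1-Z$, sits on pairs with $c(x,y)>0$, each of which costs at least $\ell_+$ by definition.
\end{itemize}
Hence $W_1\ge \ell_+(1-Z)$. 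When $Z=1$ the term vanishes by the stated convention. The particular case follows because $\ell_+\ge\delta_{uv}$ gives $1-\ell_+(1-Z)/\delta_{uv}\le Z$.

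\textbf{Lower bound.} We build the coupling in three stages.
\begin{enumerate}
\item Match the overlap mass $\min\{m_{\text{in},u}(z),m_{\text{out},v}(z)\}$ at each $z\in C_0$ to itself at cost zero. This leaves residual measures $\widetilde m_{\text{in},u}$ and $\widetilde m_{\text{out},v}$, each of total mass $1-Z$.
\item Transport an amount $M=\min\{1-Z,P+Q\}$ of residual mass so that every transported unit has either its source in $A^-(u,v)$ or its target in $A^+(u,v)$. To do this, first send source mass from $A^-$ (at most $P$) to arbitrary residual targets, then send arbitrary residual source mass into targets in $A^+$ (up to $Q$), taking care not to double count, until $M$ units are moved.
\item Couple the remaining $1-Z-M$ units arbitrarily, e.g.\ by a product coupling of the residuals.
\end{enumerate}

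\textbf{Cost of stage 3.} Every pair has $c(x,y)\le r_{xu}+r_{uv}+r_{vy}\le L_3$, since the walk $xuvy$ is feasible for its own phase.

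\textbf{Cost of stage 2.} If $x\in A^-$, then $\varphi_{xv}\equiv\varphi_{xu}+\varphi_{uv}$. So the walk $x\to v\to y$ has phase $\varphi_{xv}+\varphi_{vy}\equiv\theta(xuvy)$, which is feasible and gives $c(x,y)\le r_{xv}+r_{vy}$. Symmetrically, if $y\in A^+$, the walk $x\to u\to y$ is feasible and gives $c(x,y)\le r_{xu}+r_{uy}$. Taking the better option yields $c(x,y)\le \ell_2(x,y)\le L_2$, and combined with the stage-3 bound, $c(x,y)\le\overline{L}_2$.

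Summing the three stages gives $W_1\le M\overline{L}_2+(1-Z-M)L_3$, and dividing by $\delta_{uv}$ yields the claimed lower bound on $\kappa$.

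\textbf{Expected obstacle.} The delicate step is the bookkeeping in stage 2: showing that a feasible residual coupling exists which moves exactly $M$ units through shortcut pairs. This is where the $\min$ with $1-Z$ matters. We handle it greedily. Route $\min\{P,1-Z\}$ of $A^-$-source mass first. If $P<1-Z$, route an additional $\min\{Q,1-Z-P\}$ of target mass in $A^+$ from non-$A^-$ sources; enough such source mass exists because the non-$A^-$ residual source mass is $1-Z-P$. Whatever mass is left is completed arbitrarily. Each step respects the residual marginals, so the result is a valid coupling.
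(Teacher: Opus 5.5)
Your proof follows the paper's argument. The upper bound comes from the zero-cost overlap, restricted to $C_0(u,v)$, together with the floor $\ell_+$ on every other pair. The lower bound comes from the three-stage plan using the phase-compatible shortcuts $x\to v\to y$ and $x\to u\to y$ and the reference walk $xuvy$. There is one slip, in the stage-2 bookkeeping you single out as delicate; the paper itself only asserts that the subcoupling of mass $M$ exists.

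\textbf{The gap.} You send the $A^-(u,v)$ source mass to \emph{arbitrary} residual targets. You then check only that enough non-$A^-$ source mass exists to fill $\min\{Q,1-Z-P\}$ of $A^+$ target mass. If the first step has already used up $A^+$ target mass, that amount is no longer available on the target side.

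\textbf{A concrete failure.} Take $Z=0$ and $P=Q=0.2$. Sending all $A^-$ mass into $A^+$ targets leaves nothing for your second step. Only $0.2<M=0.4$ units then travel on shortcut pairs, and the claimed bound on $W_1$ does not follow from your plan.

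\textbf{The fix.} Order the routing as follows.
\begin{enumerate}
\item Route $A^-$ mass first to residual targets outside $A^+$, which carry total mass $1-Z-Q$. This moves $\min\{P,1-Z-Q\}$.
\item Route non-$A^-$ residual sources into $A^+$. This moves $\min\{Q,1-Z-P\}$.
\item If $P+Q>1-Z$, pair the leftover $A^-$ source mass with the leftover $A^+$ target mass. Both equal $P+Q-(1-Z)$.
\end{enumerate}
The total moved is $P+Q$ when $P+Q\le 1-Z$, and $(1-Z-Q)+(1-Z-P)+(P+Q-(1-Z))=1-Z$ otherwise; that is, exactly $M$. Every pair used has $x\in A^-(u,v)$ or $y\in A^+(u,v)$, and the marginals are respected. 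The remaining $1-Z-M$ units are coupled arbitrarily at cost at most $L_3$.

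With this correction your proof is complete, and it is more explicit than the paper's on this step.
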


\begin{proof}[Proof sketch]
    The proof is a generalization of the arguments in the directed case and can be found in Apx.~\ref{apx.proofs}.
\end{proof}

\subsubsection{Combinatorial curvature approximation}
Finally, we leverage the derived bounds to define scalable combinatorial curvature approximations. We follow the approximation scheme proposed in~\citep{tian2025curvature}, which leverages the arithmetic mean of the upper and lower bounds.

\begin{definition}[Combinatorial curvature approximation]\label{def:comb-approx}
    Suppose that, for an edge $\overrightarrow{uv}$, we have computable bounds $\kappa_L(u,v)\leq \kappa(u,v)\leq \kappa_U(u,v)$.
    Then we define the combinatorial curvature approximation as
    \[
        \widehat{\kappa}_{\mathrm{comb}}(u,v)
        =
        \frac{1}{2}\left(\kappa_L(u,v)+\kappa_U(u,v)\right).
    \]
\end{definition}

In the directed setting, Theorem~\ref{thm:directed-combinatorial-bound} provides upper and lower bounds of the form
\[
    \kappa^{dir}_{L}(u,v)=-2+3Z+M,
    \qquad
    \kappa^{dir}_{U}(u,v)=Z \; ,
\]
i.e., $\widehat{\kappa}^{dir}(u,v)
    =-1+2Z+\frac{M}{2}$.
This quantity depends only on the in-degree of $u$, the out-degree of $v$, and the local effective-length of triangles that include $\overrightarrow{uv}$.

In the complex-weighted case, Theorem~\ref{thm:complex-combinatorial-bound} gives the bounds
\[
    \kappa^{cw}_{L}(u,v)
    =
    1-\frac{M\overline{L}_2+(1-Z-M)L_3}{\delta_{uv}},
    \qquad
    \kappa^{cw}_{U}(u,v)
    =
    1-\frac{\ell_+(1-Z)}{\delta_{uv}} \; .
\]
The resulting notion, albeit a more complicated one than in the directed case, reduces the computational cost of the curvature computation, as we see below.
When the positive phase-compatible distances satisfy $\ell_+\ge \delta_{uv}$, Theorem~\ref{thm:complex-combinatorial-bound} also gives the local upper bound $\kappa^{cw}_{U,loc}(u,v)=Z$. In that case, one may use the fully local approximation
\[
    \widehat{\kappa}^{cw}_{loc}(u,v)
    =
    \frac{1}{2}\left(\kappa^{cw}_{L}(u,v)+Z\right)
\]
as a more scalable proxy.

\begin{remark}[Computational complexity]
    The exact curvature computation requires solving an optimal transport problem together with the relevant directed or phase-constrained shortest-walk distances for every edge. The directed approximation $\widehat{\kappa}^{dir}$ replaces this by local adjacency queries for $C_3(u,v)$, $A^-(u,v)$, and $A^+(u,v)$, reducing complexity from cubic to linear in $(d_{\mathrm{in}}(u)+d_{\mathrm{out}}(v))$. 
    The same applies to the local complex-weighted approximation \(\widehat{\kappa}^{cw}_{loc}\). The nonlocal version is not generally linear, since computing \(\ell_+\) may require phase-constrained shortest-walk computations.
\end{remark}

\begin{figure}[tb]
    \centering
    \begin{tabular}{cc}
        \includegraphics[width=.35\textwidth]{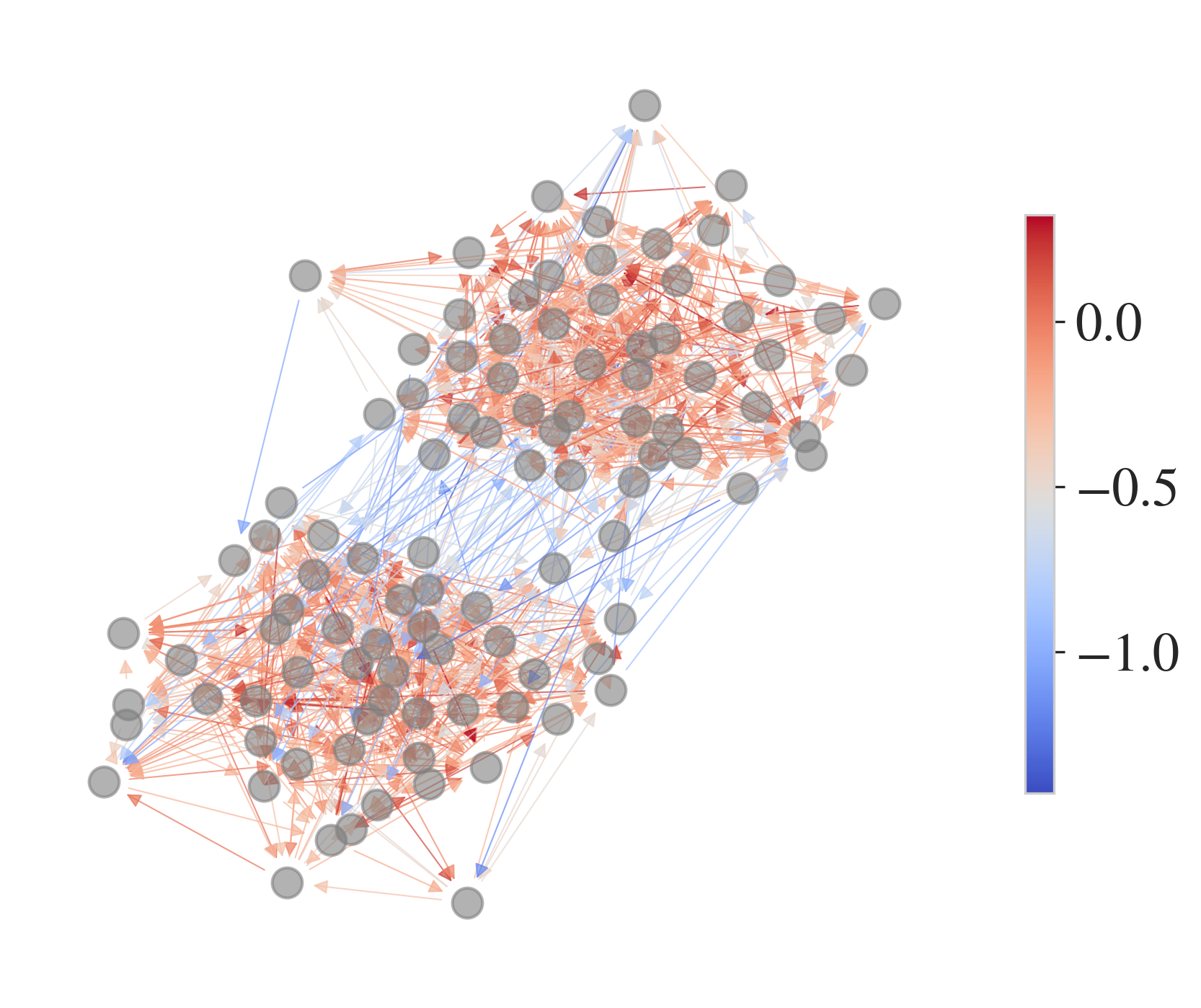} & \includegraphics[width=.35\textwidth]{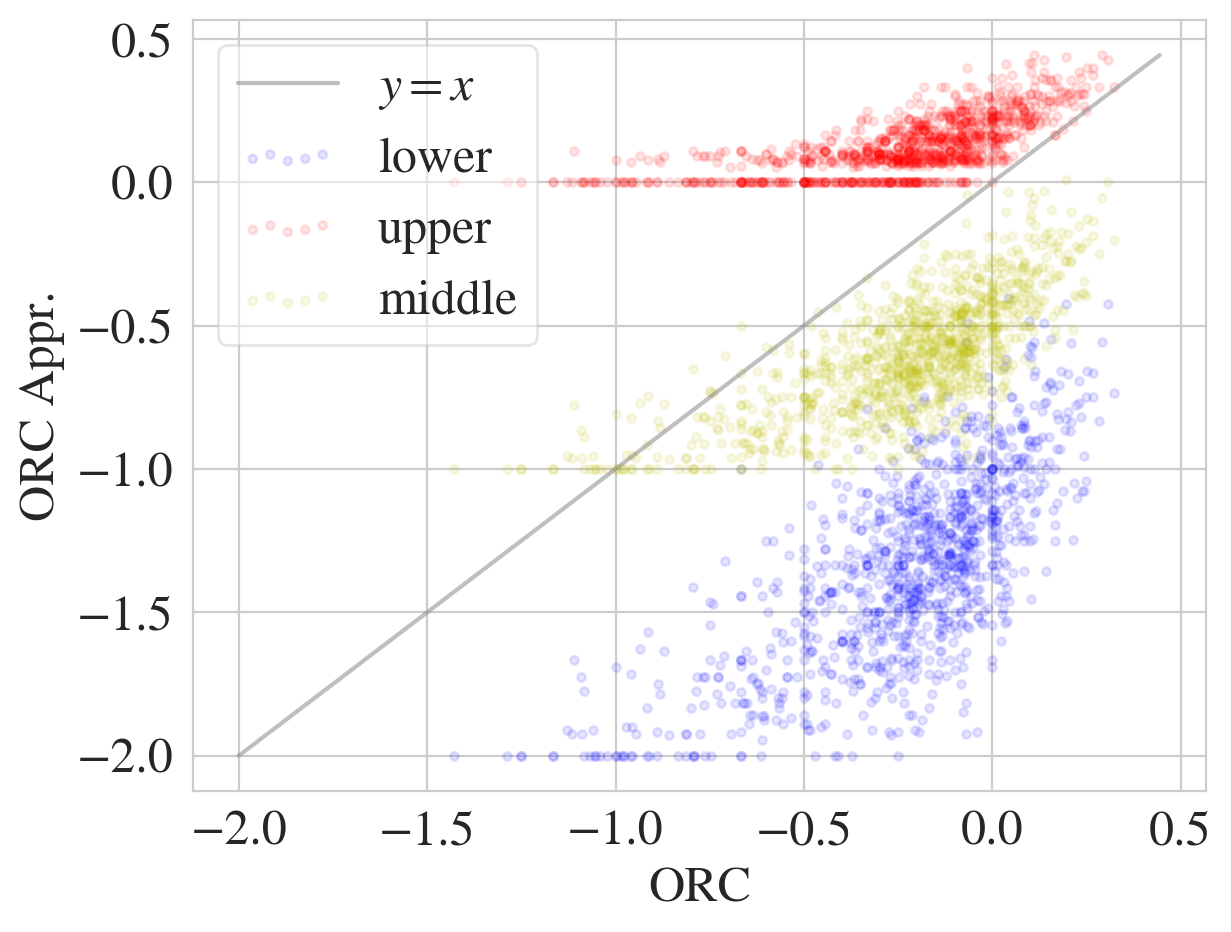} \\
        \includegraphics[width=.35\textwidth]{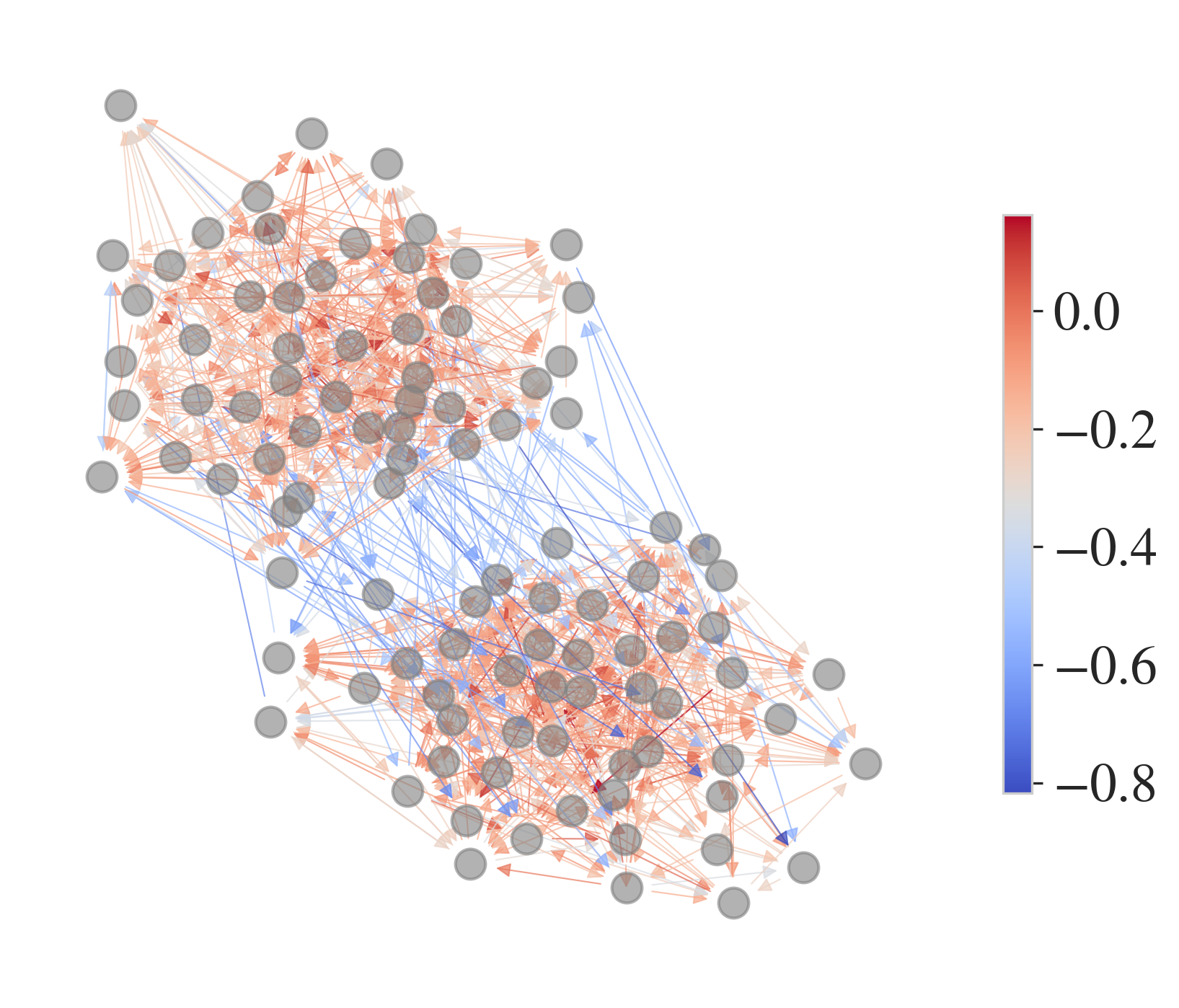} & \includegraphics[width=.35\textwidth]{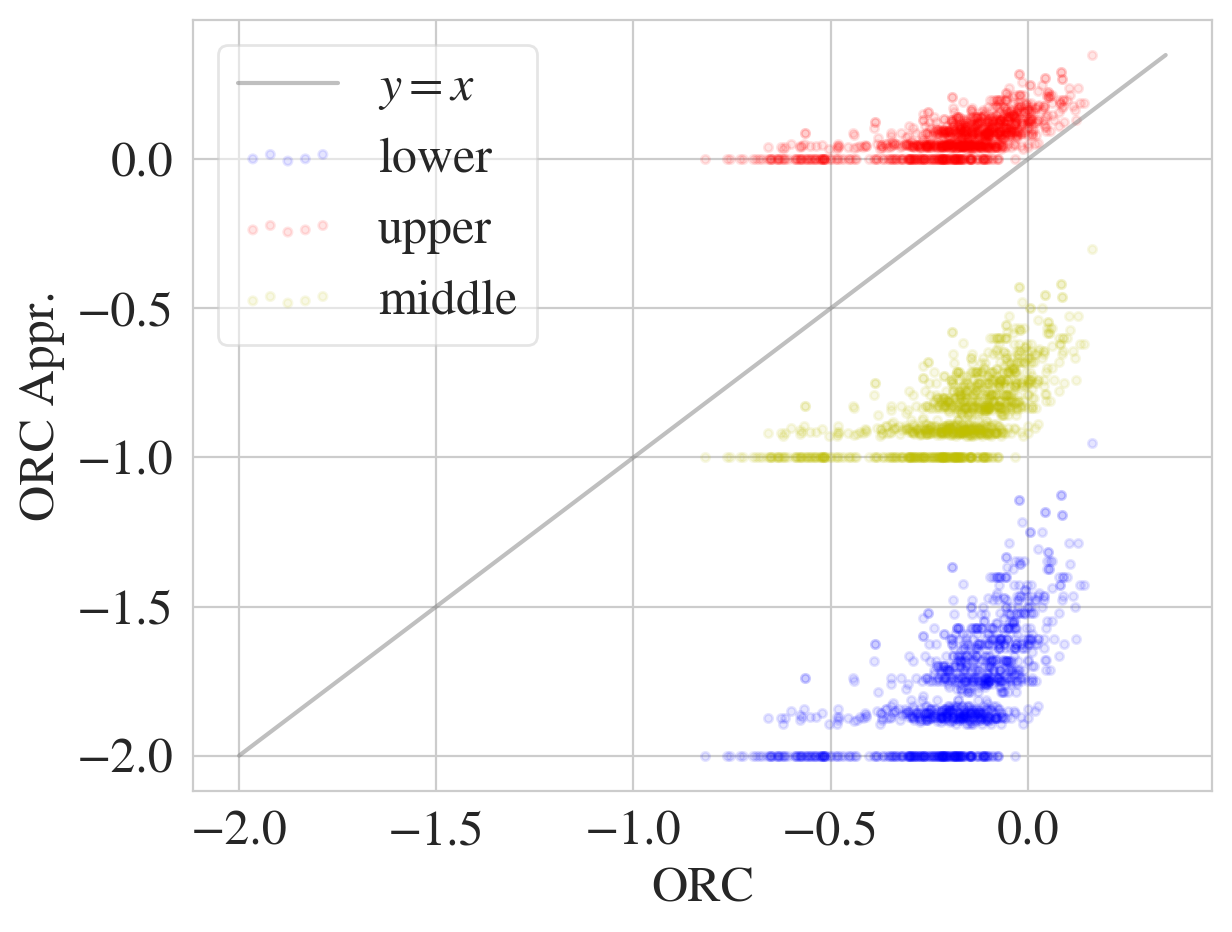}
    \end{tabular}
    \caption{Scatter plot of the exact directed (top) and complex-weighted (bottom) ORC and the approximation (green; upper bound in red and lower bound in blue), where the graph $G$ is obtained from the planted directed SBM of size $n = 100$, two equally-sized blocks, $p_{\mathrm{out}} = 0.02$ and $p_{\mathrm{in}} = 0.2$, with uniform weights $1$, and the complex weights are obtained from the magnetic Laplacian with parameter $\theta=2\pi/3$. $G$ is visualised in the left column, with the color indicating the exact ORC.}
    \label{fig:bounds-sbm}
\end{figure}

\section{Application in Community Detection}\label{sec:applications}

\subsection{Curvature gaps}

    



Following~\citep{gosztolai2021unfolding}, we
define the \emph{curvature gap} of a two-block model as the separation between the distributions of the curvature values on within-community edges and on between-community edges.
Formally, assume that the empirical curvature distribution can be modeled as a two-component Gaussian mixture,
$\left(\mathcal N(\mu_1,\sigma_1^2),\mathcal N(\mu_2,\sigma_2^2)\right)$. We measure the curvature gap as the Hellinger distance between these distributions:
\begin{equation}
\Delta \kappa = \sqrt{1-\sqrt{\frac{2\sigma_1\sigma_2}{\sigma_1^2+\sigma_2^2}}\
e^{-\frac{(\mu_1-\mu_2)^2}{4(\sigma_1^2+\sigma_2^2)}}}.
\end{equation}
The larger this gap, the more reliably an edge's curvature identifies it as internal or inter-community, which is directly linked to the effectiveness of curvature-based community detection; see Figures \ref{fig:sbm-dirtri} and \ref{fig:sbm-dirtri-reci} for examples. 

The behavior of the gap across the undirected, directed and complex-weighted curvatures is governed by the role of triangles (Proposition~\ref{pro:effect-cycle-len}). The difference in the resulting curvature gaps can be partially understood from the distinct ways in which the three notions incorporate triangle information: directed ORC only takes into account triangles whose orientations are compatible with the flow imposed by the edge under consideration, while the complex-weighted ORC for the magnetic Laplacian, triangles with all orientations will make some contributions, although one may take a more significant role than others; see Remark \ref{rem:triangle}.

\begin{figure}[htbp]
    \centering
    \hspace*{-1em}
    \begin{tabular}{ccc}
        \includegraphics[width=.32\textwidth]{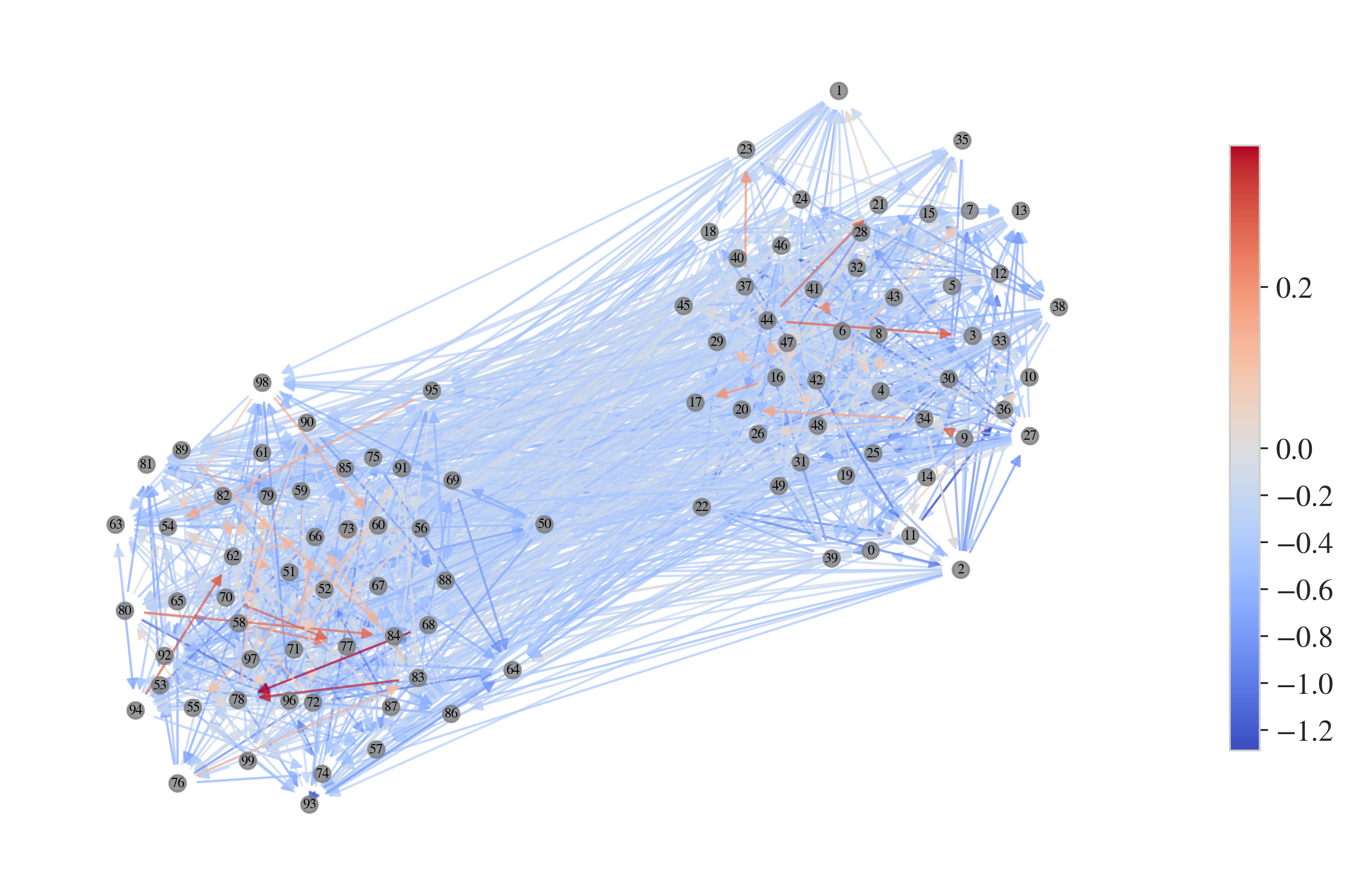} & \includegraphics[width=.32\textwidth]{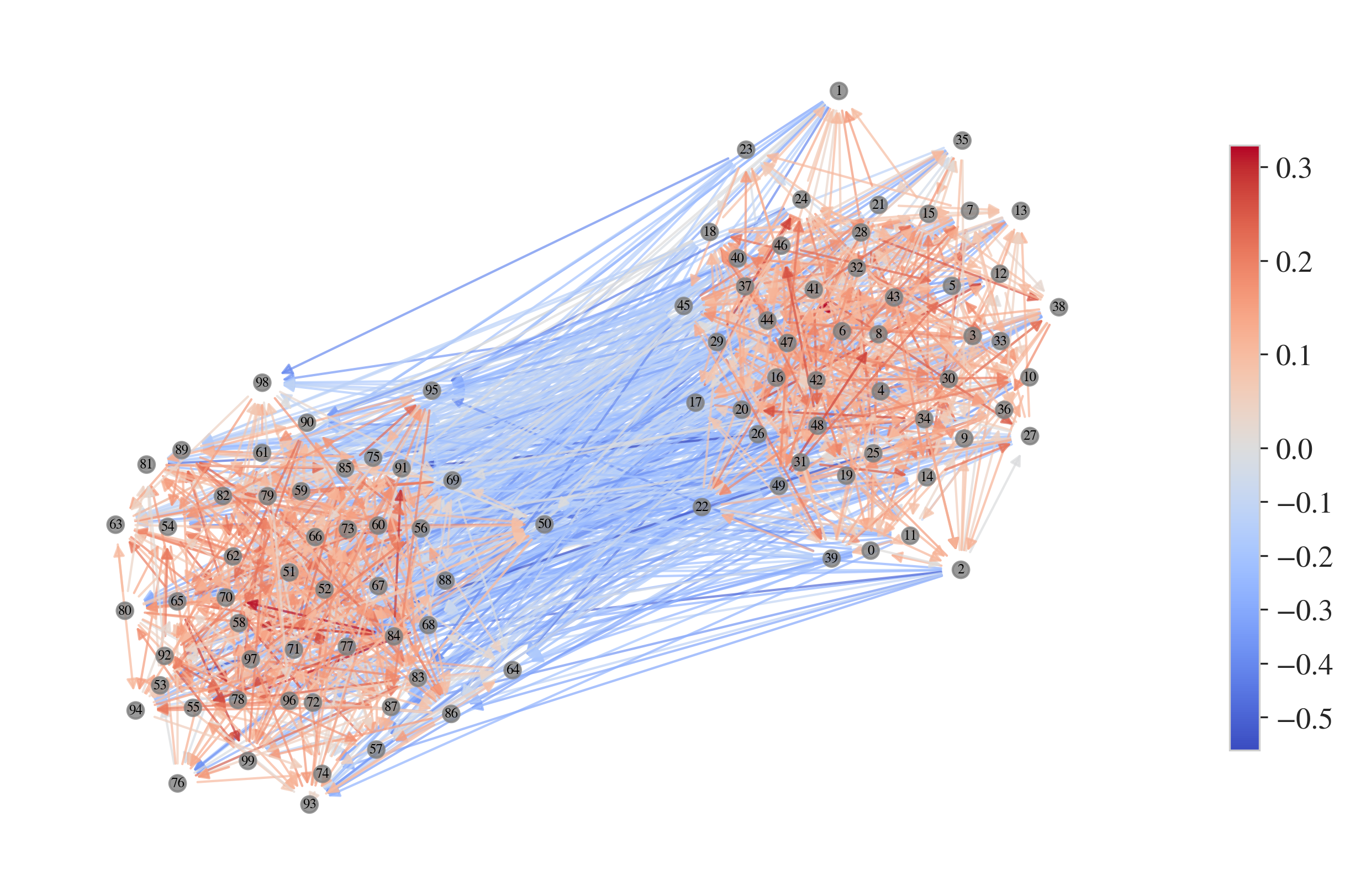} & \includegraphics[width=.32\textwidth]{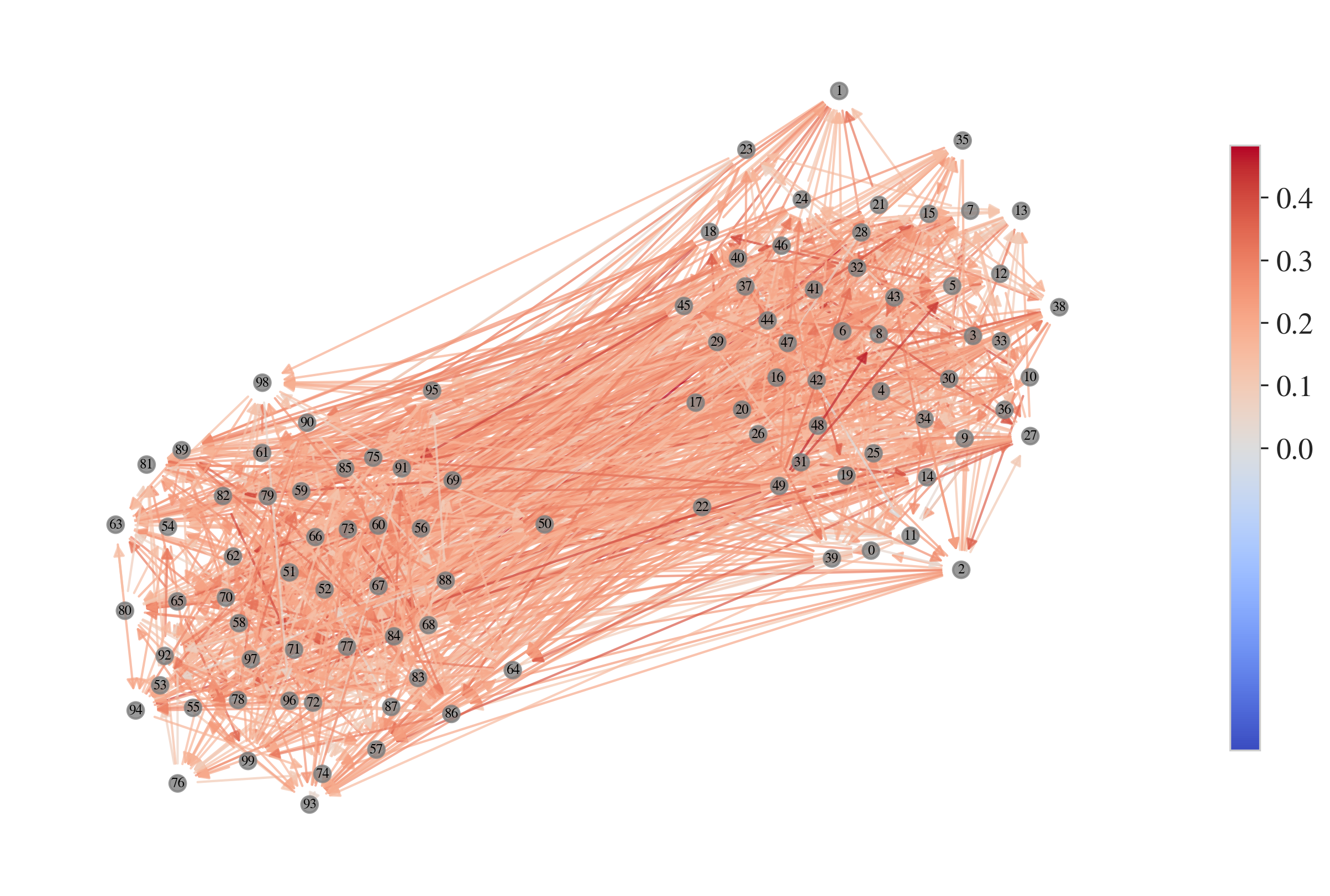} \\
        \includegraphics[width=.3\textwidth]{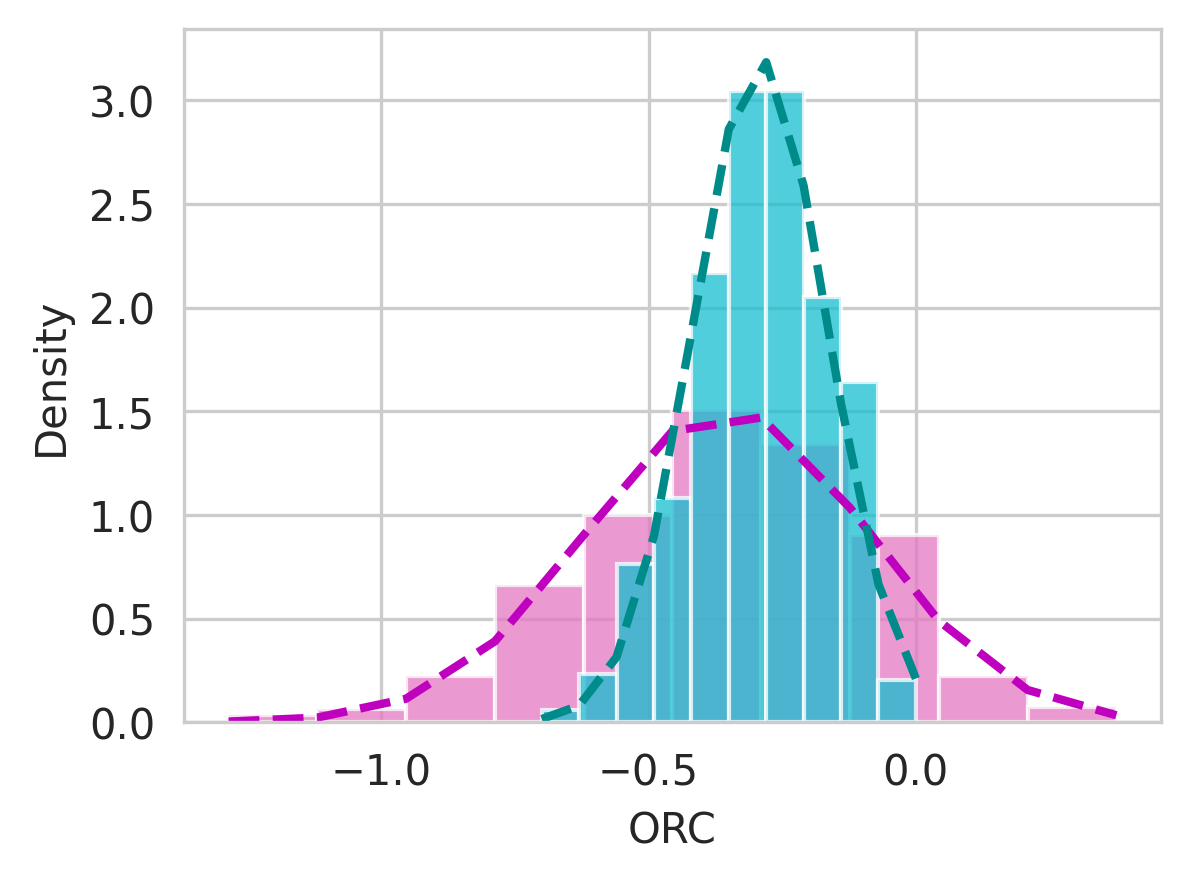} & \includegraphics[width=.3\textwidth]{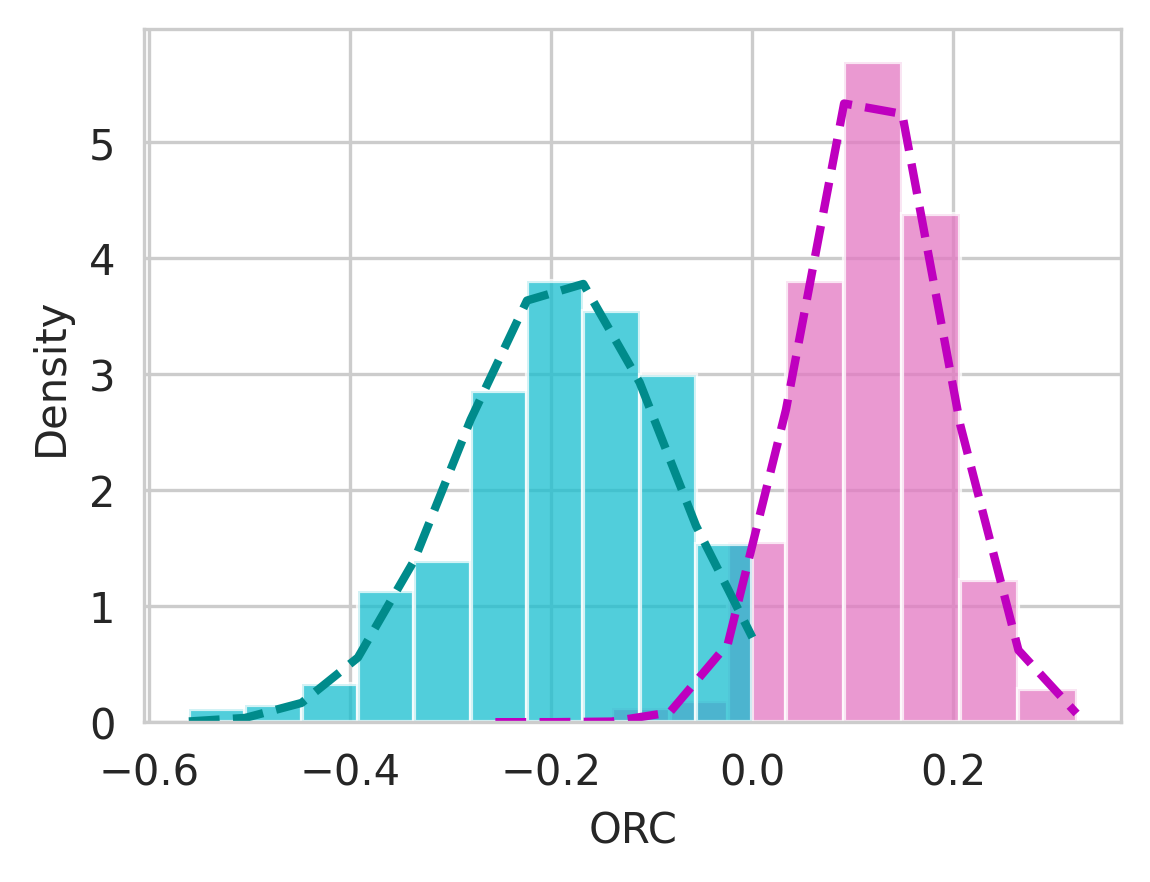} & \includegraphics[width=.3\textwidth]{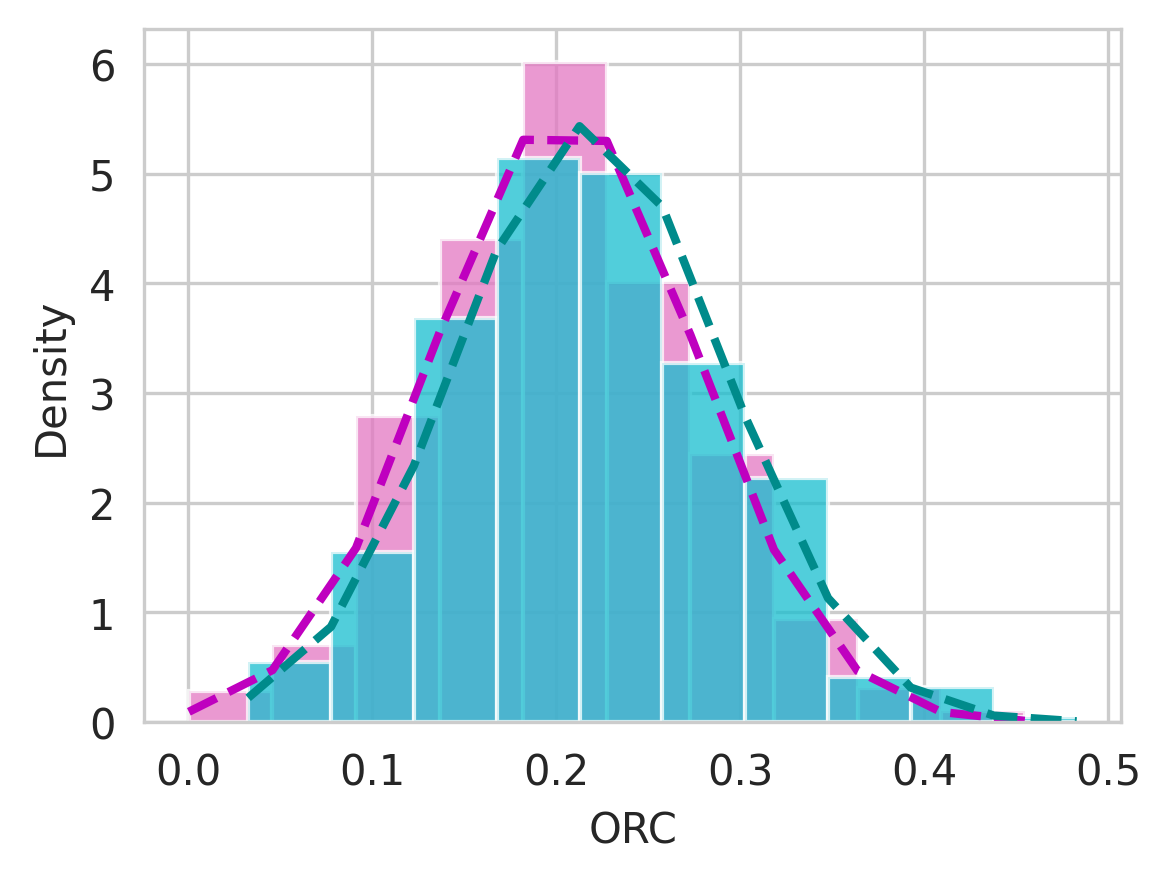}
    \end{tabular}
    \caption{(Top) Visualization of the two-block directed SBMs with size $100$, $p_{\text{in}} = 0.33$ and $p_{\text{out}} = 0.2$, where edges are colored by the values of directed ORC (left), complex-weighted ORC for the magnetic Laplacian with $\theta = 2\pi/3$ (middle) and undirected ORC (right); (Bottom) Distributions of the corresponding ORC values, where within-community edges (pink) are directed (to have as many directed triangles as possible) and between-community edges (cyan) are from community 1 to 2. Curvature gaps measured by Hellinger distance are $0.130, 0.791, 0.04$ from left to right.}
    \label{fig:sbm-dirtri}
\end{figure}

\begin{figure}[htbp]
    \centering
    \hspace*{-1em}
    \begin{tabular}{ccc}
        \includegraphics[width=.32\textwidth]{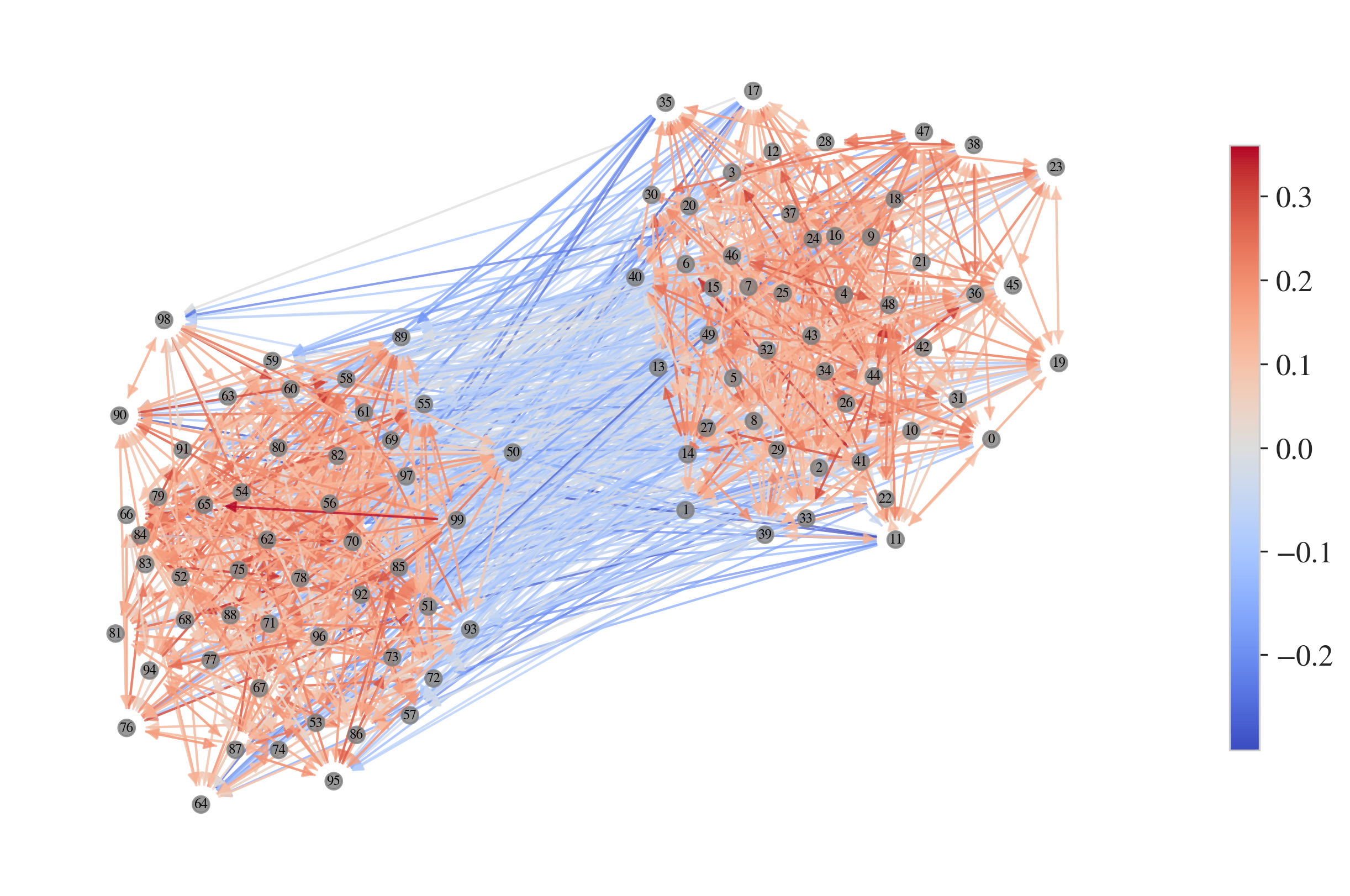} & \includegraphics[width=.32\textwidth]{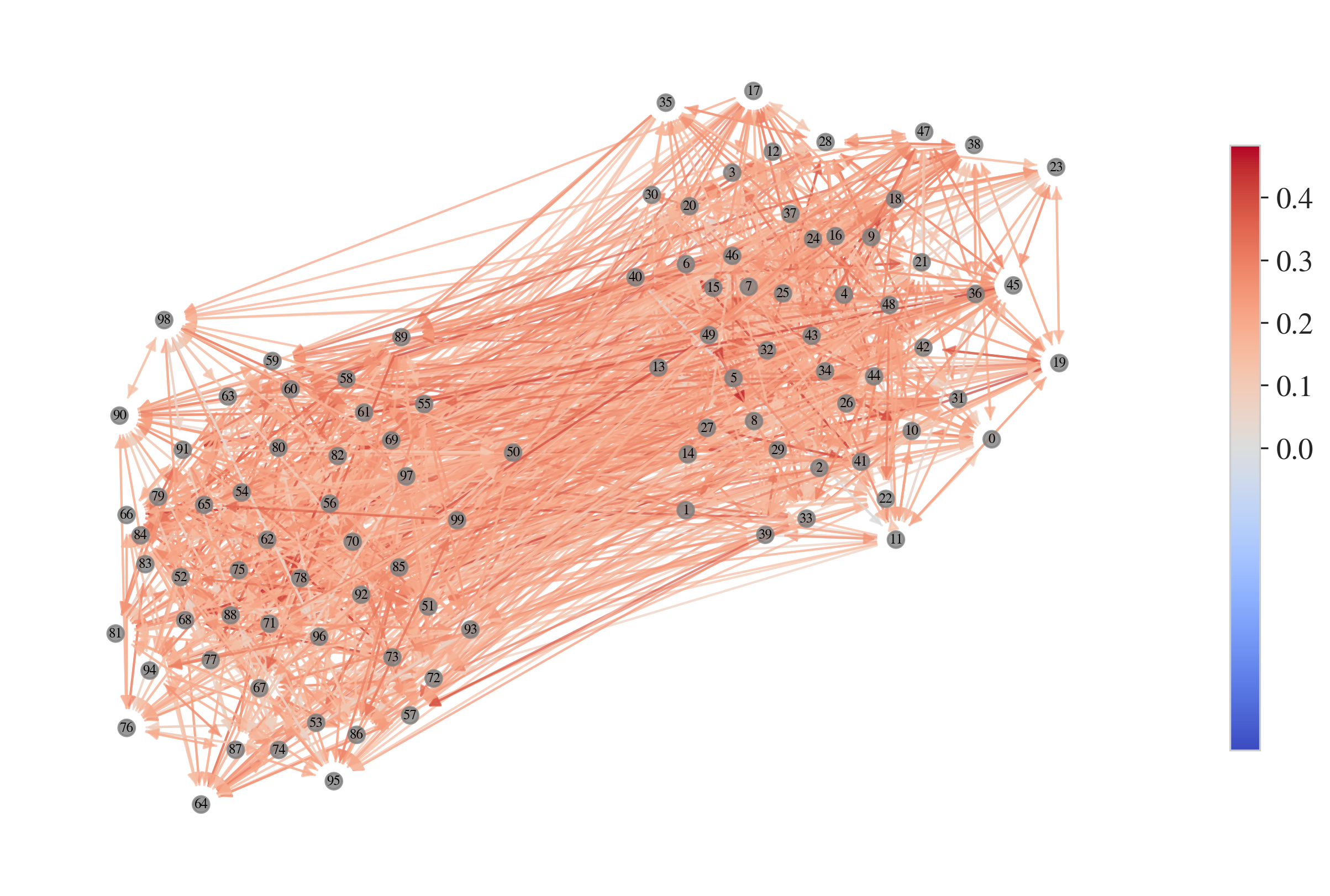} & \includegraphics[width=.32\textwidth]{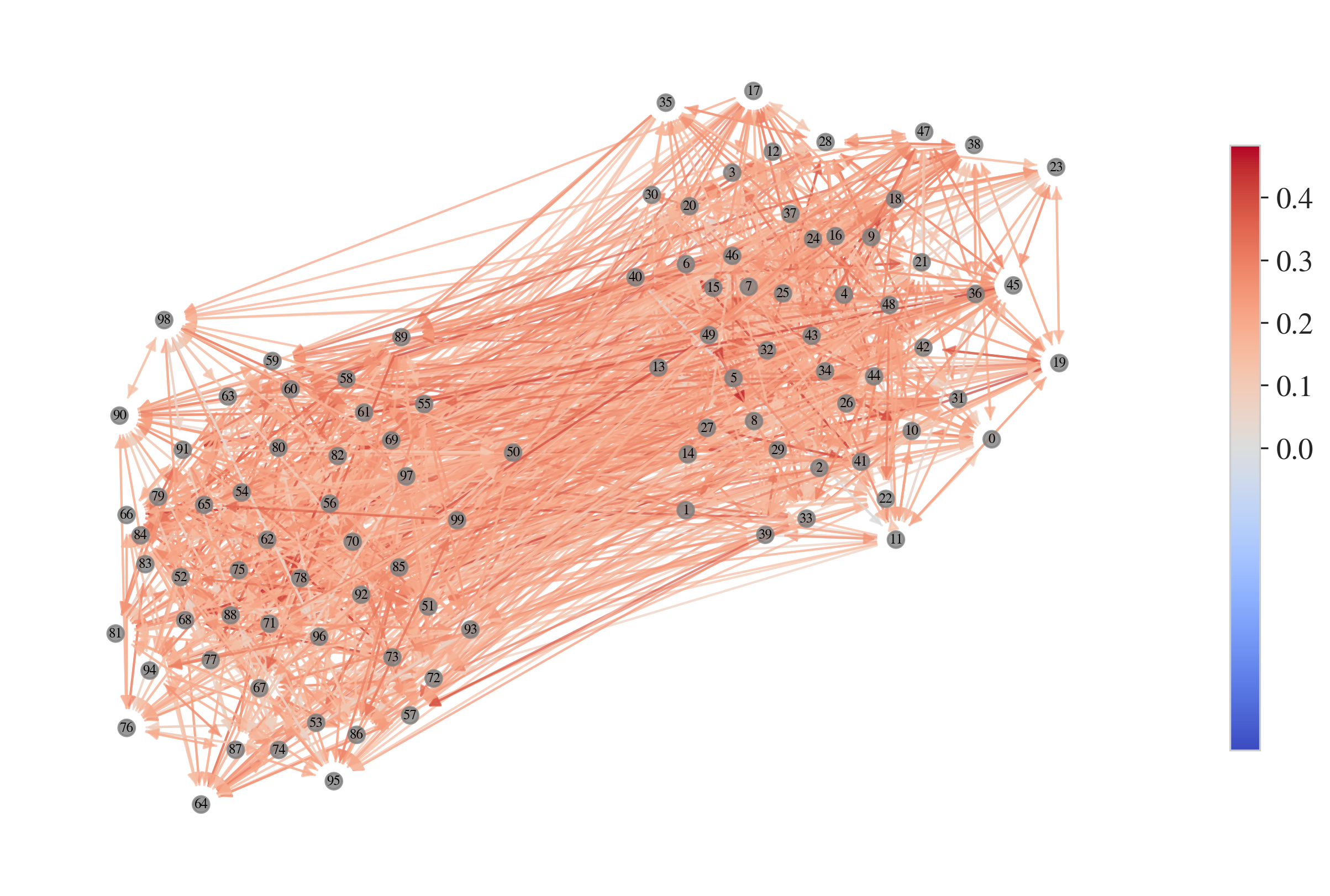} \\
        \includegraphics[width=.3\textwidth]{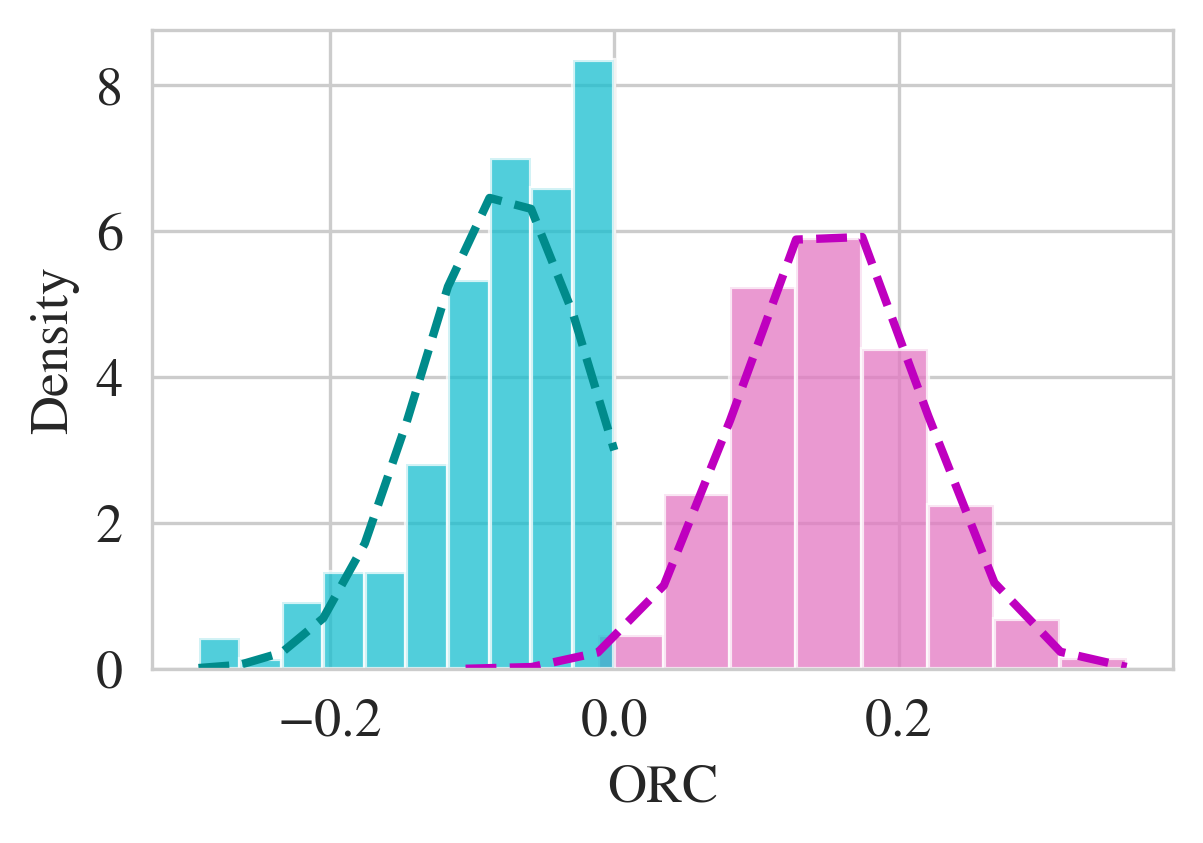} & \includegraphics[width=.3\textwidth]{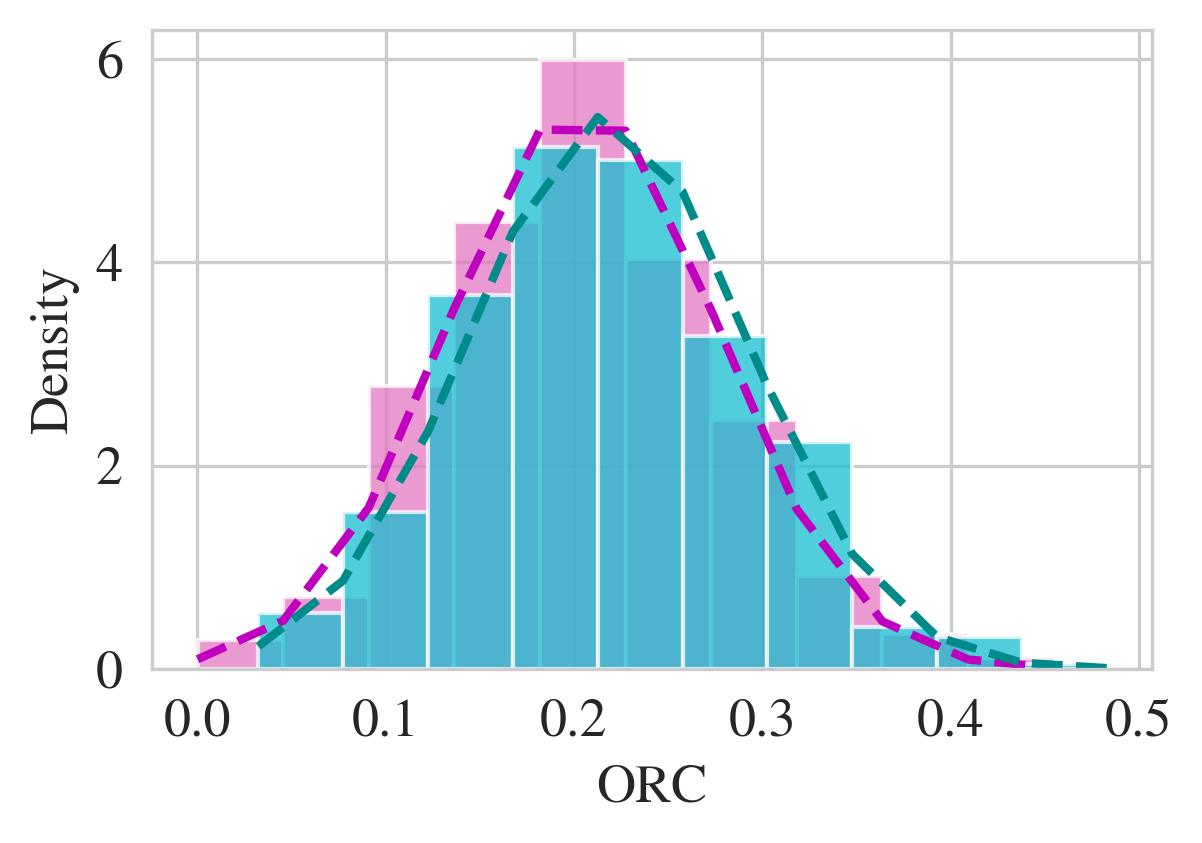} & \includegraphics[width=.3\textwidth]{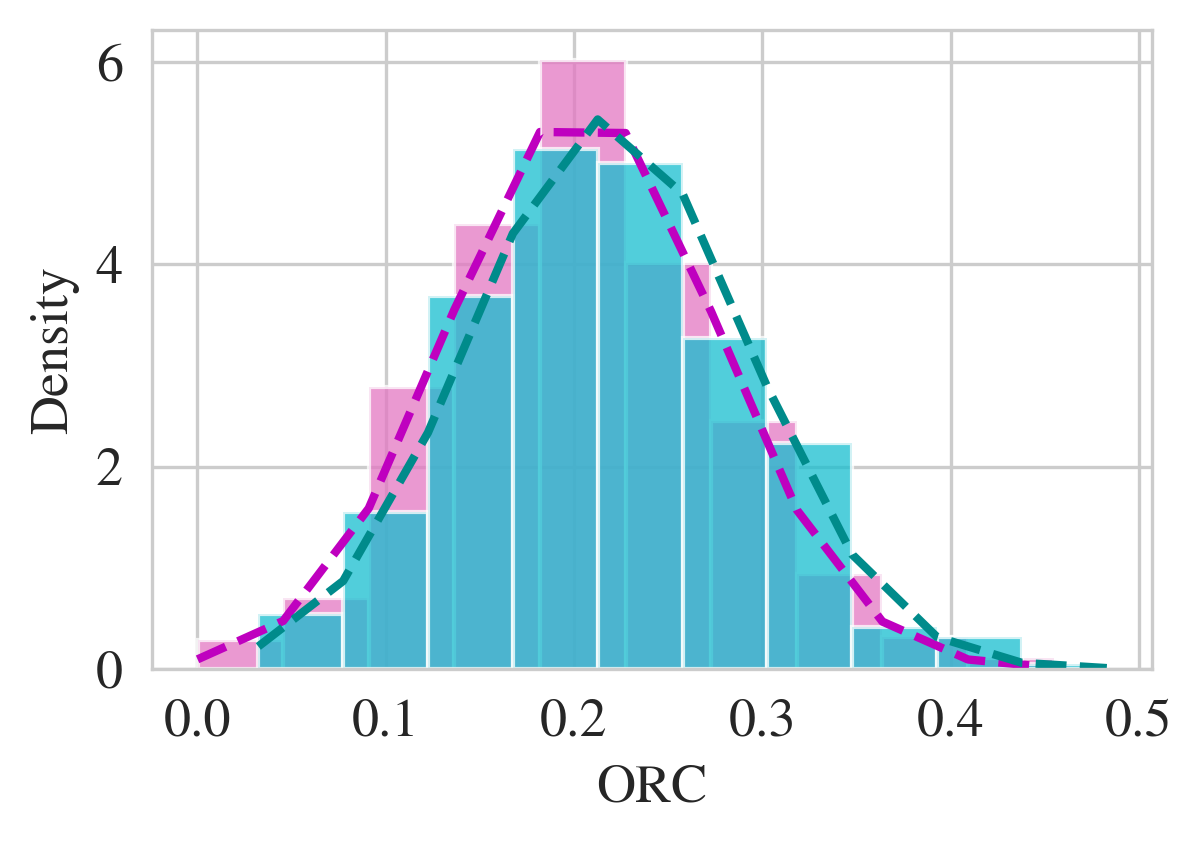}
    \end{tabular}
    \caption{(Top) Visualization of the two-block directed SBMs with size $100$, $p_{\text{in}} = 0.33$ and $p_{\text{out}} = 0.2$, where edges are colored by the values of directed ORC (left), complex-weighted ORC for the magnetic Laplacian with $\theta = 2\pi/3$ (middle) and undirected ORC (right); (Bottom) Distributions of the corresponding ORC values, where within-community edges (pink) are reciprocal (otherwise same as Figure \ref{fig:sbm-dirtri}) and between-community edges (cyan) are from community 1 to 2. Curvature gaps measured by Hellinger distance are $0.816, 0.004, 0.004$ from left to right.}
    \label{fig:sbm-dirtri-reci}
\end{figure}

\subsection{Algorithm: Community Detection on Complex-Weighted Graphs}
Using the proposed curvature notions, we introduce the first curvature-based community detection algorithm for directed and complex-weighted graphs, summarized in Algorithm~\ref{alg:orc-sbm-directed}. Our approach resembles prior algorithms for undirected graphs, that leveraged the curvature gap~\citep{gosztolai2021unfolding,fesser2023augmentations}, where we iteratively delete the lowest-curvature edges to sever the inter-community bridges, and recover the communities as the strongly connected components of the remaining graph. In Algorithm~\ref{alg:orc-sbm-directed}, the threshold $\Delta$ controls how aggressively edges are removed. Rather than fixing it a priori, we propose to derive it from the data using a quality function, such as modularity: we remove edges greedily in increasing order of curvature and, among the resulting nested partitions, retain the one whose components attain the highest (directed) modularity with respect to the input graph; $\Delta$ is then the curvature of the last edge removed at this modularity-optimal cut. This eliminates a sensitive hyperparameter and lets the number of detected communities adapt to the graph, at the cost of one modularity evaluation per component split.

\begin{algorithm}[htbp]
\caption{Ollivier-Ricci Curvature-based Community Detection for a Given Complex-Weighted Graph}
\label{alg:orc-sbm-directed}
\begin{algorithmic}[1]
    \State{ \textbf{Input:} A complex-weighted graph \( G = (V, E) \), and a threshold $\Delta$}
    \State{ \textbf{Output:} Set of detected communities \( \mathcal{C} \)}

    \medskip

    \State{ \textbf{Step 1: Compute Ollivier-Ricci Curvature (ORC)}}
    \For{each directed edge \( \overrightarrow{uv} \in E \)}
        \State{ Compute the complex-weighted ORC \( \kappa(\overrightarrow{uv}) \).}
    \EndFor

    \medskip

    \State{ \textbf{Step 2: Iterative Edge Removal Based on ORC}}
    \State{ Sort edges \( E \) in ascending order of \( \kappa(\overrightarrow{uv}) \) (smallest first).}
    \While{there exists an edge \( \overrightarrow{uv} \) with \( \kappa(\overrightarrow{uv}) < \Delta \)}
        \State{ Remove the edge \( \overrightarrow{uv} \) with the smallest curvature if unique, otherwise remove one with the smallest curvature at random.}
        \State{ Recompute ORC for affected edges.}
    \EndWhile

    \medskip

    \State{ \textbf{Step 3: Identify Communities}}
    \State{ Extract the strongly connected components (SCCs) of the remaining graph.}
    \State{ These SCCs represent the detected communities $\mathcal{C}$.}

    \medskip
    \If{small isolated components exist}
        \State{ Merge them back based on inter-community ORC.}
    \EndIf
    
    \State{ \Return \( \mathcal{C} \)}
\end{algorithmic}
\end{algorithm}

\subsection{Experiments}
We evaluate Algorithm~\ref{alg:orc-sbm-directed} on synthetic directed stochastic block models and on the real-world datasets, comparing the undirected (un-ORC), directed (di-ORC), and complex-weighted (complex-ORC, from the magnetic Laplacian with $\theta = 2\pi/3$) curvatures. 
Detection quality is measured by clustering accuracy via the normalized mutual information (NMI).
The synthetic benchmark is composed of two different cases of two-block directed SBMs (size $100$, $p_{\text{in}} = 0.33$, $p_{\text{out}} = 0.2$): the first one (SBM-1) has directed within-community edges (with many directed triangles), while the second one (SBM-2) has reciprocal within-community edges, and both have one-way between-community edges (community $1\to 2$); see Figures~\ref{fig:sbm-dirtri} and~\ref{fig:sbm-dirtri-reci} for examples.

We examine the following two real-world datasets: C.elegans \cite{cionca2025celegans} and EU email network (Email-EU) \cite{leskovec2007email,yin2017email}. The C.elegans dataset contains $279$ neurons and $2194$ directed synaptic edges to describe the connectome of \textit{C.elegans}, together with the real functional classification of the neurons. The dataset has a moderate proportion of reciprocal edges ($0.21$) and a moderate portion of directed triangles ($0.16$), consistent with the known abundance of directional information flow. We chose the sensory and sex-specific classes in our experiments. The EU-email dataset contains $1005$ people and $24,929$ email communications between them, together with their departments as ground-truth classification. The dataset has high portion of reciprocal edges ($0.71$) and high portion of directed triangles ($0.78$), consistent with the known high reciprocity of email exchange. We should note here that the high proportion of directed triangles is from high proportion of reciprocal edges, and if we restrict to pure directed triangles without reciprocal edges, C.elegans has a higher proportion, $2\%$ vs $0.5\%$. We select departments 13 and 22 in our experiments.
\begin{table}[htbp]
    \centering
    \begin{tabular}{c|ccc}
         & un-ORC & di-ORC & complex-ORC \\ \hline
        SBM-1 & $0.259$ & $0.415$ & \hl{$1.00$} \\
        SBM-2 & $0.259$ & \hl{$1.00$} & $0.215$ \\
        C.elegans & $0.339$ & $0.182$ & \hl{$0.400$} \\
        Email-EU & $0.415$ & \hl{$0.441$} & $0.410$ \\
    \end{tabular}
    \caption{Community detection evaluation (NMI) of the algorithm with different versions of ORCs, where the top-performance one is marked in magenta.}
    \label{tab:community-all}
\end{table}

Our results demonstrate that both complex-ORC and di-ORC can effectively recover community structure in digraphs; see Table \ref{tab:community-all}. In particular, when directional edges and directed triangles are abundant, as in SBM-1 and C.elegans, complex-ORC achieves better performance, whereas when reciprocal edges are prevalent, as in SBM-2 and Email-EU, di-ORC performs better. Notably, for both SBMs, the undirected ORC exhibits almost no curvature gap, since the communities are mainly distinguished by the directional information rather than density. In contrast, di-ORC and complex-ORC incorporate this directional information and consequently produce a substantial curvature gap, leading to better community detection performance.

\section{Conclusions}\label{sec:conclusions}
In this paper, we proposed an extension of Ollivier's discrete Ricci curvature to complex-weighted graphs. We further derived relations to other graph characteristics and explored the use of the proposed notion in community detection.

Directions for future work include extending additional notions of discrete Ricci curvature, such as Forman~\citep{forman}, Bakry-Émery~\citep{bakry2013analysis}, and Lin-Lu-Yau~\citep{lin-lu-yau}, to complex-weighted networks. Another important avenue of investigation is the study of Ricci flow in this setting, which could provide insight into the dynamic evolution of network geometry and the analysis of network dynamics. In both the directed and complex-weighted setting, analyzing connections between our proposed notion of Ollivier-Ricci curvature and other graph characteristics could give rise to valuable theoretical and empirical insights. Lastly, leveraging the new notion for the design of algorithms and models for directed and complex-weighted networks, such as graph rewiring strategies and graph encodings, could lead to practical applications in Network Science and Graph Machine Learning.

\section*{Acknowledgments}
YT acknowledges support from ELBE postdoctoral fellowship. EW acknowledges funding from the Harvard College Research Program. MW was partially supported by
NSF awards CBET-2112085 and DMS-2406905, and an Alfred P. Sloan Research Fellowship in Mathematics.
This research was further developed with funding from the Defense Advanced Research Projects Agency (DARPA) under agreement no. HR0011-25-3-0205. The views, opinions, and/or findings expressed are those of the authors and should not be interpreted as representing the official views or policies of the Department of Defense or the U.S. Government.

\newpage
\bibliography{ref}

\appendix
\section{Detailed proofs}
\label{apx.proofs}

We include the detailed proofs of the theoretical results in the main text. 

\begin{proof}[Proof of Lemma 3.6] 
	Suppose there is a walk from $u$ to $v$ in the complex-weighted graph of phase $\theta = z\varphi_0$,  and without loss of generality, we denote the intermediate vertices in this walk by $j_1, j_2, \dots, j_{l-1}$, where the walk has length $l$. Since the walk has phase $\theta = z\varphi_0$, 
	\begin{align}
		\label{eq:lem-walk-phase}
		\theta(uj_1) + \sum_{i=1}^{l-2}\theta(j_ij_{i+1}) +\theta(j_{l-1}v) = 2k\pi + z\varphi_0,
	\end{align}
	where $k\in \mathbb{Z}_+$. Let's recall that in Section 3.3, we assume that the phases of all edges are integer multiples of $\varphi_0$, therefore, we can write $\theta(uj_1) = z_1\varphi_0,\, \theta(j_1j_2) = z_2\varphi_0,\, \dots,\, \theta(j_{l-1}v) = z_l\varphi_0$, where $z_1, \dots, z_l\in \mathbb{Z}_{\ge 0}$, and $(\sum_{i=1}^{l}z_i) \mod k = z$. 
	In the construction of the multilayer parametrization, each vertex has one copy in each layer, and edges are placed between the copies of the vertices that are connected in the complex-weighted graphs but their layers are selected according to the phases.
	In particular, since there is an edge from vertex $u$ to $j_1$ with phase $z_1\varphi_0$, we can find an edge from vertex $u$ in an arbitrary layer $i$ to vertex $j_1$ in layer $((i+z_1)\mod k)$. Similarly, we can find an edge from vertex $j_1$ to $j_2$, $\dots$, $j_{l-1}$ to $v$ in the corresponding layers. That is to say, we find a walk from vertex $u$ in layer $i$ to vertex $v$ in layer $(i+((\sum_{i=1}^{l}z_i) \mod k)) = i+z$. 
	
	Now, suppose there is a walk from $u$ in an arbitrary layer $i$ to $v$ in layer $i+z$ in the $k$-layer graph, and without loss of generality, we denote the intermediate vertices in this walk by $j_1, j_2, \dots, j_{l-1}$ in layers $z_1, z_2, \dots, z_{l-1}$, respectively, where the walk has length $l$.  By construction, since there is an edge from $u$ in layer $i$ to $j_1$ in layer $z_1$ in the $k$-layer graph, there is an edge from $u$ to $j_1$ in the complex-weighted graph with phase $((z_1 - i)\mod k)\varphi_0$.  Similarly, we can find edges from $j_1$ to $j_2$ with phase $((z_2 - z_1)\mod k)\varphi_0$, $\dots$, from $j_{l-1}$ to $v$ with phase $((i+z - z_{l-1})\mod k)\varphi_0$ in the complex-weighted graph. That is to say, we find a walk from vertex $u$ to vertex $v$ where the sum of edge phases is $((z_1 - i)\mod k)\varphi_0 + ((z_2 - z_1)\mod k)\varphi_0 + \dots + ((i+z - z_{l-1})\mod k)\varphi_0 = z\varphi_0$, thus the walk has phase $z\varphi_0$. 
	
		Therefore, the shortest-walk distance from $u$ to $v$ in the complex-weighted graph of phase $\theta = z\varphi_0$ is equal to the {shortest-path distance between the copy of $u$ in layer $i$ and the copy of $v$ in layer $i+z$ in the $k$-layer graph. When $u=v$ and $z\not\equiv 0 \mod k$, these are distinct copies of $u$, consistent with the convention above.}
	\end{proof}

\begin{proof}[Proof of Theorem 3.7]
	We first examine the probability measures on the neighborhoods. By construction of the multilayer parametrization, $u'$ in layer $((i-z_1)\mod k)$\footnote{If the residual is $0$, it corresponds to the $k$-th layer} is an in-neighbor of $u$ in layer $i$  in the $k$-layer graph if and only if $u'$ is an in-neighbor of $u$ in the complex-weighted graph and $\varphi_{u'u} = z_1\varphi_0$. Therefore, the measure on the in-neighbors of $u$ in the complex-weighted graph (as defined in Definition \ref{def:orc-complex}), has the same support as the measure on the in-neighbors of $u$ in layer $i$. Meanwhile, $v'$ in layer $((i+z_0+z_2)\mod k)$ is an out-neighbor of $v$ in layer $((i+z_0)\mod k)$ in the $k$-layer graph if and only if $v'$ is an out-neighbor of $v$ in the complex-weighted graph and $\varphi_{vv'} = z_2\varphi_0$. Hence, the measure on the out-neighbors of $v$ in the complex-weighted graph (as defined in Definition \ref{def:orc-complex}), has the same support as the measure on the out-neighbors of $v$ in layer $i+z_0$.

	Then for an arbitrary in-neighbor of $u$, $u'$, with $\varphi_{u'u} = z_1\varphi_0$, and an arbitrary out-neighbor of $v$, $v'$, with $\varphi_{vv'} = z_2\varphi_0$, the shortest-walk distance from $u$ to $v$ of phase $\theta = (z_1+z_0+z_2)\varphi_0$ is equal to the shortest-path distance from $u'$ in layer $((i-z_1)\mod k)$ to $v'$ in $((i+z_0+z_2)\mod k)$, from Lemma \ref{lem:klayer-dist}.
	
	Therefore, $W_1(m_{in, u}, m_{out, v})$ for $\overrightarrow{uv}$ in the complex-weighted graph (as in Eq.~\eqref{equ:W1-complex}) is equal to $W_1(m_{u^{(i)}}, m_{v^{(i+z_0)}})$ in the $k$-layer graph (as in Eq.~\eqref{equ:W1}). Then the ORC of $\overrightarrow{uv}$ in the complex-weighted graph with $\varphi_{uv} = z_0\varphi_0$ (as in Eq.~\eqref{eq:orc-complex}) is equal to the ORC of $\overrightarrow{u^{(i)}v^{(i+z_0)}}$ in the (directed) $k$-layer graph (as in Eq.~\eqref{eq:orc}).
\end{proof}

\begin{proof}[Proof of Corollary 3.8]
	By construction, for each edge $\overrightarrow{uv}\in E$ with $\varphi_{uv} = z_0\varphi_0$ in the complex-weighted graph, there is an edge from $u$ in layer $i$ to $v$ in layer $i+z_0$ in the $k$-layer graph, for $i=1, ..., k$ with convention $k+i = i,\, \forall i$. Suppose there is also an edge $\overrightarrow{vu}\in E$ with $\varphi_{vu} = (k-z_0)\varphi_0$ in the complex-weighted graph, then there is also an edge from $v$ in layer $i+z_0$ to $v$ in layer $i$ in the $k$-layer graph. Therefore, each edge in the $k$-layer graph is bi-directional, and we can ignore the direction. 
\end{proof}

\begin{proof}[Proof of Proposition 4.2]
	Let us denote the complex-weighted graph corresponding to the magnetic Laplacian with parameter $\theta = 2\pi/k$ by $G$. Then by construction, if there is only one directed edge from vertex $u$ to $v$ but not $v$ to $u$ in the digraph $H$, then there is one edge $uv$ with phase $\theta$ and one edge $vu$ with phase $2\pi - \theta$ in $G$; if there are two directed edges from vertex $u$ to $v$ and $v$ to $u$, then there are two edges $uv$, $vu$ with phase $0$ in $G$. 
	
	Suppose we have an undirected cycle $C$, ignoring the direction of edges in $C$, in $H$, and we denote the vertices in $C$ as $j_1, j_2, \dots, j_l$, where $l = |C|$ is the actual length, or the number of edges in $C$. 
	
	Now, we consider the corresponding edges in the $k$-layer graph $G^{(k)}$. (1) Let us start from $j_1$ in layer 1. (2) If $j_1j_2\in E(H),\, j_2j_1\notin E(H)$, then there is an edge from $j_1$ in layer 1 to $j_2$ in layer 2; if $j_1j_2\notin E(H),\, j_2j_1\in E(H)$, then there is an edge from $j_1$ in layer 1 to $j_2$ in layer k; otherwise, there is an edge from $j_1$ to $j_2$ in layer 1. (3) We can continue to examine the edges in $C$ as in step (2), up to the last edge $j_lj_1$. To determine which layer the vertex $j_1$ at the end of the process lies, we need to count how many times we forward versus backward the layers in $G^{(k)}$, i.e., the number of edges oriented in the direction from $j_1$ to $j_2$ minus the number of edges oriented in the opposite direction, denoted by $l_e$. The final $j_1$ is in layer $1+ l_e\mod k$. If $l_e\mod k = 0$, then we have a cycle of length $|C|$ in $G^{(k)}$; otherwise, we can continue the above process (1) to (3) for $j_1$ in layer $1+ l_e\mod k$, and at the end of the process, we arrive at vertex $j_1$ in layer $(2(l_e\mod k)\mod k) + 1$. If $2(l_e\mod k)\mod k=0$, we have a cycle of length $2|C|$ in $G^{(k)}$; otherwise, we can repeat the above step (process (1) to (3)) until we reach the smallest $k_l$ such that  $k_l(l_e\mod k)\mod k=0$, i.e., $k_l(l_e\mod k)$ is the lowest common multiple between $(l_e\mod k)$ and $k$. In this way, we have a cycle of length $k_l|C|$.  
	
	Finally, since the magnetic Laplacian is Hermitian, from Corollary \ref{cor:ORC-k-undi}, we can effectively ignore the direction of edges in the multilayer parametrization. Therefore, following the opposite direction of cycle $C$ in $H$ will give us the same cycle. Therefore, we do not need to specify the direction in calculating $l_e$, and this gives us the effective length $|C|_e$.    
\end{proof}

\begin{proof}[Proof of Proposition 4.3]
		For each undirected cycle $C$ in $H$, the length of the corresponding cycles in $G^{(k)}$ is also $|C|$ if and only if $(|C|_e \hspace{-.5em}\mod k) = 0$, if and only if $(|C|_e(2\pi/k)\hspace{-.5em}\mod 2\pi) = 0$, if and only if the phase of each directed cycle in the complex-weighted graph is a multiple of $2\pi$, if and only if the complex-weighted graph is {structurally balanced}, by definition. 
	\end{proof}

\begin{proof}[Proof of Lemma 4.7]
	We first note that for each directed edge $\overrightarrow{uv}\in E$, the probability measures on the in-neighbors of $u$ and on the out-neighbors of $v$ in the complex-weighted graph $G$ are the same as those in the digraph $\bar{G}$, ignoring the phase part.  
	
	Then, for an arbitrary in-neighbor of $u$, $u'$, and an arbitrary out-neighbor of $v$, $v'$, each walk from $u'$ to $v'$ has the same phase as the walk $u'uvv'$, by the definition of directed structural balance. Hence, the shortest-walk distance from $u'$ to $v'$ in the complex-weighted graph $G$ is the same as the shortest-path distance from $u'$ to $v'$ in the digraph $\bar{G}$, ignoring the phase part. 
	
	Altogether, we can show that ORC in the complex-weighted graph $G$ is equal to the ORC in the digraph $\bar{G}$, ignoring the phase part. 
\end{proof}

\begin{proof}[Proof of Theorem~\ref{thm:directed-combinatorial-bound}]
    The upper bound follows from the same zero-cost overlap argument as in Jost and Liu. A pair $(x,y)\in X\times Y$ can be transported at cost $0$ only if $x=y$. These vertices are precisely $C_3(u,v)=X\cap Y$, and the amount of mass that can be matched at zero cost is
    \[
        \sum_{w\in C_3(u,v)}\min\left\{\frac1a,\frac1b\right\}=\frac{c}{a\vee b}=Z.
    \]
    All remaining mass must move a positive directed distance, hence at least one. Thus
    \[
        W_1(m_{in,u},m_{out,v})\geq 1-Z,
    \]
    and $\kappa(u,v)=1-W_1(m_{in,u},m_{out,v})\leq Z$.

    For the lower bound, we construct a transport plan. First, match the mass supported on $C_3(u,v)$ to itself at cost $0$, moving total mass $Z$. The residual mass is $1-Z$. If $x\in A^-(u,v)$, then for every $y\in Y$ there is a directed path
    \[
        x\to v\to y
    \]
    of length at most $2$. Similarly, if $y\in A^+(u,v)$, then for every $x\in X$ there is a directed path
    \[
        x\to u\to y
    \]
    of length at most $2$. After the zero-cost matching, the residual source mass on $A^-(u,v)$ is $p/a$ and the residual target mass on $A^+(u,v)$ is $q/b$; hence the union of these shortcut-supported rows and columns supports a subcoupling of mass
    \[
        M=\min\left\{1-Z,\frac{p}{a}+\frac{q}{b}\right\}
    \]
    at cost at most $2$. The remaining residual mass can always be transported along the baseline path
    \[
        x\to u\to v\to y,
    \]
    which has length $3$. Hence
    \[
        W_1(m_{in,u},m_{out,v})
        \leq 2M+3(1-Z-M)=3-3Z-M.
    \]
    Since $\kappa(u,v)=1-W_1(m_{in,u},m_{out,v})$, the claimed lower bound follows.
\end{proof}

\begin{proof}[Proof of Theorem~\ref{thm:complex-combinatorial-bound}]
    For the upper bound, zero-cost transport can occur only when the source and target vertices agree and the required phase is $0$. For the edge $\overrightarrow{uv}$, this means precisely that the common vertex belongs to $C_0(u,v)$. Hence at most $Z$ units of mass can be matched at zero cost. The remaining mass $1-Z$ must be transported at positive phase-compatible shortest-walk distance, and by definition, each such positive cost is at least $\ell_+$. Thus
    \[
        W_1(m_{\text{in},u},m_{\text{out},v})\geq \ell_+(1-Z),
    \]
    which gives
    \[
        \kappa(u,v)=1-\frac{W_1(m_{\text{in},u},m_{\text{out},v})}{\delta_{uv}}
        \leq
        1-\frac{\ell_+(1-Z)}{\delta_{uv}}.
\]
    If $\ell_+\geq \delta_{uv}$, this further implies $\kappa(u,v)\leq Z$.

    For the lower bound, we again construct an explicit transport plan. First match the mass on $C_0(u,v)$ to itself at cost $0$, moving total mass $Z$. If $x\in A^-(u,v)$ and $y\in Y$, then the walk $x\to v\to y$ has phase
    \[
        \varphi_{xv}+\varphi_{vy}
        \equiv
        \varphi_{xu}+\varphi_{uv}+\varphi_{vy}
        =
        \theta(xuvy),
    \]
    so it is admissible for the complex-weighted transport cost and has cost $r_{xv}+r_{vy}$. Similarly, if $y\in A^+(u,v)$ and $x\in X$, then the walk $x\to u\to y$ has phase
    \[
        \varphi_{xu}+\varphi_{uy}
        \equiv
        \varphi_{xu}+\varphi_{uv}+\varphi_{vy}
        =
        \theta(xuvy),
    \]
    and has cost $r_{xu}+r_{uy}$. Therefore, after the zero-cost matching, the residual source mass on $A^-(u,v)$ and residual target mass on $A^+(u,v)$ determine a subcoupling of mass $M$ supported on pairs with $x\in A^-(u,v)$ or $y\in A^+(u,v)$. This mass can be transported at a cost of at most $\overline{L}_2$. Every remaining residual pair can be transported along the reference walk $x\to u\to v\to y$, whose cost is at most $L_3$. Hence
    \[
        W_1(m_{\text{in},u},m_{\text{out},v})
        \leq
        M\overline{L}_2+(1-Z-M)L_3.
    \]
    Dividing by $\delta_{uv}$ and using the definition of complex-weighted ORC gives the claimed lower bound.
\end{proof}

\section{Additional experimental results}
We observe a clear curvature gap when the within-community edges are reciprocal and the probability of an edge to appear within communities, $p_{in}$, is much higher than the probability for between communities, $p_{out}$; see Figures \ref{fig:sbm-directed-inboutd} and \ref{fig:sbm-directed-dist-inrouth}. In particular, when the between-community edges are pointing in the same direction (from community 1 to 2), the directed ORC has a clearer curvature gap than the complex-weighted ORC for the magnetic Laplacian, as $p_{out}$ gets closer to $p_{in}$; see Figure \ref{fig:sbm-directed-inboutd}. However, as some between-community edges oppose the direction (from community 2 to 1), the directed ORC has a smaller curvature gap than the pervious case, while the complex-weighted ORC for the magnetic Laplacian has a clearer curvature gap than the previous case. Meanwhile, when the between-community edges are oriented randomly, i.e., equal probability to be from community 1 to 2 and from community 2 to 1, the complex-weighted ORC for the magnetic Laplacian has a clearer curvature gap than the directed ORC; see Figure \ref{fig:sbm-directed-dist-inrouth}. 

\begin{figure}[htbp]
    \centering
    \begin{tabular}{ccc}
        \includegraphics[width=.3\textwidth]{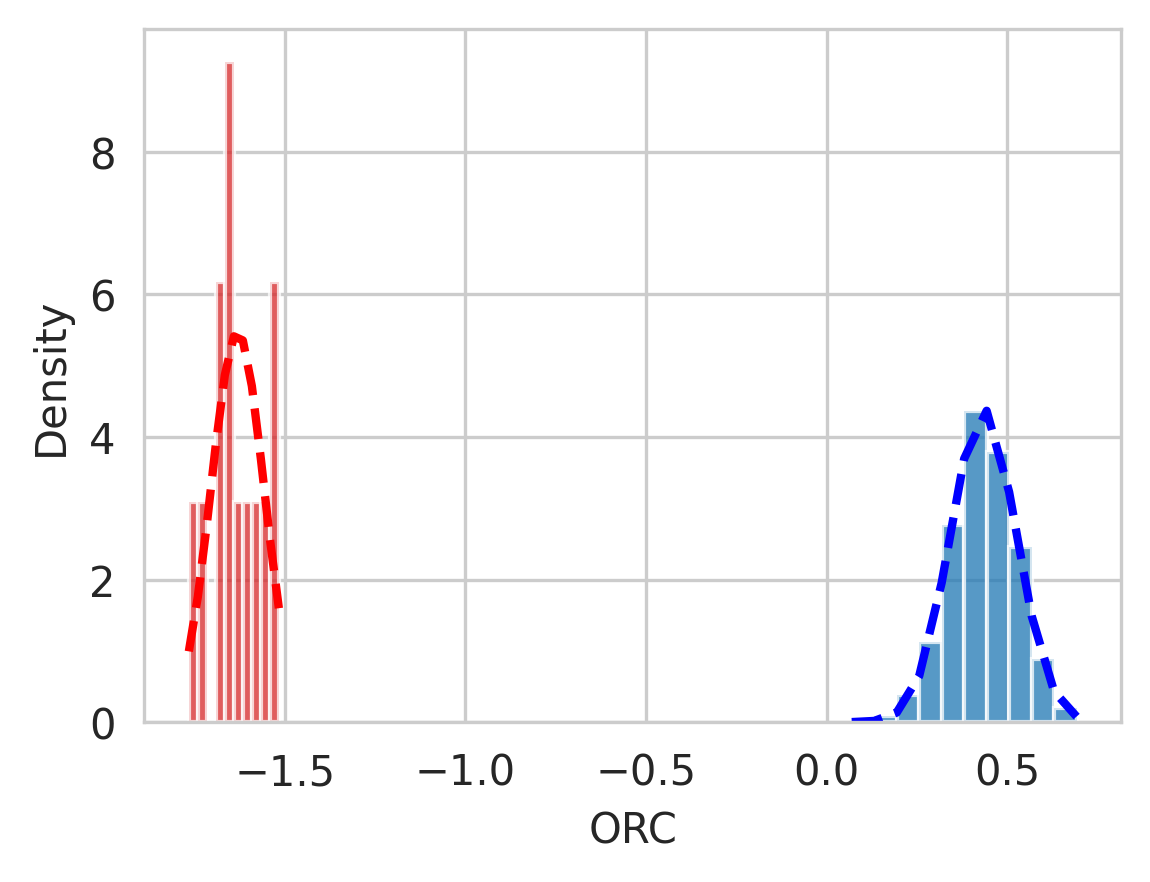} & \includegraphics[width=.3\textwidth]{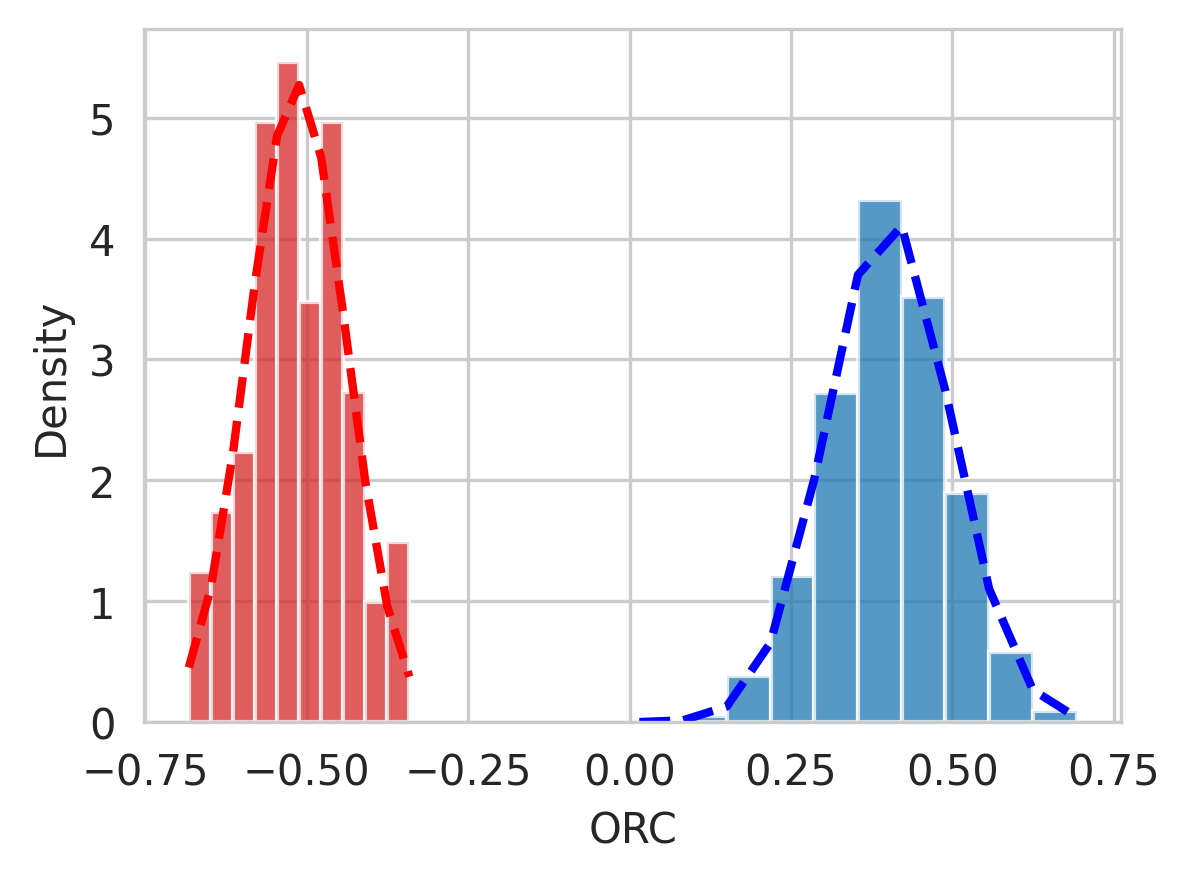} &
        \includegraphics[width=.3\textwidth]{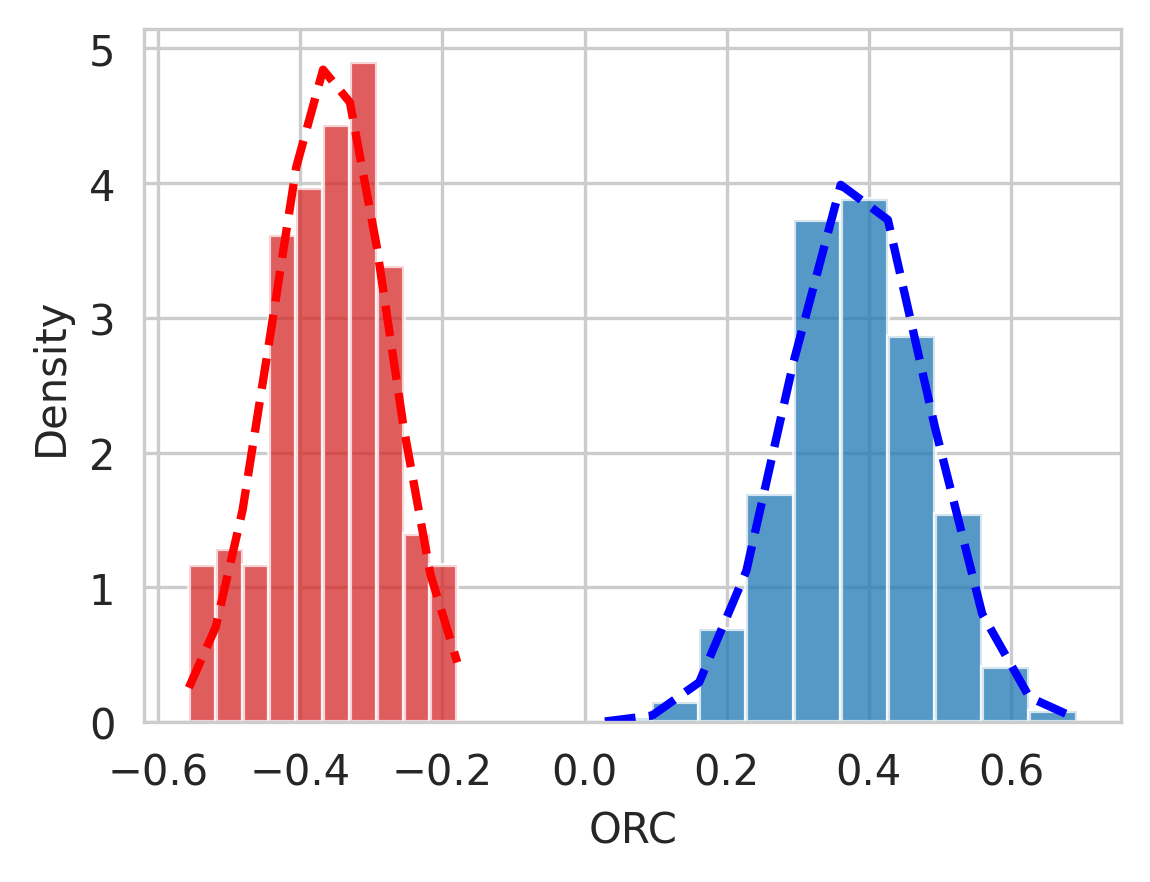} \\
        \includegraphics[width=.3\textwidth]{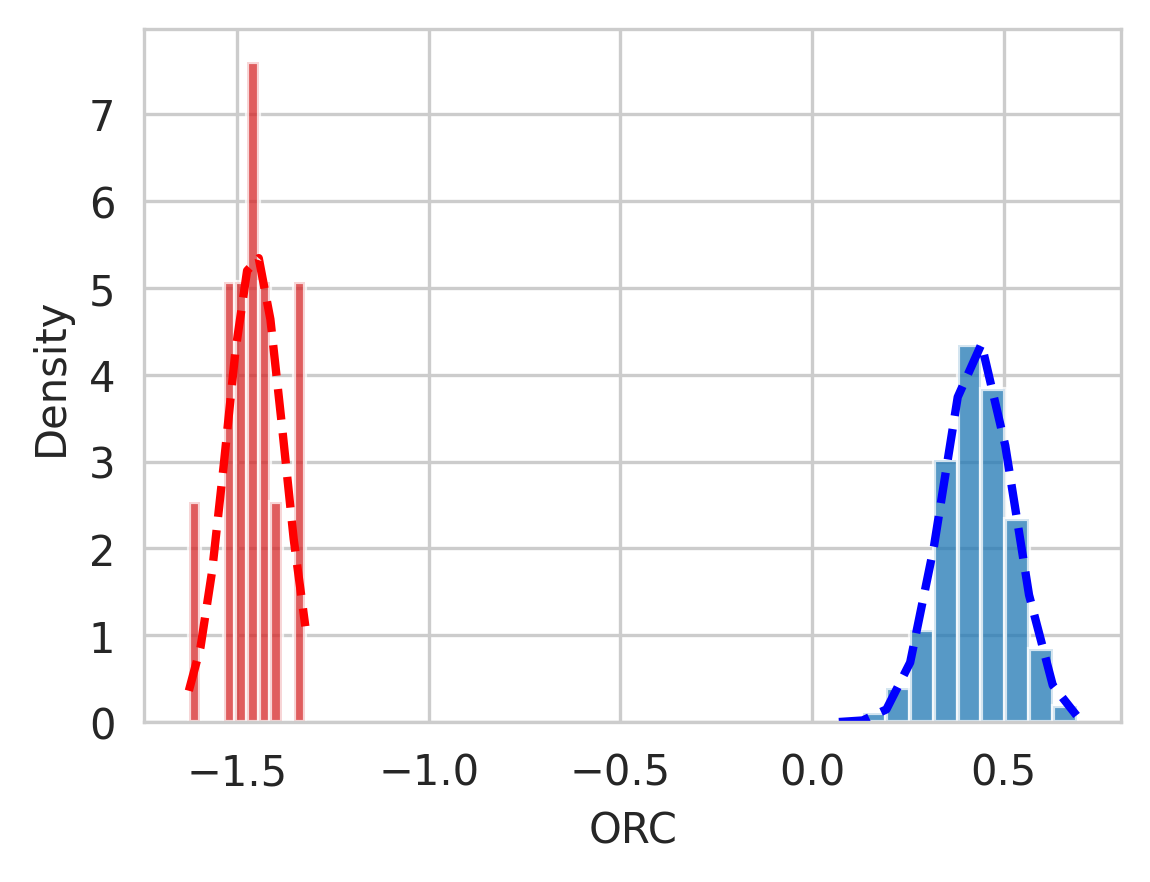} & \includegraphics[width=.3\textwidth]{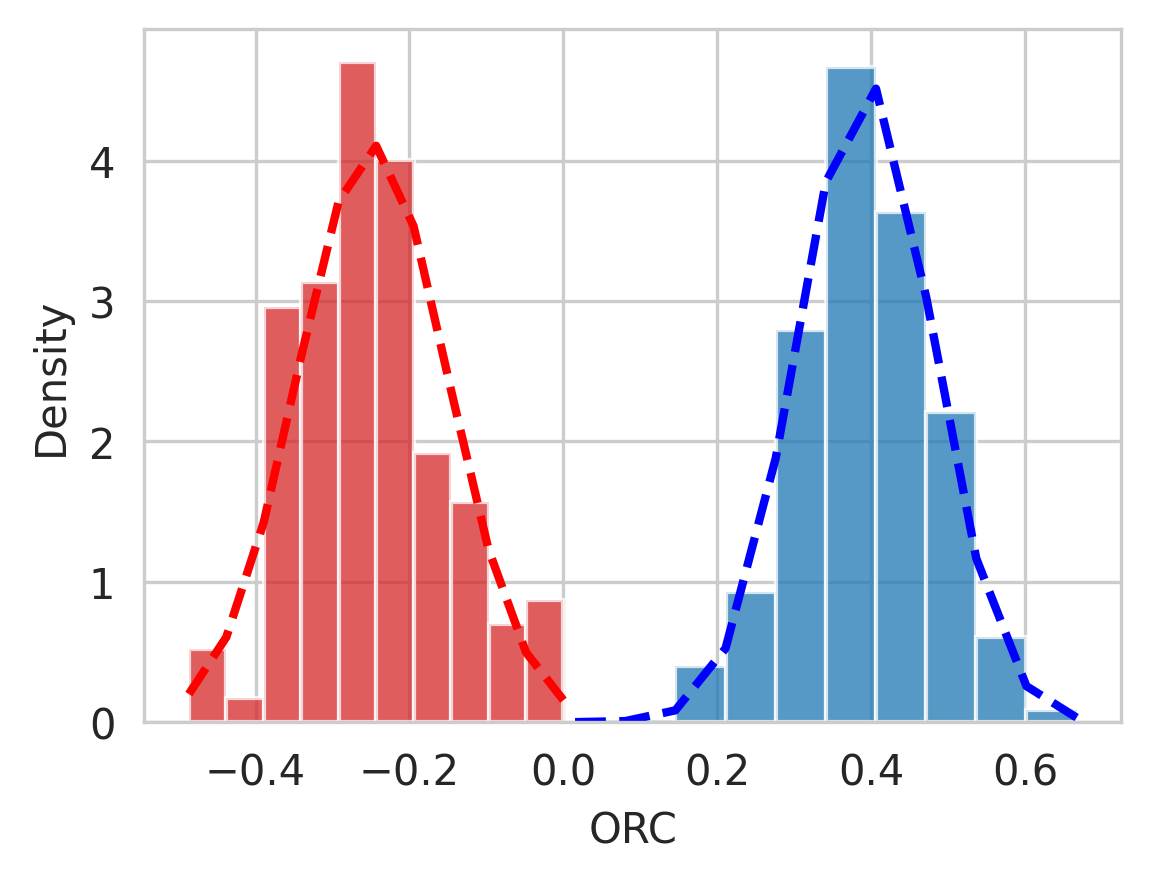} &
        \includegraphics[width=.3\textwidth]{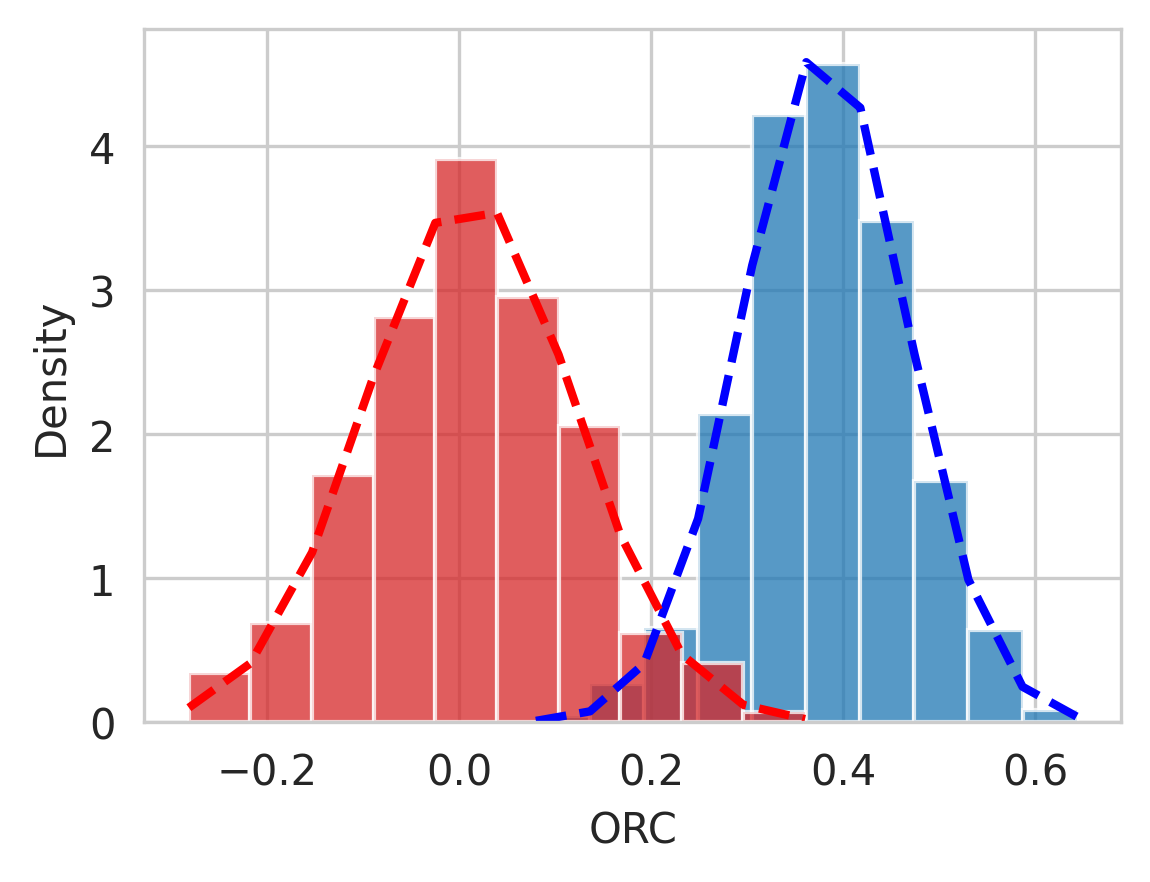}
    \end{tabular}
    \caption{Distribution of directed ORC (top) and complex-weighted ORC for the magnetic Laplacian with $\theta = 2\pi/3$ (bottom) of two-block directed SBMs with size $100$, $p_{\text{in}} = 0.5$ and $p_{\text{out}} = 0.01, 0.1, 0.2$ (left, middle, right, respectively), where within-community edges (blue) are reciprocal and between-community edges (red) are all from community 1 to 2.}
    \label{fig:sbm-directed-inboutd}
\end{figure}

\begin{figure}[htbp]
    \centering
    \begin{tabular}{ccc}
        \includegraphics[width=.3\textwidth]{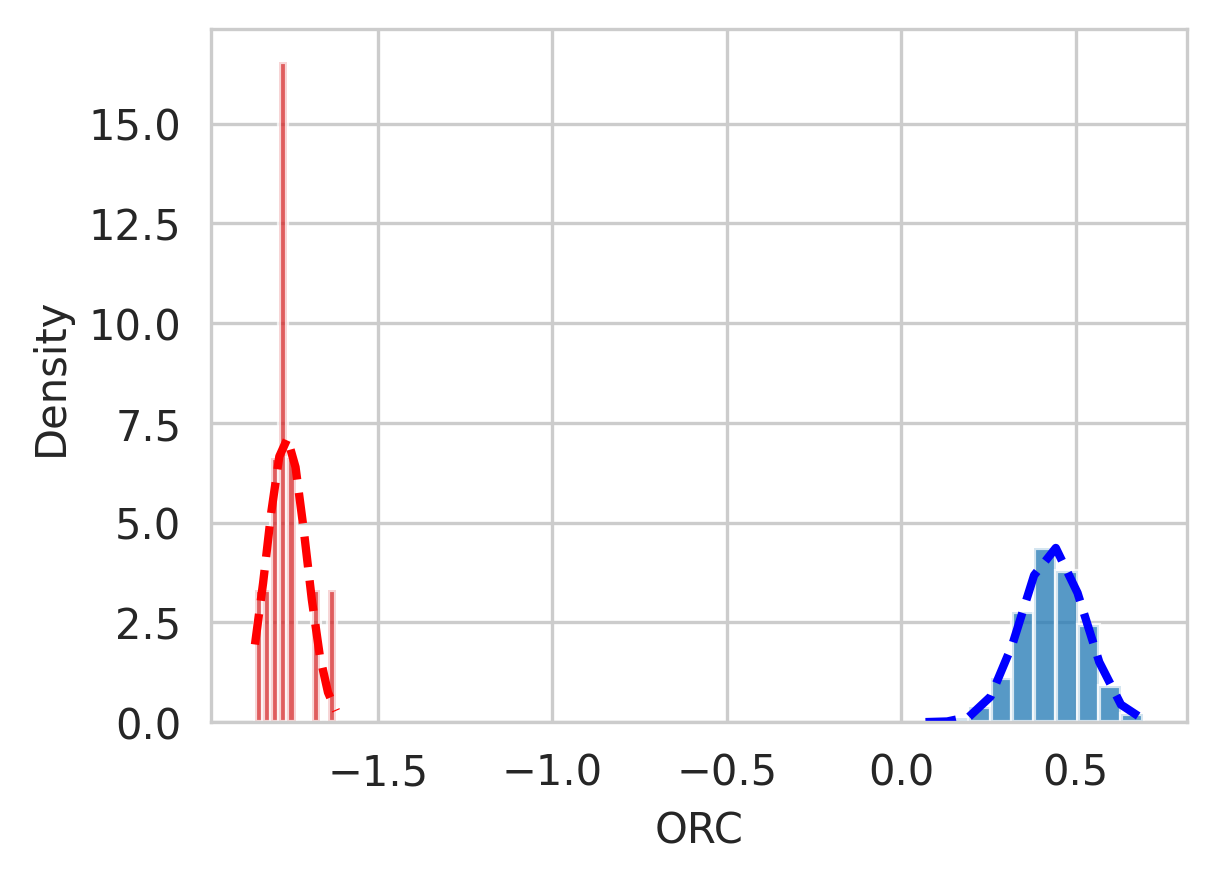} & \includegraphics[width=.3\textwidth]{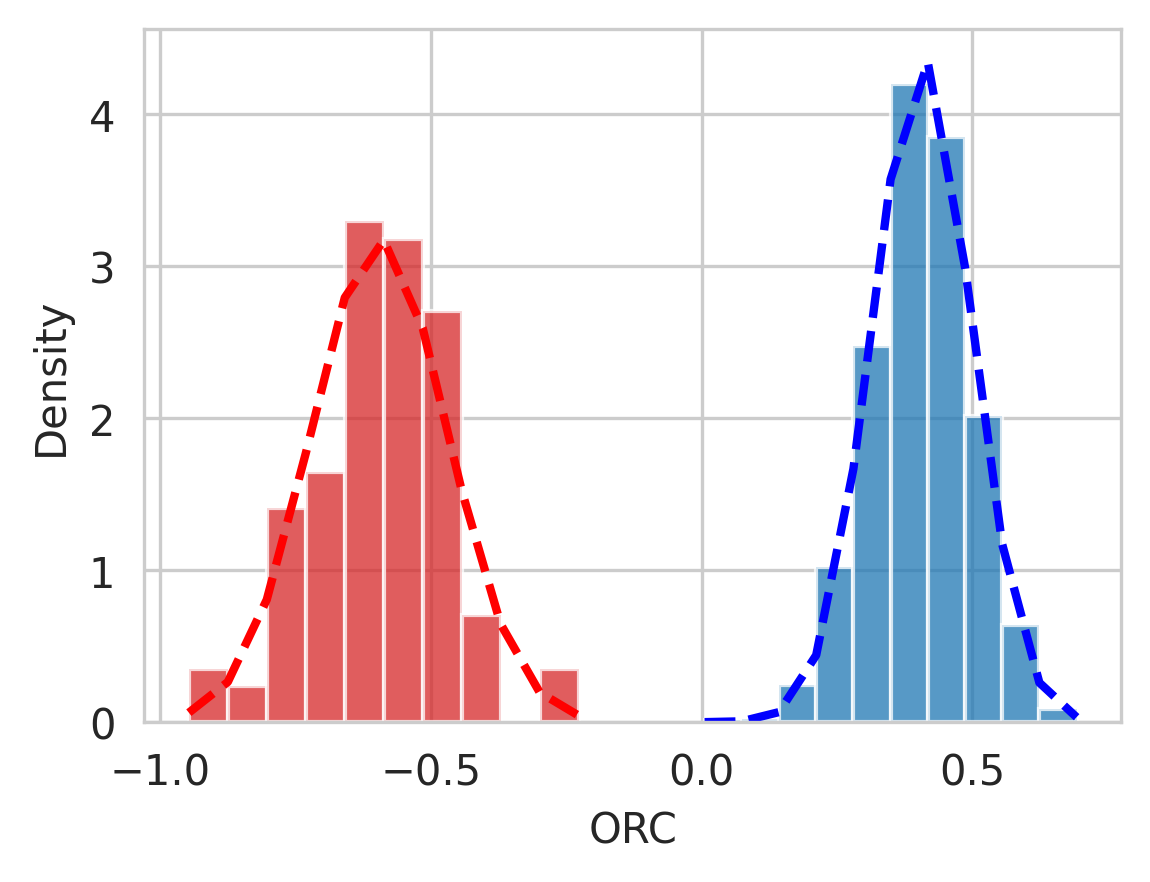} &
        \includegraphics[width=.3\textwidth]{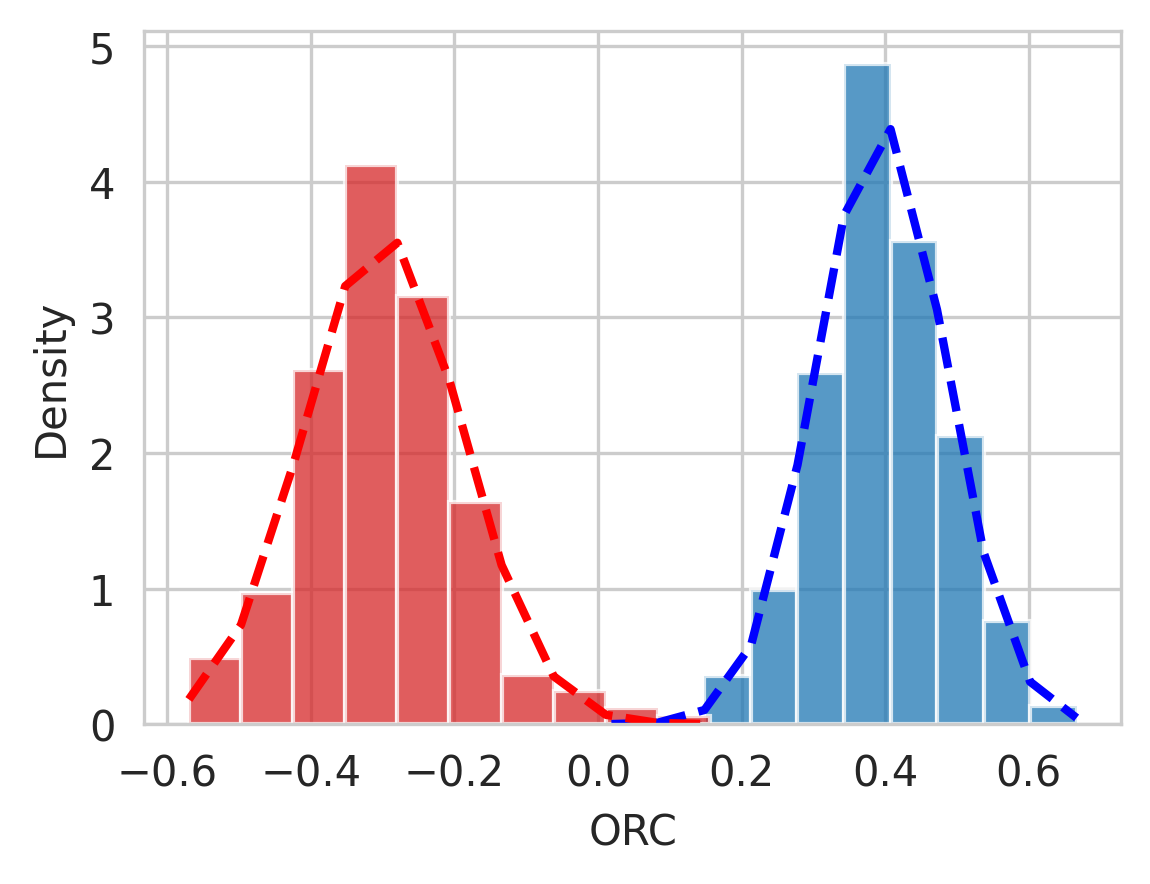} \\
        \includegraphics[width=.3\textwidth]{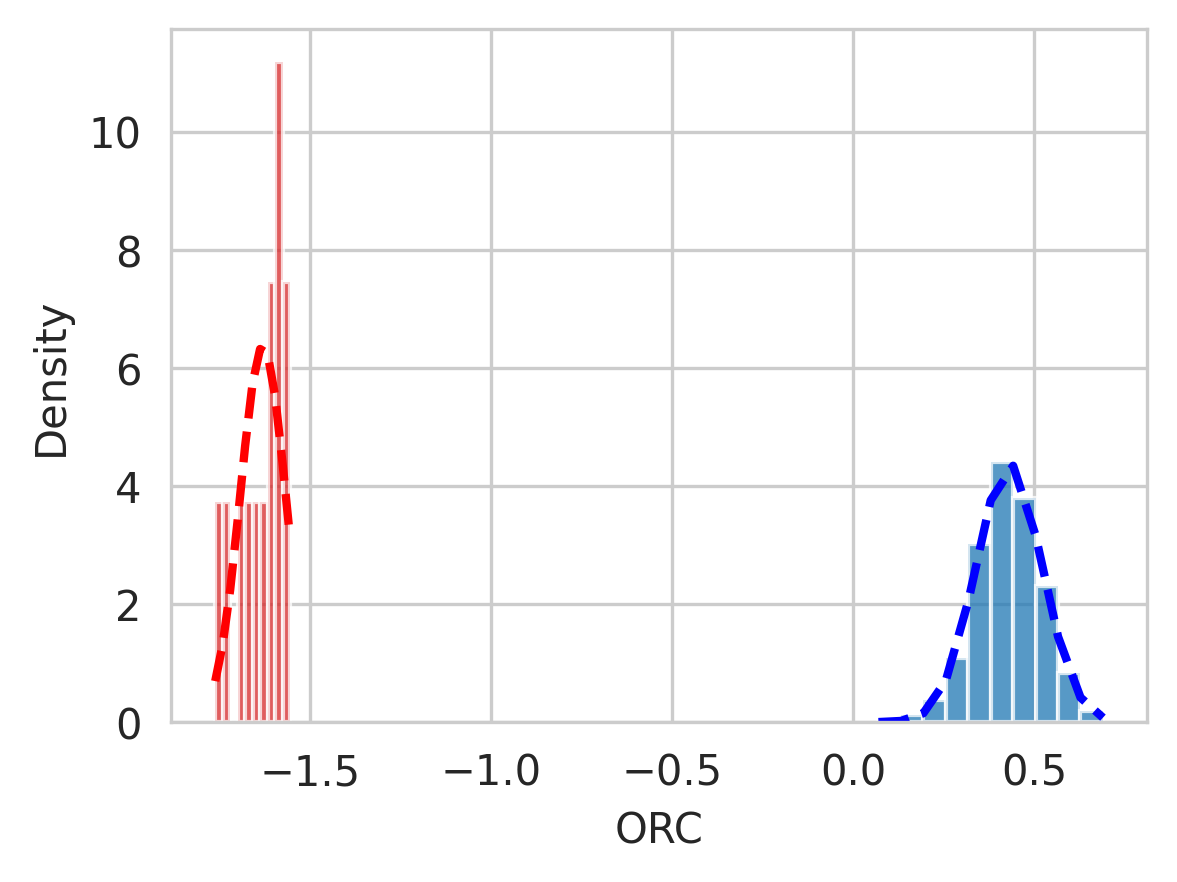} & \includegraphics[width=.3\textwidth]{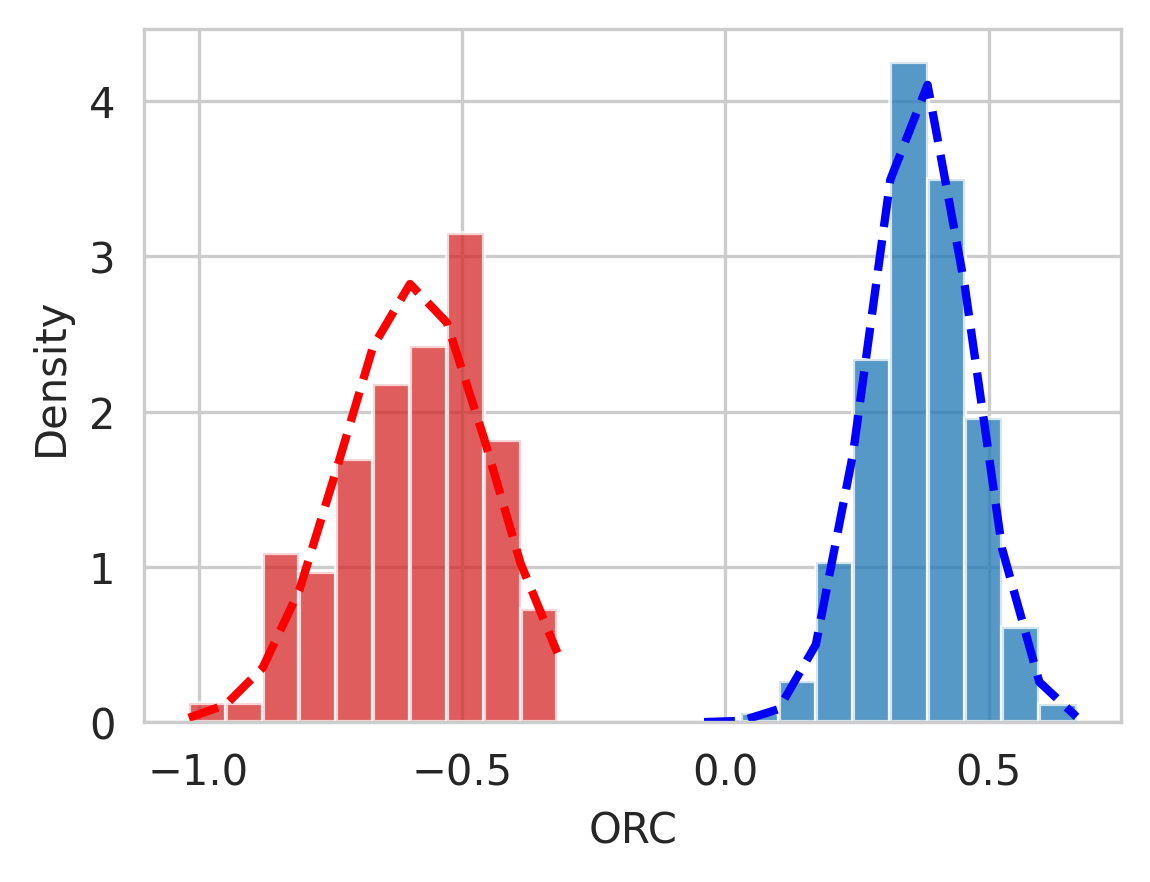} &
        \includegraphics[width=.3\textwidth]{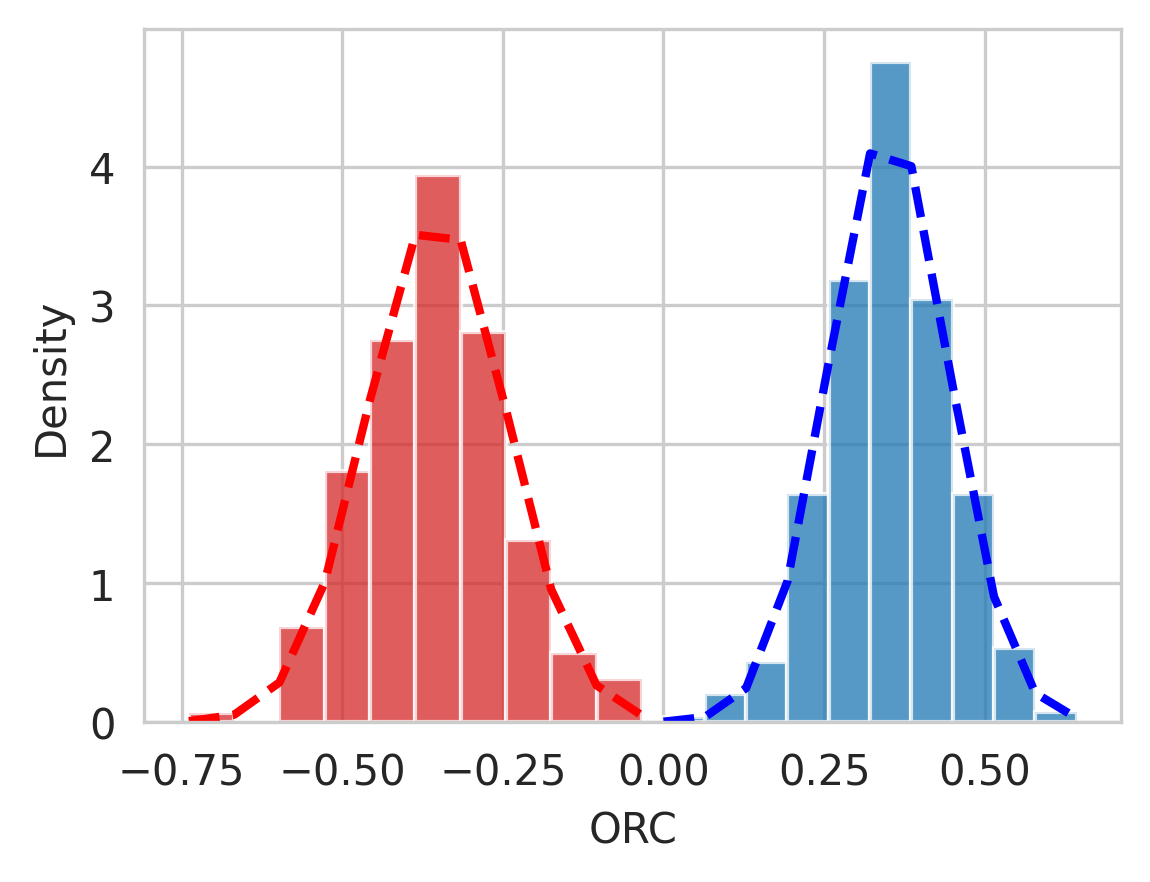}
    \end{tabular}
    \caption{Distribution of directed ORC (top) and complex-weighted ORC for the magnetic Laplacian with $\theta = 2\pi/3$ (bottom) of two-block directed SBMs with size $100$, $p_{\text{in}} = 0.5$ and $p_{\text{out}} = 0.01, 0.1, 0.2$ (left, middle, right, respectively), where within-community edges (blue) are reciprocal and are oriented randomly (equal probability from community 1 to 2 versus from community 2 to 1).}
    \label{fig:sbm-directed-dist-inrouth}
\end{figure}

In the other extreme, when all within-community edges are directed (from the smaller label to the larger one), we still observe a clear curvature gap with the complex-weighted ORC for the magnetic Laplacian when $p_{in}$ is sufficiently larger than $p_{out}$, but not the directed ORC; see Figures \ref{fig:sbm-directed-indoutd} and \ref{fig:sbm-directed-dist-indouth}. In particular, there is still a clear curvature gap in the complex weighted version when $p_{\mathrm{out}} = 0.2$ and between-community edges are oriented randomly, but not the directed version; see Figure \ref{fig:sbm-directed-dist-indouth} (right column). 
This is because, instead of putting the direction information as hard constraints to include it or not completely in the calculation, the magnetic Laplacian incorporates it softly, where edges, paths, and cycles without a desired direction can still be included but are penalized with much longer distances, thus still contributing to the ORC of relevant edges, even though the contribution may be insignificant. 
\begin{figure}[htbp]
    \centering
    \begin{tabular}{ccc}
        \includegraphics[width=.3\textwidth]{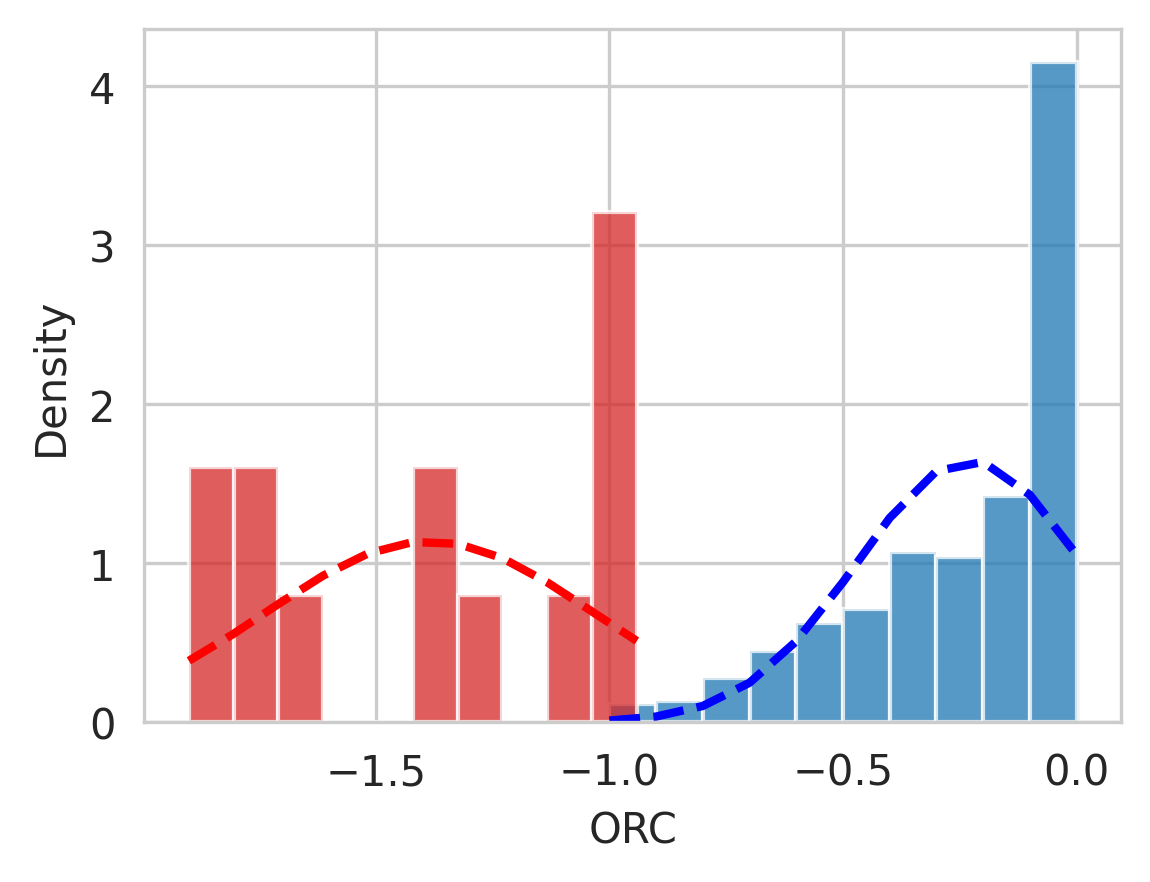} & \includegraphics[width=.3\textwidth]{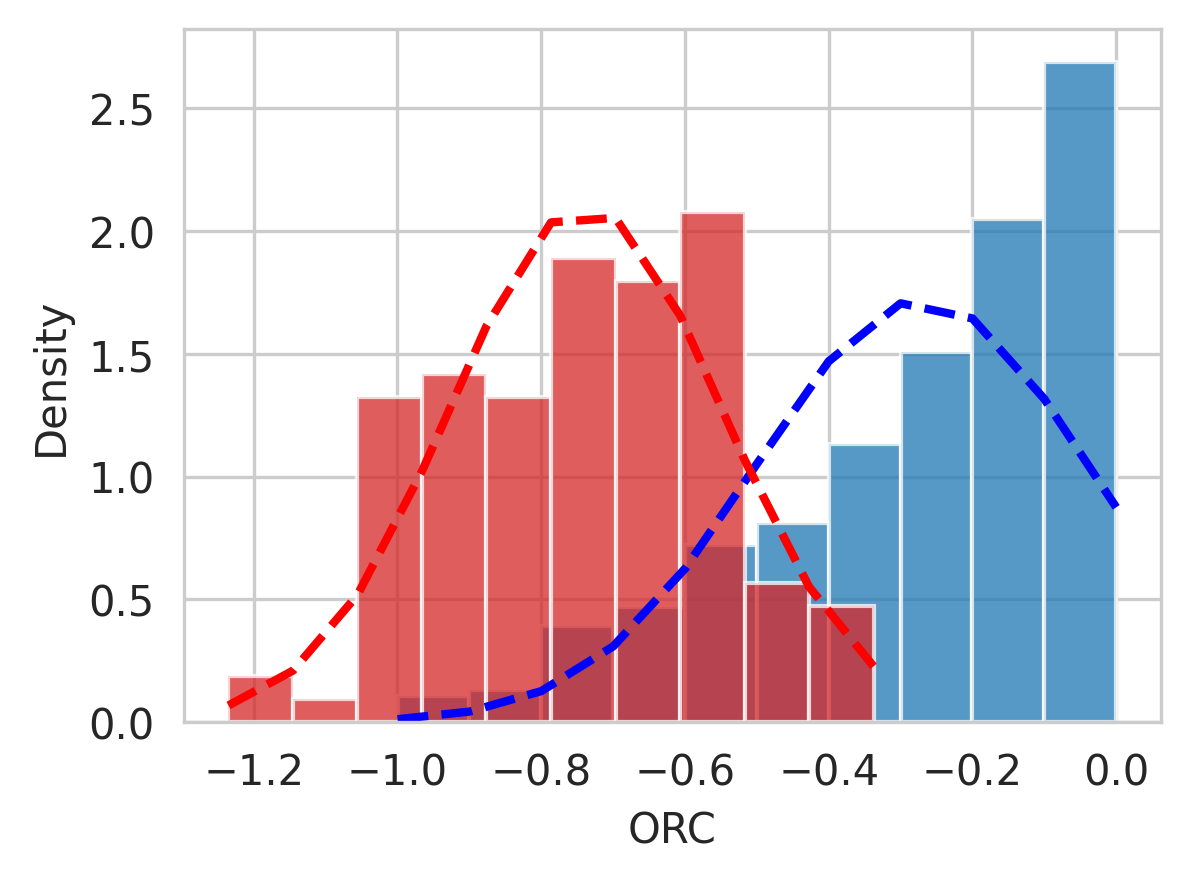} &
        \includegraphics[width=.3\textwidth]{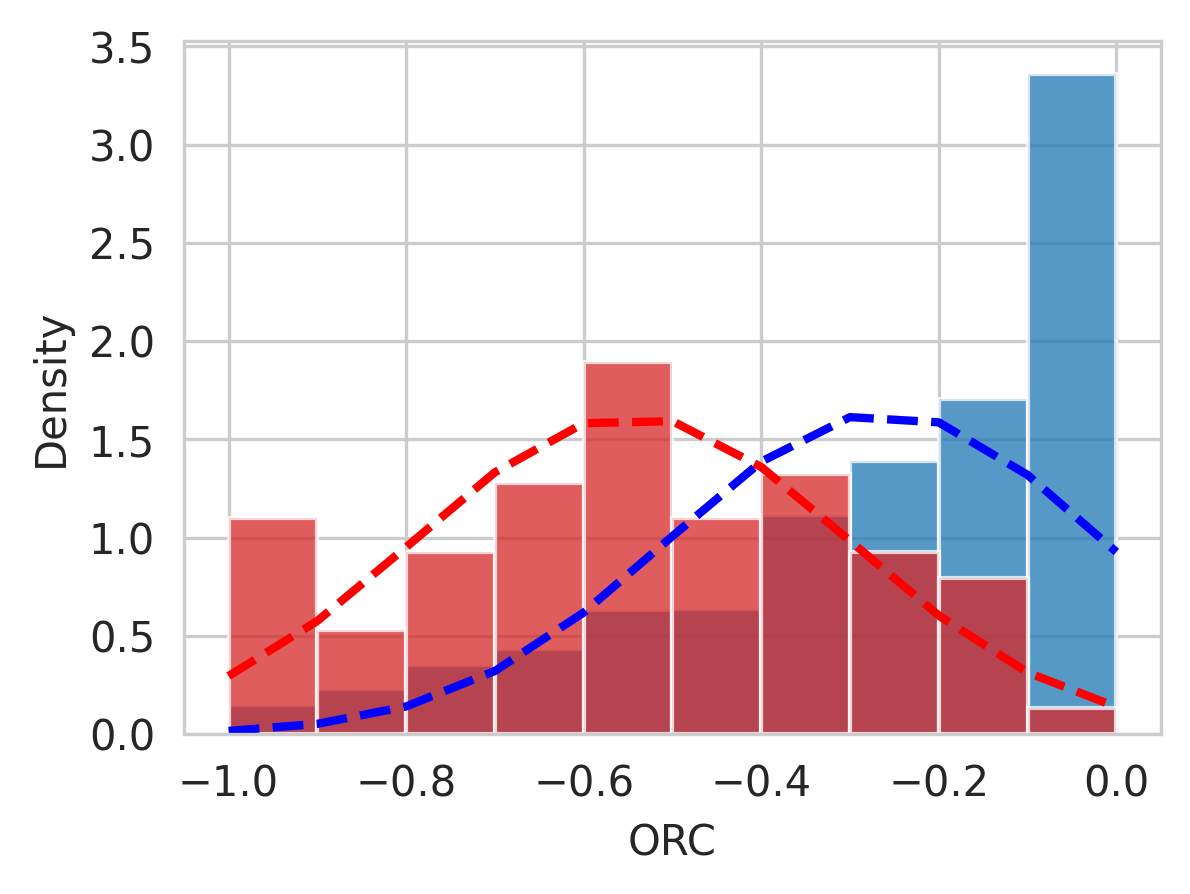} \\
        \includegraphics[width=.3\textwidth]{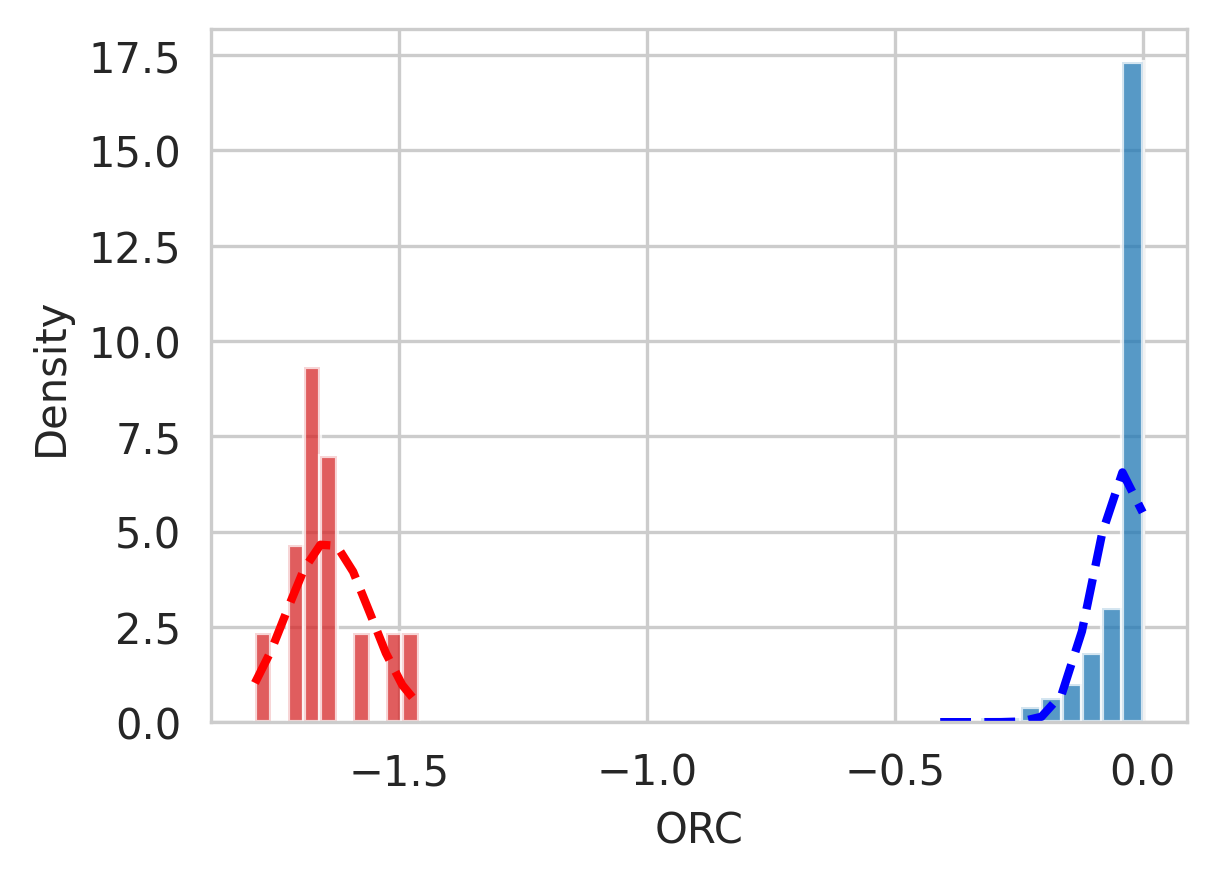} & \includegraphics[width=.3\textwidth]{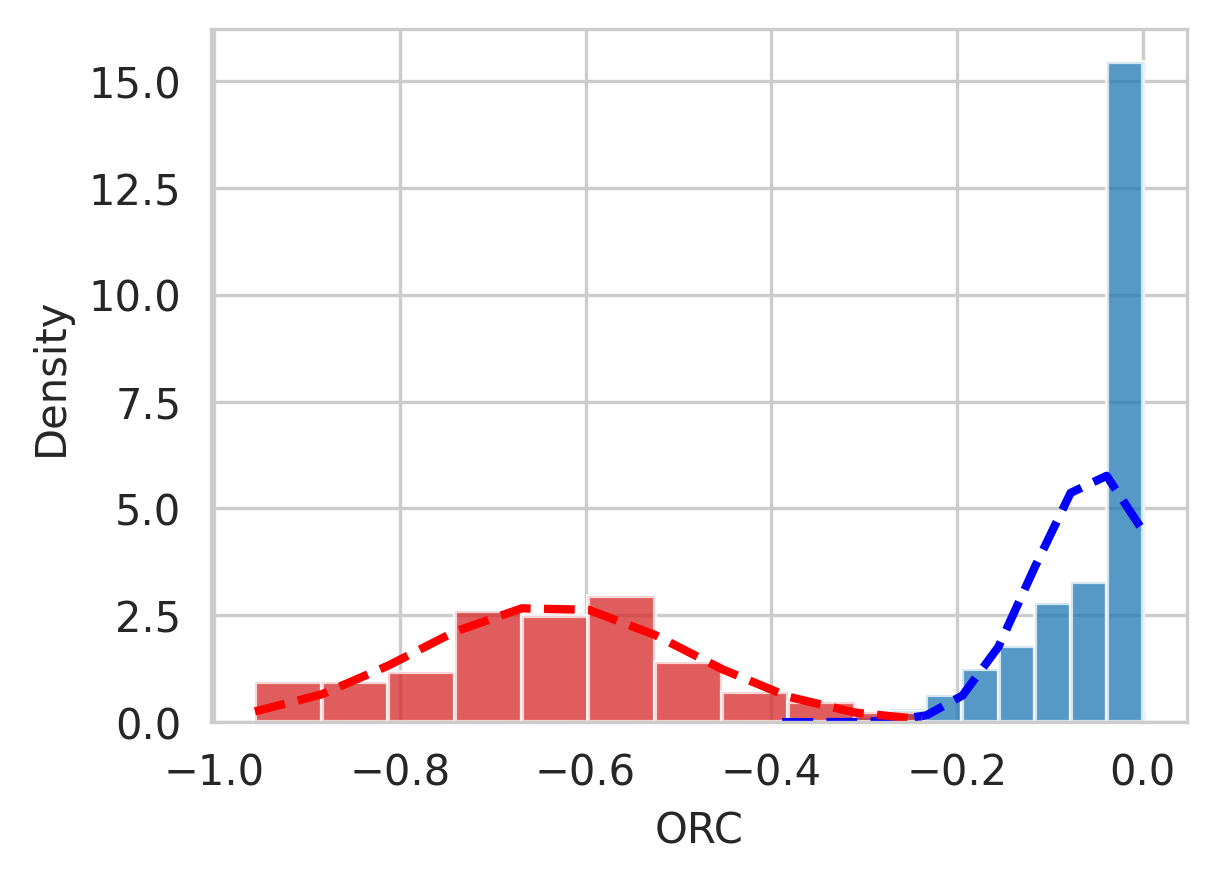} &
        \includegraphics[width=.3\textwidth]{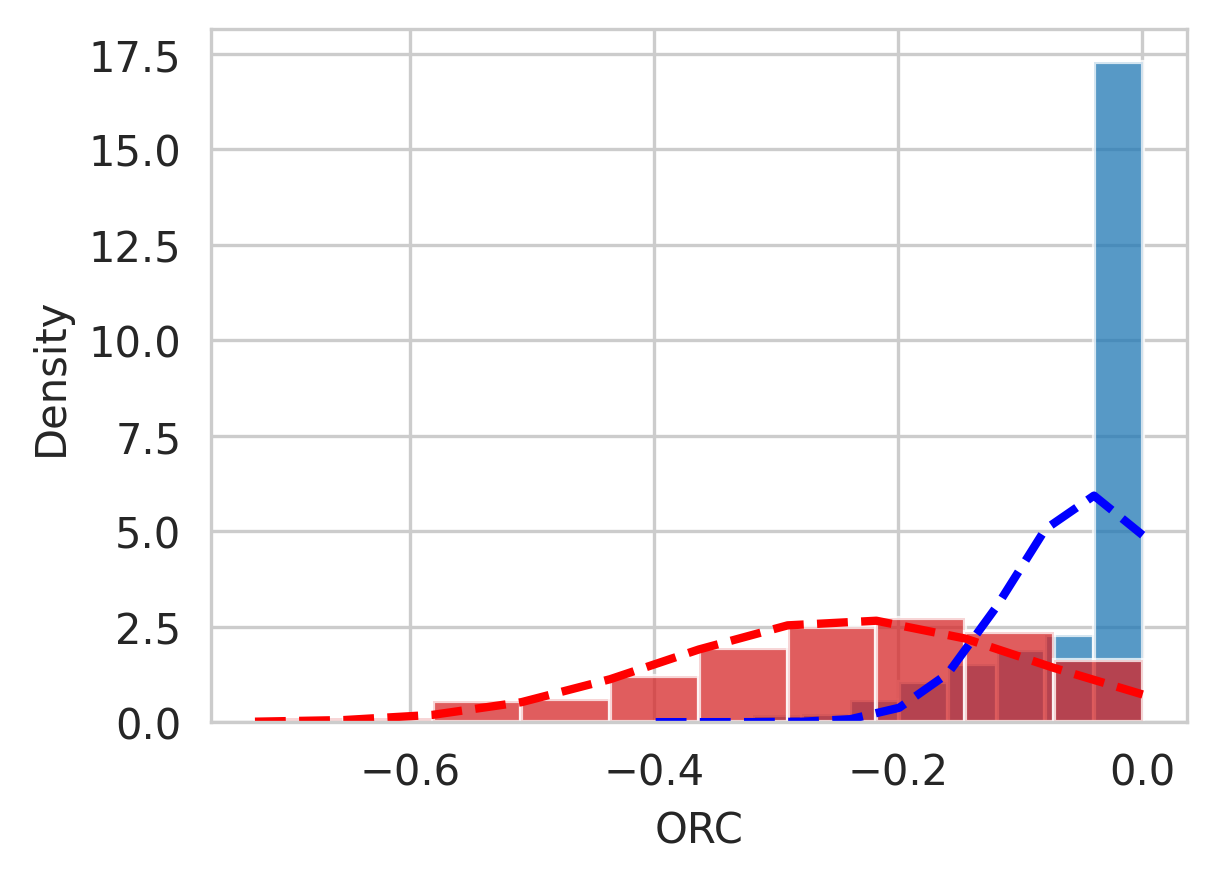}
    \end{tabular}
    \caption{Distribution of directed ORC (top) and complex-weighted ORC for the magnetic Laplacian with $\theta = 2\pi/3$ (bottom) of two-block directed SBMs with size $100$, $p_{\text{in}} = 0.5$ and $p_{\text{out}} = 0.01, 0.1, 0.2$ (left, middle, right, respectively), where within-community edges (blue) are directed (all from smaller label to larger one) and between-community edges (red) are all from community 1 to 2.}
    \label{fig:sbm-directed-indoutd}
\end{figure}

\begin{figure}[htbp]
    \centering
    \begin{tabular}{ccc}
        \includegraphics[width=.3\textwidth]{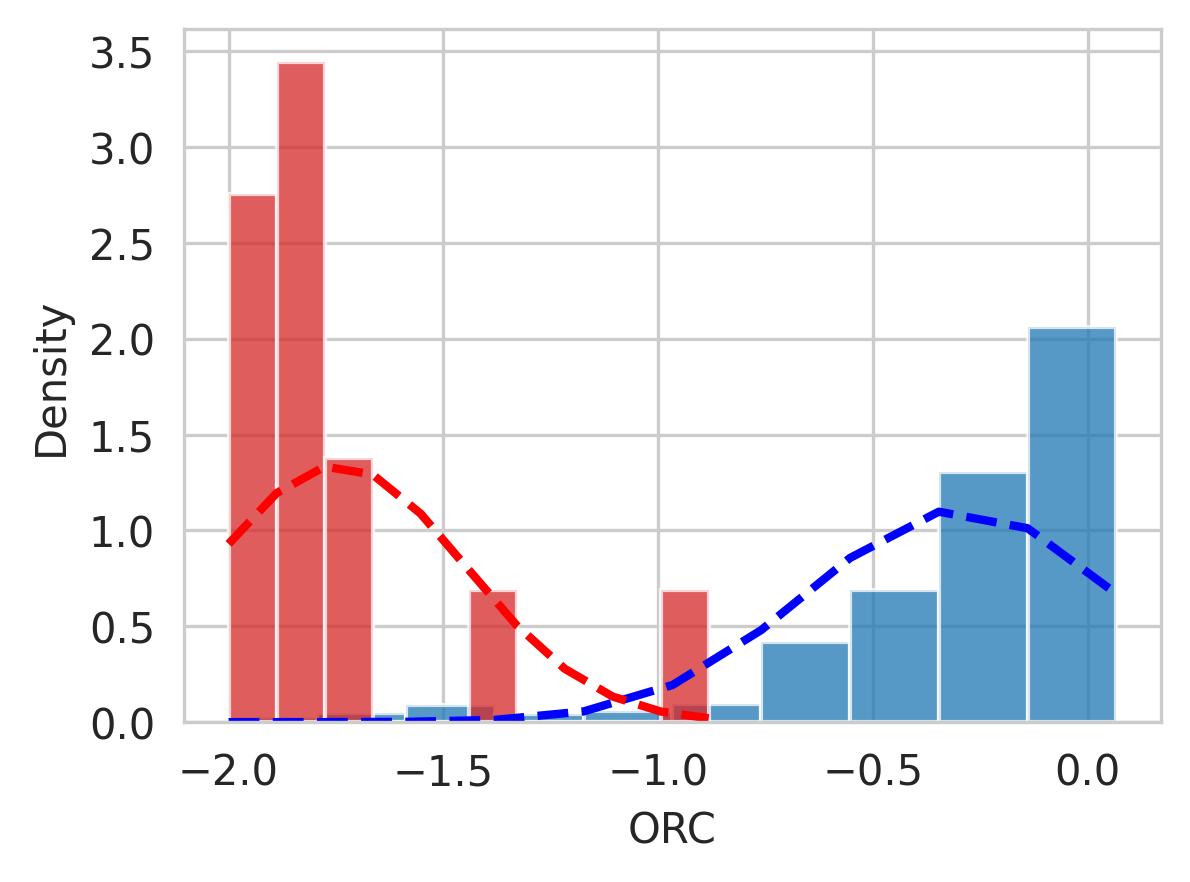} & \includegraphics[width=.3\textwidth]{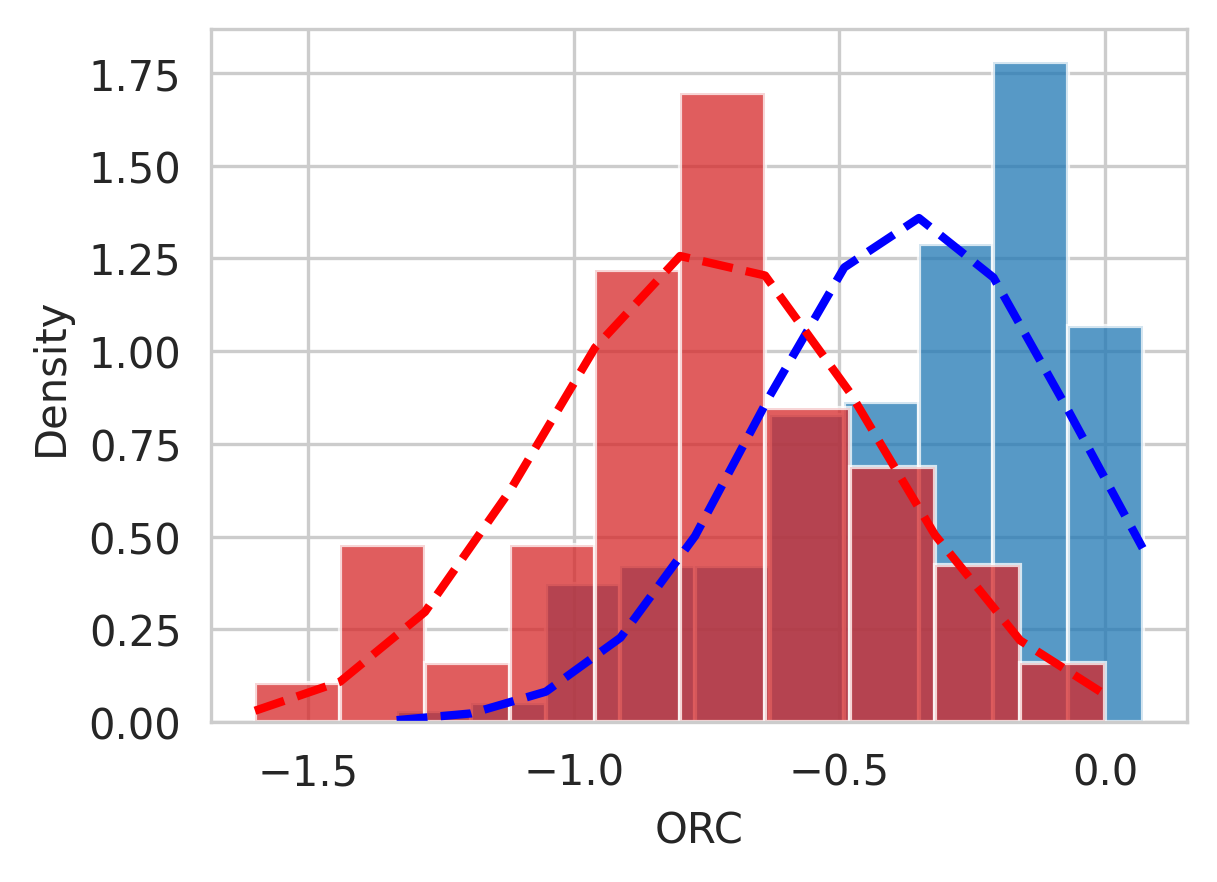} &
        \includegraphics[width=.3\textwidth]{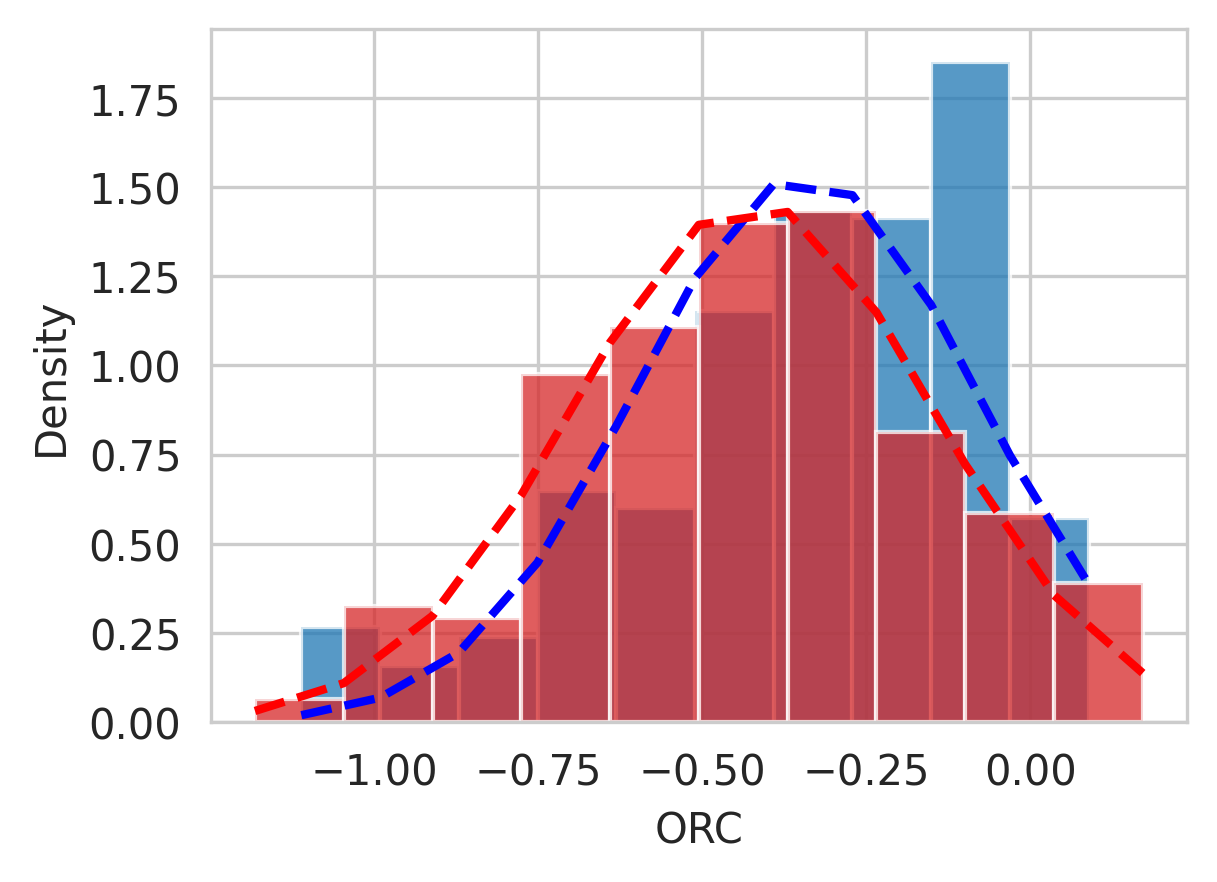} \\
        \includegraphics[width=.3\textwidth]{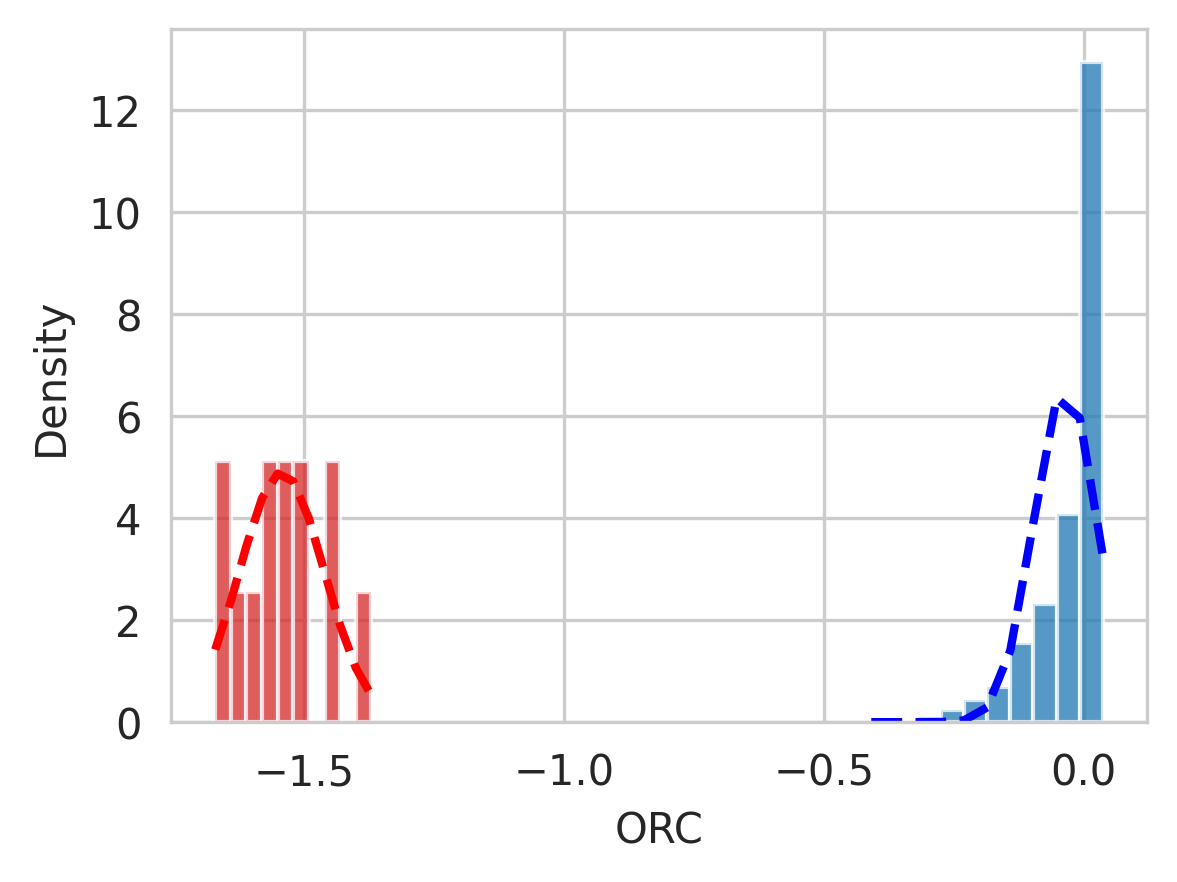} & \includegraphics[width=.3\textwidth]{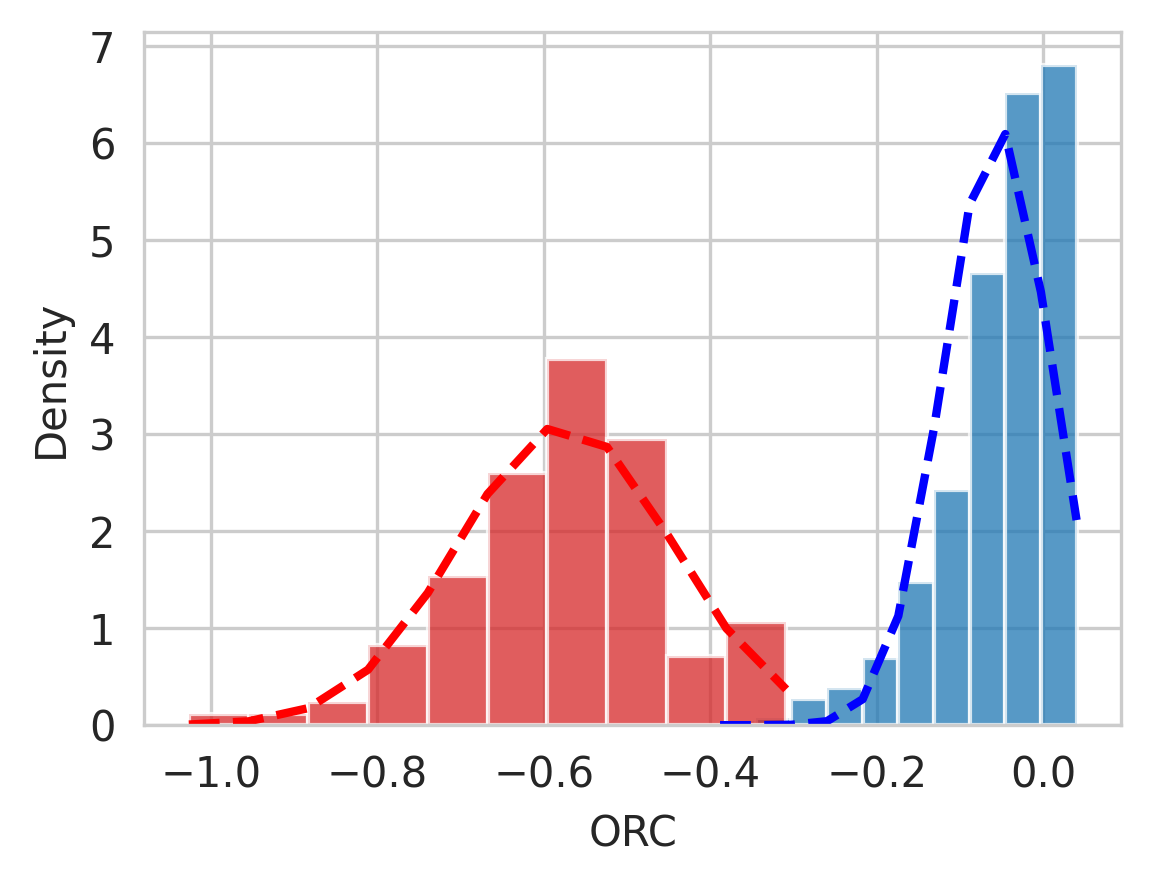} &
        \includegraphics[width=.3\textwidth]{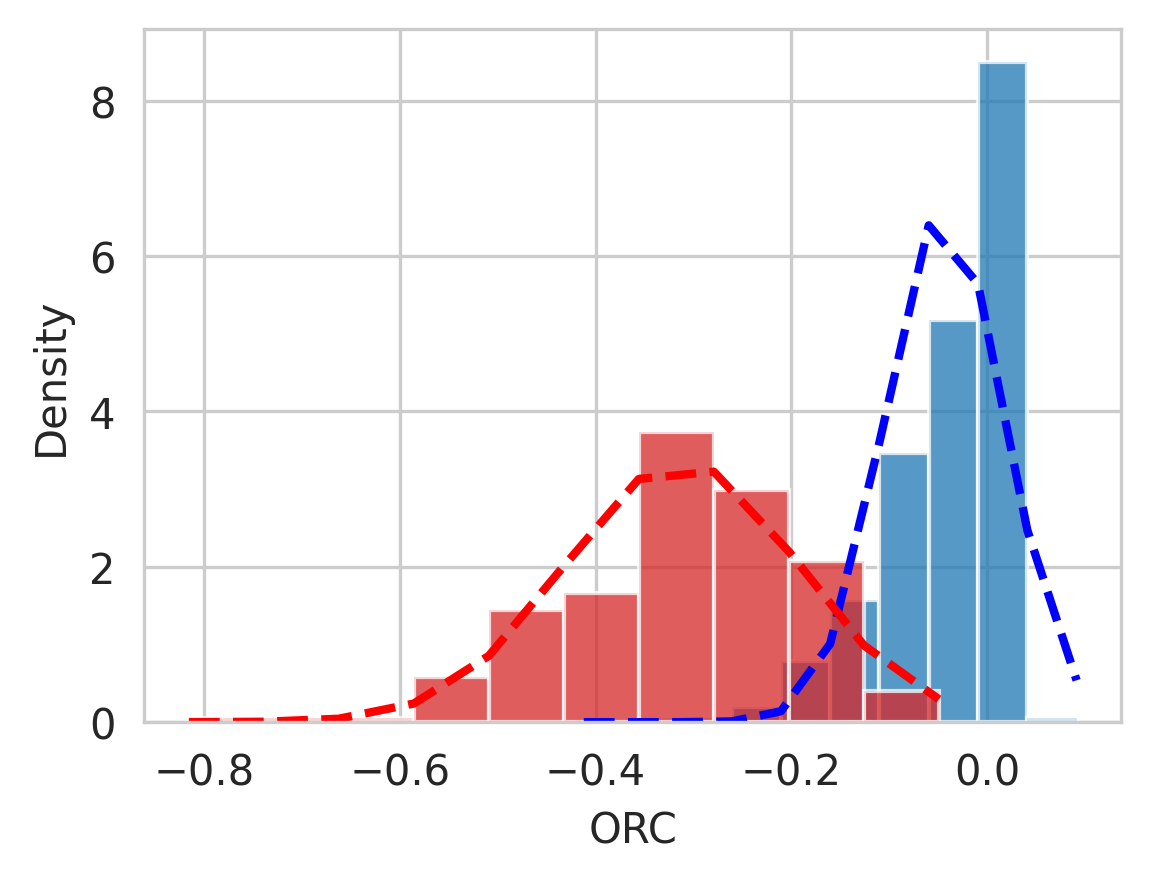}
    \end{tabular}
    \caption{Distribution of directed ORC (top) and complex-weighted ORC for the magnetic Laplacian with $\theta = 2\pi/3$ (bottom) of two-block directed SBMs with size $100$, $p_{\text{in}} = 0.5$ and $p_{\text{out}} = 0.01, 0.1, 0.2$ (left, middle, right, respectively), where within-community edges (blue) are directed (all from smaller label to larger one) and the between-community edges (red) are oriented randomly (equal probability from community 1 to 2 versus from community 2 to 1).}
    \label{fig:sbm-directed-dist-indouth}
\end{figure}

\end{document}